\newtheorem{theorem}{Theorem}[section]
\newtheorem{lemma}[theorem]{Lemma}
\newtheorem{corollary}[theorem]{Corollary}
    \def\independenT#1#2{\mathrel{\setbox0\hbox{$#1#2$}%
    \copy0\kern-\wd0\mkern4mu\box0}}
\def\t{\tau}
\def\l{\ell}
\def\cal{\mathcal}
\renewcommand{\epsilon}{\varepsilon}
  \title{Coalition Games on Interaction Graphs: A Horticultural Perspective}
  \author[1,4]{Nicolas Bousquet}
  \author[3]{Zhentao Li}
  \author[1,2]{Adrian Vetta}
  \affil[1]{Department of Mathematics and Statistics, McGill University}
  \affil[2]{ School of Computer Science, McGill University }
  \affil[3]{ \'Ecole Normale Sup\'erieure de Paris. }
  \affil[4]{ Group for Research in Decision Analysis (GERAD), HEC Montr\'eal}
\begin{document}

\maketitle

\begin{abstract}
We examine cooperative games where the viability of a coalition
is determined by whether or not its members have the ability to
communicate amongst themselves independently of non-members.
This necessary condition for viability was proposed by Myerson in \cite{Mye77} and
is modeled via an interaction graph $G=(V,E)$; a coalition $S\subseteq V$
is then viable if and only if the induced graph $G[S]$ is connected.
The non-emptiness of the core of a coalition game can be tested by a well-known covering LP.
Moreover, the integrality gap of its dual packing LP defines exactly the
multiplicative least-core and the relative cost of stability of the coalition game.
This gap is upper bounded by the packing-covering ratio which,
for graphical coalition games, is known to be at most 
the treewidth of the interaction graph plus one \cite{MER13}. 

We examine the packing-covering ratio and integrality gaps of graphical coalition
games in more detail. We introduce the thicket parameter of a graph, and 
prove it precisely measures the packing-covering ratio. It also
approximately measures the primal and dual integrality gaps.
The thicket number provides an upper bound
of both integrality gaps. Moreover we show that for any interaction graph,
the primal integrality gap is, in the worst case, linear in terms 
of the thicket number while the dual integrality gap is polynomial in terms of it.
At the heart of our results, is a graph theoretic minmax theorem showing the
thicket number is equal to the minimum width of a vine decomposition of the
coalition graph (a vine decomposition is a generalization of a tree decomposition). 
We also explain how the thicket number relates to the 
VC-dimension of the set system produced by the game.
\end{abstract}


%
%
%


\section{Introduction}

At the heart of cooperative game theory is the problem of how a group of agents
should share the wealth that they collectively create. Its foremost concept 
is the {\em core} whose roots date back to Edgeworthian bargaining and cooperative improvement 
(\cite{Edge1881}; see also \cite{HK88}).
It was first formalized by Gillies in~\cite{Gillies53,Gillies59} via a {\em coalition game} $\cal{G}=(I,v)$ with a
 set $I$ of agents and a valuation function $v:2^I \rightarrow \mathbb{N}$.
The {\em core} of the coalition game is the set of feasible solutions~to:

\[
\arraycolsep=3.2pt\def\arraystretch{1.4}
\begin{array}{rcl@{\extracolsep{20pt}}l}
 \sum\limits_{i: i \in I} x_i & = &  v(I) &\\
 \sum\limits_{i: i \in S} x_i &\ge& v(S) & \forall S \subset I \\
x_i                        &\ge& 0
\end{array}
\]

Informally, we are allocating $x_i$ to agent $i$ and $v(S)$ represents the amount of wealth that the coalition $S$ can generate by itself. 
Consequently, the coalition $S$ will {\em block} any distribution scheme that does not allocate
its members at least $v(S)$ in total. Thus the wealth $v(I)$ of the  \emph{grand coalition} 
must be distributed in such a fashion that no coalition wishes to block the allocation.
The core is the set of vectors of payoffs that have this property.

This definition immediately prompts two questions: (i) What processes enable the formation of coalitions?
(ii) Even if coalitions can form and negotiate, do core solutions exist?
Concerning the former question, it is unrealistic to assume that every subset of agents has the ability to act as
a collective. Indeed, Myerson~\cite{Mye80} argued that feasible coalitions require structural 
properties that enable them to function. Clearly, one necessary property is that ``communication" is
possible between members of the coalition and \cite{Mye77}  formalized this ability 
using an {\em interaction (communication) graph} $G=(I,E)$. Here a pair of agents induces
an edge in $G$ if they are able to interact and a coalition $S$ is feasible if $S$ induces
a connected subgraph of $G$. Observe that two members of a feasible coalition do not need to be able
to interact directly but they must be able to communicate indirectly via chains consisting of other 
members of the coalition. Thus, a coalition $S$ is \emph{viable}
 if and only if the induced subgraph $G[S]$ is connected -- in particular $v(S)=0$
 when $G[S]$ is disconnected. Such {\em graphical coalition games} are the focus of this 
 paper.\footnote{See \cite{Gilles10} for a general introduction to cooperative games on networks.}

For the latter question, the core is often empty. 
Indeed, it is straightforward to verify that the core of the game $\cal{G}$ is non-empty {\em if and only if} the following 
primal linear program 
has an optimal (fractional) 
solution whose value $\kappa^f(\mathcal{G})$ equals $v(I)$. \vspace{-7pt}

\[
\arraycolsep=3.2pt\def\arraystretch{1.4}
\begin{array}{rrrcl@{\extracolsep{10pt}}l}
\textrm{{\tt \small Covering-LP:}}
& \min        & \sum\limits_{i \in I} x_i   &     &\\
&  \text{s.t.} & \sum\limits_{i: i \in S} x_i & \ge & v(S) & \forall S \subseteq I \\
& &                        x_i & \ge & 0
\end{array}
\]
Interestingly, there is an elegant graphical characterization for when a graphical coalition 
game has a non-empty core for all possible valuation functions $v$. Namely,
a graphical coalition game is {\em strongly balanced} if and only if the interaction graph $G$ is a 
forest  \cite{LOW92}.

\subsection{The Least-Core and the Relative Cost of Stability}
Given the possible emptiness of the core, it is natural to consider solutions where the
core constraints in this {\tt Covering-LP} are relaxed. Specifically, for each feasible 
coalition $S$, given $\alpha\ge 1$, there is 
a constraint  
$\alpha \cdot \sum_{i: i \in S} x_i \ge  v(S)$.
These constraints imply that a coalition $S$ will not block an allocation unless it
can unilaterally improve its total wealth by more than an $\alpha$ factor.
The set of feasible solutions then form the $\alpha${\em-core}. The minimum $\alpha$ for which the 
$\alpha$-core is non-empty arises when $\alpha^*=\frac{\kappa^f(\cal{G})}{v(I)}$. 
The $\alpha^*$-core is called the {\em (multiplicative) least-core}.\footnote{The least-core is often defined 
with respect to an
additive ($\epsilon$), not a multiplicative ($\alpha$), guarantee. Since additive guarantees are not scale 
invariant, it is preferable here to focus upon multiplicative guarantees.}
Moreover, we can also determine the least-core by considering the dual of the primal linear program.

\[
\arraycolsep=3.2pt\def\arraystretch{1.4}
\begin{array}{rrrcl@{\extracolsep{10pt}}l}
\vspace{-7pt} \textrm{{\tt \small Packing-LP:}}
& \max        & \sum\limits_{S: S\subseteq I} v(S)\cdot y_S && \\
& \text{s.t.} & \sum\limits_{S\subseteq I: i\in S} y_S       &\le& 1 &\forall i \in I \\
&             & y_S                                      &\ge& 0 &\forall S\subseteq I
\end{array}
\]

Let $\rho^f(\mathcal{G})$ be the optimal fractional solution to this {\tt Packing-LP}, and
let $\rho(\mathcal{G})$ be the optimal integral solution. Then $\alpha^*$ is exactly equal to the {\em dual integrality
gap} $\frac{\rho^f(\mathcal{G})}{\rho(\mathcal{G})}$. To verify this, observe that
$\rho^f(\mathcal{G})=\kappa^f(\mathcal{G}) =\alpha^*\cdot v(I)$, by strong duality.
But the optimal integral solution to the dual has value $\rho(\mathcal{G})=v(I)$; simply 
set $y_I=1$ and $y_S=0$ for every $S \neq I$.
Here we are making the standard assumption in the literature that, for any coalition game, the valuation
function $v$ is superadditive.
In particular, property reflects the simple observation that any collective has the option to voluntarily 
partition itself into subgroups to generate wealth.
Thus, $\alpha^*=\frac{\rho^f(\mathcal{G})}{\rho(\mathcal{G})}$.


This paper will study least-cores in coalition games over interaction graphs.
Interestingly, multiplicative least-cores are equivalent to the concept of the {\em relative cost of stability}.
Bachrach et al.~\cite{BEM09} asked how much it would cost (an external authority) 
to stabilize a coalition game; $i.e.$ what is the minimum total payment required such that no coalition 
can benefit by blocking the allocation. The relative cost of stability~\cite{MRM11} is then
defined to be the ratio between this minimum payment and the total wealth the grand coalition
can generate, namely $\frac{\kappa^f(\mathcal{G})}{v(I)}=\alpha^*$. Thus, the 
relative cost of stability is also given by the dual integrality gap.

\subsection{Our Results}
So to develop an understanding of coalition games, we must study the 
primal and dual linear programs. In particular, we will focus on
the graphical coalition games of \cite{Mye77}.
Specifically, we are interested in how the primal and dual integrality
gaps vary with the topology of the interaction graph.

As inferred by our nomenclature, the primal and dual form a pair of packing and covering linear 
programs. Thus a natural starting point is to consider the {\em packing-covering ratio}
of a game $\cal{G}$; this is the ratio $\frac{\kappa(\cal{G})}{\rho(\cal{G})}$ 
between the values of the optimal integral solutions to the primal and the dual.
Observe that, by strong duality, the packing-covering ratio is the product of the primal integrality gap
$\frac{\kappa(\cal{G})}{\kappa^f(\cal{G})}$ 
and the dual integrality gap $\frac{\rho^f(\cal{G})}{\rho(\cal{G})}$. Consequently, the packing-covering
ratio trivially upper bounds both integrality gaps. Packing-covering ratios have been studied extensively in
graph theory. Special attention has focused on problems with the Erd\H{o}s-P\'osa property, 
where the ratio is a function of the packing number and is otherwise independent of 
the graph. Interestingly, whilst graphical coalition games do not have the Erd\H{o}s-P\'osa property,
the packing-covering ratio can be bounded by an important parameter of the interaction graph, namely, treewidth.
Indeed, Meir et al.~\cite{MER13} proved that, for any valuation function (game $\cal{G}$) 
over an interaction graph $G$, the packing-covering ratio is at most the treewidth $\omega(G)$ plus one. 

We extend the work of Meir et al. in several ways. 
First, we show that structurally treewidth is not the most appropriate invariant in understanding the 
packing-covering ratio. The topological parameter that corresponds exactly to the packing-covering ratio
is a concept we term the {\em thicket number} of the graph. 
Specifically, in Section \ref{sec:ratio}, we show that for every coalition game $\cal{G}$ over a graph $G$ 
the packing-covering ratio is at most the thicket number, $\tau(G)$, of the graph. 
Conversely, for {\em every} graph $G$ there exists a coalition game $\cal{G}$ 
for which the packing-covering ratio is at least the thicket number.

\begin{theorem}\label{thm:1}
For any interaction graph $G$, the packing-covering ratio satisfies:
$$\tau(G)  \ \ \le_{\exists}\ \  \frac{\kappa(\cal{G})}{\rho(\cal{G})} \ \ \le_{\forall}\ \  \tau(G)  $$
\end{theorem}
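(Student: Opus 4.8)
The plan is to establish the two inequalities separately, since they are of rather different character. For the upper bound $\frac{\kappa(\cal G)}{\rho(\cal G)} \le_\forall \tau(G)$, the natural route is to mimic the argument of Meir et al.~\cite{MER13} that gave the treewidth-plus-one bound, but driven by the combinatorial object underlying the thicket number rather than by a tree decomposition. Concretely: take an optimal integral dual (packing) solution of value $\rho(\cal G)$, i.e.\ a collection of vertex-disjoint connected coalitions $S_1,\dots,S_k$ (with $k=\rho(\cal G)$, using superadditivity to assume disjointness without loss of wealth). We must produce a feasible integral primal (covering) solution of value at most $\tau(G)\cdot \rho(\cal G)$. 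The idea is to use whatever decomposition certifies $\tau(G)\le t$ to partition or cover $V(G)$ into $t$ classes, each of which meets every viable coalition at most once in a controlled way, and then set $x_i$ on a cleverly chosen ``hitting'' set so that every coalition $S$ with $v(S)>0$ receives at least $v(S)$ while the total charged is $t$ times the packing value. Since the paper promises that the thicket number coincides exactly with the packing-covering ratio and also equals the minimum width of a vine decomposition, I would expect the cleanest proof to go through the vine decomposition: a width-$t$ vine decomposition of the coalition graph should let us decompose an optimal cover into $t$ pieces each of which is ``forest-like'' and hence cheap, invoking the $LOW92$ strong-balancedness fact ($\kappa^f=\rho^f$ on forests) on each piece.

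For the lower bound $\tau(G)\le_\exists \frac{\kappa(\cal G)}{\rho(\cal G)}$ we need, for each fixed $G$, to \emph{exhibit} a valuation function $v$ for which the integral covering optimum is at least $\tau(G)$ times the integral packing optimum. The thicket number, being the parameter that ``precisely measures'' the packing-covering ratio, should come equipped with a worst-case witness configuration inside $G$ — presumably a family of connected subsets realizing the obstruction to small vine width. The construction I have in mind: normalize so that the target packing value is $1$, put $v(S)=1$ for every viable coalition $S$ lying in this witness family (and $v(S)=0$, or the superadditive closure thereof, elsewhere). Then $\rho(\cal G)=1$ because the witness family is ``intersecting'' enough that no two of its members are disjoint (so you cannot pack two of them, and the grand coalition alone already gives value $1$), while $\kappa(\cal G)\ge\tau(G)$ because any integral payment vector must hit every member of the family, and the combinatorial structure of the family forces $\sum_i x_i \ge \tau(G)$ — this is exactly the LP-duality-free statement that the fractional cover/pack gap materializes integrally for this particular $v$. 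Getting $v$ to be genuinely superadditive (the standing assumption) rather than merely monotone may require taking the superadditive closure and checking it does not inflate $\rho$; that is a technical wrinkle but not a conceptual one.

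The main obstacle, I expect, is the lower-bound direction — specifically, pinning down the witness family inside an arbitrary $G$ and proving the integral covering bound $\kappa(\cal G)\ge\tau(G)$ without circularity. The definition of $\tau(G)$ (not shown in this excerpt) is presumably min–max or packing-flavoured, so one must be careful: the inequality $\kappa(\cal G)\ge\tau(G)$ should fall out of whichever side of that definition says ``there exist connected sets whose fractional/integral cover is large,'' and one needs the \emph{integral} cover to be large, which is where a purely LP argument would be insufficient and an explicit combinatorial packing-type dual certificate inside $G$ is required. I would handle this by first proving the companion min–max theorem the abstract advertises (thicket number $=$ minimum vine-decomposition width), and then reading the lower bound off the ``large obstruction'' side of that min–max, and the upper bound off the ``small decomposition'' side — so in effect Theorem~\ref{thm:1} becomes a corollary of the structural duality theorem plus the $LOW92$ forest fact, and the real work is deferred to that structural theorem.
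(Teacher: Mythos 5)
Your lower-bound construction is essentially the paper's: a thicket is by definition a family of pairwise-intersecting connected vertex sets, and the paper takes a thicket of maximum hitting size $\tau(G)$, gives value $1$ to exactly its members, and observes that $\rho(\mathcal{G})=1$ (no two members are disjoint, so no packing contains two of them) while the support of any integral cover must be a hitting set for the family, so $\kappa(\mathcal{G})=\tau(G)$. Your superadditivity worry is harmless (passing to the superadditive closure changes neither optimum), and your guess of the witness structure is right. But note that this direction needs no structural theorem at all --- the witness is a maximum thicket, straight from the definition of $\tau(G)$ --- whereas you defer it to the advertised min--max theorem and call it the main obstacle. You have the division of labour backwards: the thicket--vinewidth duality is needed for the \emph{upper} bound, to supply a vine decomposition all of whose labels have size $\tau(G)$.

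The genuine gap is in the upper bound, where neither mechanism you sketch works as stated. A partition of $V(G)$ into $\tau(G)$ classes ``each of which meets every viable coalition at most once'' cannot exist: when $G$ is connected the grand coalition is viable and meets every class in all of its vertices. The alternative --- split the instance into $\tau(G)$ ``forest-like'' pieces and invoke Le Breton--Owen--Weber --- is unsupported: a width-$\tau(G)$ vine decomposition does not yield such a vertex partition, the valuation $v$ does not decompose across pieces, and LOW92 is about core non-emptiness on forests (i.e.\ $\kappa^f(\mathcal{G})=v(I)$), not the identity $\kappa^f=\rho^f$, which is plain LP duality and holds for every graph. What the paper actually does is a bottom-up charging argument on a rooted vine decomposition with all labels of size $\tau(G)$: every viable coalition $Q$ induces a connected subtree of the vine tree and hence has a well-defined root node $t_Q$; processing nodes from the leaves upward, one pays every vertex of $\ell(t)$ the maximum residual $r(Q^*_t,t)$ over coalitions rooted at $t$, which gives a feasible integral cover of cost $\tau(G)\cdot\sum_t r(Q^*_t,t)$; then a second greedy pass over the tree extracts a pairwise-disjoint subfamily of the coalitions $Q^*_t$ with positive residual whose total value, by a potential-function argument, is at least $\sum_t r(Q^*_t,t)$, so $\sum_t r(Q^*_t,t)\le\rho(\mathcal{G})$ and hence $\kappa(\mathcal{G})\le\tau(G)\cdot\rho(\mathcal{G})$. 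This residual-plus-packing step is the heart of the proof and is exactly the content hidden in your phrase ``in a controlled way''; as written, your outline does not establish the $\le_\forall$ direction.
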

Observe that, in order to concisely formulate our results, we use the notation $\leq_\exists$ and $\leq_\forall$.
Here $\leq_\exists$ means that
{\em there exists} a game $\mathcal{G}$ over the interaction graph $G$ such that the inequality is satisfied, and
$\leq_\forall$ means that {\em for every} game $\mathcal{G}$ 
over $G$ the inequality is satisfied.\footnote{We remark that inequalities of the form $\geq_\exists$ and $\geq_\forall$ are 
not interesting from a game-theoretic perspective.
Indeed, for any graph $G$, we have that $\frac{\kappa(\cal{G})}{\rho(\cal{G})} \ge_{\forall} 1$, by weak duality.
Furthermore, the packing-covering ratio of any coalition game with one viable coalition trivially equals $1$. 
So $1 \geq_{\exists} \frac{\kappa(\cal{G})}{\rho(\cal{G})}$.}

Theorem \ref{thm:1} relies on a graphical minmax result that we prove in Section \ref{sec:thicket}.
Specifically, we show that thickets have a dual notion called {\em vine decompositions}.
These decompositions can be viewed as a ``thin" relative of tree decompositions.
In particular, the vinewidth of a graph is at most the treewidth plus one, and is typically smaller.


In principle, the primal and dual integrality gaps could be much less than the thicket
number. However, we prove the thicket number is (approximately) the correct
measure for these integrality gaps as well.
Specifically, in Section \ref{sec:primal-gap} we prove
\begin{theorem}\label{thm:2}
For any interaction graph $G$, the primal integrality gap satisfies:
$$\frac14 \tau(G)  \ \ \le_{\exists}\ \  \frac{\kappa(\cal{G})}{\kappa^f(\cal{G})} \ \ \le_{\forall}\ \  \tau(G)  $$
\end{theorem}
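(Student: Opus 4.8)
The plan is to prove the two inequalities separately, reusing the machinery already developed for Theorem~\ref{thm:1}.

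\medskip
\noindent\textbf{The upper bound $\frac{\kappa(\cal{G})}{\kappa^f(\cal{G})} \le_\forall \tau(G)$.}
This should follow almost immediately from Theorem~\ref{thm:1}. By strong duality the packing-covering ratio factors as $\frac{\kappa(\cal{G})}{\rho(\cal{G})} = \frac{\kappa(\cal{G})}{\kappa^f(\cal{G})} \cdot \frac{\rho^f(\cal{G})}{\rho(\cal{G})}$, and the dual integrality gap $\frac{\rho^f(\cal{G})}{\rho(\cal{G})} = \alpha^* \ge 1$ by weak duality. Hence the primal integrality gap is at most the packing-covering ratio, which Theorem~\ref{thm:1} bounds by $\tau(G)$ for every game $\cal{G}$ over $G$. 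So the forall direction is essentially free; I would state it in one or two lines.

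\medskip
\noindent\textbf{The lower bound $\frac14\tau(G) \le_\exists \frac{\kappa(\cal{G})}{\kappa^f(\cal{G})}$.}
Here the issue is that the extremal game witnessing $\frac{\kappa(\cal{G})}{\rho(\cal{G})} \ge \tau(G)$ in Theorem~\ref{thm:1} might have all of its ratio coming from the dual side; we need a game where a constant fraction of the ratio lives in the \emph{primal} gap. The approach is to take (a modification of) the extremal instance from Section~\ref{sec:thicket}/\ref{sec:ratio} — presumably the valuation supported on the ``thicket'' of $\tau(G)$ pairwise-intersecting connected subgraphs / the vine-decomposition dual object — and analyze $\kappa^f$, $\kappa$, $\rho^f$, $\rho$ directly. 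The key structural point to exploit is that the \emph{dual} integrality gap of a graphical coalition game cannot be too large on the relevant instance: the fractional packing $\rho^f = \kappa^f$ packs the $\tau$ intersecting viable coalitions at value $1/\text{(something like }2)$ each, while any integral packing can take only a bounded number (the coalitions pairwise intersect, so an integral packing picks essentially one, giving $\rho(\cal{G}) = v(I)$ after superadditivity, hence $\alpha^* = \rho^f/\rho$ is at most roughly $2$ on this instance, not $\tau$). Since the packing-covering ratio is $\tau$ and $\alpha^* \le 2$ (or some absolute constant), the primal gap $\frac{\kappa}{\kappa^f} = \frac{\kappa/\rho}{\rho^f/\rho} \ge \tau/2 \ge \frac14\tau$. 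The constant $\frac14$ suggests that the bound on $\alpha^*$ for the witness instance is something like $4$ rather than $2$, or that one loses a further factor of $2$ in comparing $\kappa$ to the thicket number when $\tau$ is odd or when the decomposition width is only approximately realized.

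\medskip
\noindent\textbf{Main obstacle.} The crux is constructing or identifying the right valuation function $v$: it must simultaneously force $\kappa(\cal{G})$ (the best integral covering, i.e.\ the cost of stability rounded to an integral payment structure over coalitions) to be $\Omega(\tau(G))$ times $\kappa^f(\cal{G})$, \emph{and} keep the dual integrality gap $\alpha^*$ bounded by an absolute constant. I would first set this up on a clean model case — the complete graph $K_n$, or the subdivided clique / ``wall'' achieving the thicket number — where $\tau$ is easy to compute, verify the constant $\frac14$ there, and then argue the general case by pulling back the construction along the thicket witness guaranteed by the Section~\ref{sec:thicket} minmax theorem (a family of $\tau(G)$ connected subgraphs realizing the vinewidth lower bound). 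The bookkeeping that $\kappa$ really is large — that no integral assignment of payments $x_i \in \mathbb{Z}_{\ge 0}$ can simultaneously satisfy all the (relaxed or unrelaxed) coalition constraints cheaply — is where a careful but routine counting argument over the thicket structure will be needed.
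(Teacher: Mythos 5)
Your upper-bound argument is fine and matches the paper: the primal gap is at most the packing-covering ratio because the dual gap is at least $1$, so Theorem~\ref{thm:1} gives $\frac{\kappa(\mathcal{G})}{\kappa^f(\mathcal{G})}\le_\forall \tau(G)$.

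The lower bound, however, has a genuine gap. Your plan is to reuse the Theorem~\ref{thm:1} witness (the simple game whose value-$1$ coalitions are the $\tau(G)$ pairwise-intersecting thicket elements) and argue that on this instance the dual gap $\rho^f/\rho$ is an absolute constant, so the factor $\tau(G)$ must sit in the primal gap. That key claim is false: pairwise intersection bounds the \emph{integral} packing by $1$, but not the \emph{fractional} packing. The $k\times k$ grid is an explicit counterexample -- take the thicket $H_i=(\text{row }i)\cup(\text{column }i)$; each vertex lies in at most two sets, so assigning $y_{H_i}=\tfrac12$ gives $\rho^f\ge k/2\approx\tau/2$ while $\rho=1$ (this is exactly Lemma~\ref{lem:dualgrid}). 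On that instance the fractional cover is also about $k/2$, so the primal gap is only a constant: the whole ratio lives on the dual side, the opposite of what you need. (Your own sketch is internally inconsistent on this point: you describe a fractional packing of value about $\tau/2$ and in the same sentence conclude $\alpha^*\le 2$.)

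The paper's actual construction (Theorem~\ref{thm:twfrac}) is different and is the missing idea. Fix a thicket $\mathcal{H}$ with minimum hitting set $X$, $|X|=\tau(G)$, and declare $v(S)=1$ exactly when $S$ is a union of thicket elements with $|S\cap X|\ge\lceil\tau(G)/2\rceil$. Then $\kappa^f\le 2$ by placing $2/\tau(G)$ on each vertex of $X$, while $\kappa>\tau(G)/2$: if some set $Y$ with $|Y|\le\lfloor\tau(G)/2\rfloor$ hit every value-$1$ coalition, the subfamily $\hat{\mathcal{H}}\subseteq\mathcal{H}$ of thicket elements disjoint from $Y$ would still need a hitting set of size $\ge\lceil\tau(G)/2\rceil$ (else $Y$ together with that hitting set beats $X$), so $\hat{S}=\bigcup_{H\in\hat{\mathcal{H}}}H$ meets $X$ in $\ge\lceil\tau(G)/2\rceil$ vertices, is itself a value-$1$ coalition, and is disjoint from $Y$ -- a contradiction. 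This yields $\kappa/\kappa^f\ge\tau(G)/4$ for every graph, which is precisely the step your outline leaves open; you correctly identified that "choosing the right valuation" is the crux, but the specific mechanism you propose for it does not survive the grid example.
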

Interestingly, unlike for the packing-covering ratio, the upper and lower bounds cannot be closed
completely for the primal integrality gap. Indeed, for any graph $G$ there is a constant 
$a_G$ 
such that $a_G\cdot \tau(G) \le_{\exists} \frac{\kappa(\cal{G})}{\kappa^f(\cal{G})} \le_{\forall} a_G\cdot \tau(G)$. 
However $a_G$ really does vary with the graph. In particular, we show that 
$a_G\rightarrow 1$ for the family of graphs that correspond to the powers of paths. 
On the other hand, we prove that $a_G\le \frac12$ for cliques.
 It follows that the constant $1$ in the upper bound in Theorem \ref{thm:3} cannot be 
decreased, whilst the constant in the lower bound cannot be increased above $\frac12$.


Next consider the dual integrality gap. In Section \ref{sec:dual-gap} we prove
\begin{theorem}\label{thm:3}
There exist $c$ and $\delta$ such that for any interaction graph $G$, the dual integrality gap satisfies:
$$ c \cdot \tau(G)^\delta  \ \ \le_{\exists}\ \  \frac{\rho^f(\cal{G})}{\rho(\cal{G})} \ \ \le_{\forall}\ \  \tau(G)  $$
\end{theorem}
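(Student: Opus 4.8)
The inequality $\frac{\rho^f(\mathcal{G})}{\rho(\mathcal{G})} \le_\forall \tau(G)$ is essentially free. By strong LP duality $\rho^f(\mathcal{G}) = \kappa^f(\mathcal{G})$, and $\kappa^f(\mathcal{G}) \le \kappa(\mathcal{G})$ because the covering LP is the linear relaxation of the integer program defining $\kappa(\mathcal{G})$; hence $\frac{\rho^f(\mathcal{G})}{\rho(\mathcal{G})} = \frac{\kappa^f(\mathcal{G})}{\rho(\mathcal{G})} \le \frac{\kappa(\mathcal{G})}{\rho(\mathcal{G})} \le \tau(G)$ by Theorem~\ref{thm:1}. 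Equivalently, the dual integrality gap is at most the packing--covering ratio because the primal integrality gap is always at least $1$.

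For the lower bound I would route through a large grid minor. First note $\tau(G) \le \omega(G)+1$: by Theorem~\ref{thm:1} there is a game over $G$ with packing--covering ratio at least $\tau(G)$, while \cite{MER13} bounds every packing--covering ratio by the treewidth $\omega(G)$ plus one, so $\omega(G) \ge \tau(G)-1$. By the polynomial grid-minor theorem (Chekuri--Chuzhoy, and subsequent improvements) there is an absolute constant $\delta \in (0,1)$ such that, whenever $\tau(G) \ge 2$, the graph $G$ contains a $k \times k$ grid $H$ as a minor with $k \ge c_0\,\tau(G)^{\delta}$ for an absolute $c_0 > 0$; the case $\tau(G)=1$ is trivial since $\rho^f(\mathcal{G}) \ge \rho(\mathcal{G})$ always. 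Fix such a minor, with pairwise disjoint, connected branch sets $\{B_w : w \in V(H)\}$.

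Next I would exhibit a game on $H$ with a large dual gap and pull it back to $G$. For a cell $(i,j)$ of $H$, let the \emph{cross} $X_{ij}$ be the union of the $i$-th row and the $j$-th column; it is connected, each vertex of $H$ lies in exactly $2k-1$ of the $k^2$ crosses, and any two crosses meet (the crosses $X_{ij}$ and $X_{i'j'}$ share the vertex $(i,j')$). Lift each cross to the connected subgraph $\hat X_{ij} := \bigcup_{w \in X_{ij}} B_w$ of $G$ and put $\mathcal{F} := \{\hat X_{ij}\}$; the lifted crosses are still pairwise intersecting. Define a game $\mathcal{G}$ over $G$ by letting $v(S)$ be the maximum number of pairwise disjoint members of $\mathcal{F}$ contained in $S$ --- a nonnegative, superadditive valuation, in the same spirit as the game constructions underlying Theorems~\ref{thm:1} and~\ref{thm:2}. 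A routine check gives $\rho(\mathcal{G}) = v(V(G)) =$ the packing number of $\mathcal{F}$, which is $1$ since its members pairwise intersect, while $\kappa^f(\mathcal{G})$ equals the minimum fractional transversal of $\mathcal{F}$, which by LP duality equals its maximum fractional packing. Putting weight $\tfrac{1}{2k-1}$ on each of the $k^2$ lifted crosses is feasible (each vertex of $G$ lies in at most $2k-1$ of them), so the fractional packing number is at least $\tfrac{k^2}{2k-1} > \tfrac{k}{2}$. Hence $\frac{\rho^f(\mathcal{G})}{\rho(\mathcal{G})} = \frac{\kappa^f(\mathcal{G})}{\rho(\mathcal{G})} > \tfrac{k}{2} \ge \tfrac{c_0}{2}\,\tau(G)^{\delta}$, and taking $c := \min\{1, c_0/2\}$ finishes the lower bound.

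The main difficulty is the lower bound, and the step to treat with care is the transfer from $H$ to $G$: one must verify that the lifted family $\mathcal{F}$ retains both packing number $1$ and fractional packing number $\Omega(k)$, which hinges on the branch sets being pairwise disjoint (so intersections are preserved and the explicit fractional packing stays feasible) and connected (so each lifted cross is a viable coalition). Confirming that the set-system game $v$ is superadditive and that $\rho(\mathcal{G})$ and $\kappa^f(\mathcal{G})$ take the claimed values is routine, and is best done by quoting the game-from-set-system construction already established for Theorems~\ref{thm:1}--\ref{thm:2}. Finally, I expect this route to yield a far-from-optimal $\delta$; a more intrinsic argument that extracts a cross-like configuration directly from an optimal thicket (equivalently, a minimum-width vine decomposition) of order $\tau(G)$, rather than passing through treewidth and the grid-minor theorem, should give both a cleaner proof and a larger exponent.
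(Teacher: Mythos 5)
Your proposal is correct and follows essentially the same route as the paper: the upper bound via Theorem~\ref{thm:1}, and the lower bound by extracting a polynomial-size grid minor (Chekuri--Chuzhoy), lifting row--column crosses through the connected, disjoint branch sets, and exhibiting a game with integral packing $1$ and fractional packing $\Omega(k)$. The only (inessential) differences are that the paper uses just the $k$ diagonal crosses $H_i$ (row $i$ $\cup$ column $i$) with weight $\tfrac12$ each and assigns value $1$ only to those coalitions, whereas you use all $k^2$ crosses with weight $\tfrac{1}{2k-1}$ and a monotone-closure valuation --- the latter technically gives positive value to some disconnected supersets, so you should restrict the value-$1$ coalitions to the (connected) lifted crosses themselves, as the paper does, which changes nothing in the computation.
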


Again, it is not possible to close the upper and lower bounds for the dual integrality gap completely. Even more interestingly, the polynomial range between the upper and lower bounds is necessary. This is since for the family of grid graphs, the exponent of $\tau(G)$ in the lower bound must be $1$, whereas for the family of cliques the exponent of $\tau(G)$ in the upper bound is $\frac12$. It follows that the exponent $1$ in the upper bound of Theorem \ref{thm:2} cannot be decreased, whilst the exponent $\delta$ in the lower bound cannot be increased above $\frac12$.
We remark that the value of $\delta$ in Theorem~\ref{thm:2} relies on the existence of a grid minor of
polynomial size in the treewidth of the graph, a deep result of \cite{CC13}. Determining the best possible
order of the polynomial in the lower bound of Theorem~\ref{thm:2} (between $\delta$ and $\frac{1}{2}$) is 
an interesting open question.

Finally, in Section~\ref{sec:VC}, we show how the VC-dimension may also be used to bound
the packing-covering ratio and the integrality gaps. Given our previous discussion,
the resultant bounds must be weaker than those obtainable via the thicket number.
Indeed, we show how these VC-dimension bounds may can be derived from the 
thicket number bounds.



\section{Thickets and Vines}\label{sec:thicket}
Recall that \cite{MER13} show that the packing-covering ratio can
be bounded in terms of the treewidth of the interaction graph $G$. 
To understand this, we begin with a brief review of treewidth.
We will then introduce a better fitting parameter for analyzing coalition games.

\subsection{Tree Decompositions and Brambles}
Treewidth provides a measure of how closely a graph shares some structural
separation properties possessed by trees. 
Formally, given an undirected graph $G=(V,E)$ we may represent it by a tree $T=(N,L)$
and a labeling $\l:N\rightarrow 2^V$.\footnote{For clarity, 
we will refer to vertices and edges in $G$ and
nodes and links in $T$.}
The labeling assigns to each node $t\in T$ a subset $\l(t)=V_t$ of vertices of $G$.
For each $v\in V$ we denote by $T_v$ the set of nodes in $T$ for which $v$ is
included in the label, $i.e.$ $T_v=\{t: v\in V_t\}$.
 We say that a tree and labeling, $(T,\l)$, is a {\em tree decomposition}
of $G$ if: \\
 (i) For each vertex $v$ of $G$, the set $T_v$ is a {\bf non-empty} and {\bf connected} subgraph of $T$. \\
 (ii) For each edge $e=(u,v)$ in $G$, the subtrees $T_u$ and $T_v$ {\bf intersect} in $T$. 
 
The {\em width} of a tree decomposition $(T,\l)$ of $G$ is the size of the largest label of
a node in $T$ {\em minus one}.\footnote{The decision to subtract one was made to ensure trees have treewidth one. 
Unfortunately, as is apparent, this choice leads to an unaesthetic ``plus one" in many theorems concerning treewidth.} 
The {\em treewidth}, $\omega(G)$, is the minimum width of a
tree decomposition of $G$.
Meir et al. \cite{MER13} show that treewidth relates to coalition games via the following bound.
\begin{theorem}{Meir et al. \cite{MER13}}\label{thm:tw}
For any interaction graph $G$, the packing-covering ratio satisfies:
$$\frac{\kappa(\cal{G})}{\rho(\cal{G})} \ \ \le_{\forall}\ \  \omega(G)+1  $$
\end{theorem}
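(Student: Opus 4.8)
The plan is to prove Theorem~\ref{thm:tw} by exhibiting, for any game $\cal{G}$ on $G$ and any optimal integral packing, an integral cover whose value is at most $(\omega(G)+1)$ times the packing value. Fix a tree decomposition $(T,\l)$ of $G$ of minimum width $\omega = \omega(G)$, so every bag has size at most $\omega+1$. Let $\{S_j\}$ be the family of viable coalitions used (with multiplicity) by an optimal integral solution $y$ to the {\tt Packing-LP}, so $\rho(\cal{G}) = \sum_j v(S_j)$ and each agent $i$ lies in at most one $S_j$. (Here I use superadditivity implicitly: WLOG the support of an optimal integral packing is a collection of pairwise disjoint viable coalitions, since $y_S \in \{0,1\}$ can be assumed and overlapping sets would allow merging.) The goal is to build $x: I \to \mathbb{R}_{\ge 0}$ feasible for the {\tt Covering-LP} with $\sum_i x_i \le (\omega+1)\sum_j v(S_j)$.

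The key idea is the standard ``separator'' property of tree decompositions applied to each connected coalition. First I would recall the classical fact: if $H$ is a connected subgraph of $G$, then the set of nodes $\{t \in T : \l(t) \cap V(H) \neq \emptyset\}$ induces a connected subtree $T_H$ of $T$. Now for the packing family $\{S_j\}$, since the $S_j$ are pairwise disjoint and each is connected, a Helly-type argument on subtrees of $T$ shows that for each node $t \in T$, the bag $\l(t)$ meets the subtree $T_{S_j}$ for at most $\omega+1$ indices $j$ --- or rather, one argues directly that any viable (connected) coalition $S$ that is to be blocked must ``cross'' some bag. Concretely, I would show: for every viable coalition $S$, there is a node $t(S)$ whose bag $\l(t(S))$ is entirely contained in $S$, OR $S$ is separated so that $S$ sees some $S_j$; the cleanest route is to charge each coalition $S$ in the dual to the packing sets it overlaps. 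Set $x_i = v(S_j)$ if $i \in S_j$ for some (unique) $j$, and $x_i = 0$ otherwise --- no, this overcounts; instead distribute more carefully using the bag structure.

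Here is the approach I would actually carry out. For each agent $i$, find a node $t_i \in T_i$ and let the ``responsible bag'' mechanism work as follows: for each packing set $S_j$, pick a node $r_j$ in the subtree $T_{S_j}$; then define $x$ by spreading the value $v(S_j)$ across the vertices of the bag $\l(r_j)$ that lie in $S_j$. Then for any viable coalition $S$ we must check $\sum_{i \in S} x_i \ge v(S)$. Since $S$ is connected its subtree $T_S$ is connected, and since $v(S) > 0$ only when $S$ is viable, superadditivity forces $v(S) \le \sum_j v(S \cap S_j)$-type inequalities when $S$ decomposes along the packing; the crossing number of $T_S$ with the various $r_j$ is controlled because at the ``top'' node of $T_S$ (closest to some fixed root), the bag has at most $\omega+1$ vertices, and these vertices separate $S$ into pieces each of which is hit by the packing. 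This separator-counting, giving the factor $\omega+1$, is the main obstacle: one must argue that whatever $S$ wants to block, enough of its weight is already ``paid for'' by packing sets passing through the $\le \omega+1$ vertices of a single separating bag.

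The hard part will be making the charging rigorous: the naive bound double-counts, and one needs the Helly property for subtrees of a tree (any pairwise-intersecting family of subtrees has a common node) to limit how many packing sets simultaneously use any single bag, combined with superadditivity to lower-bound $v(S)$ by a sum over the packing pieces it meets. I expect the final inequality to follow by: (1) rooting $T$; (2) for each viable $S$, letting $t^*$ be the highest node of $T_S$ and observing $\l(t^*) \cap S$ is a separator of $S$ of size $\le \omega+1$; (3) showing each vertex in this separator that lies in some $S_j$ contributes, via the LP's packing constraint $\sum_{S \ni i} y_S \le 1$, enough weight; (4) summing and using $v(S) \le$ (sum of values over a partition refining along the separator). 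Once Theorem~\ref{thm:tw} is in hand, Theorem~\ref{thm:1}'s $\le_\forall$ direction will follow by replacing the tree decomposition with a minimum-width vine decomposition and $\omega(G)+1$ with $\tau(G)$, using the vinewidth--treewidth comparison stated in the introduction.
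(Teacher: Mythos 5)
There is a genuine gap, and it sits exactly where you flag ``the hard part'': the charging scheme from a fixed optimal packing to a cover does not close. Two concrete problems. First, the inequality you lean on is not what superadditivity gives: superadditivity yields lower bounds on $v$ of unions of disjoint viable sets, and $v(S\cap S_j)$ may even be $0$ because $S\cap S_j$ need not be connected. The inequality you actually need, $v(S)\le \sum_{j:\,S_j\cap S\neq\emptyset} v(S_j)$, follows from an exchange argument using optimality of the integral packing, not from superadditivity. Second, and more fatally, even with that inequality your allocation fails: if you place the value $v(S_j)$ only on the vertices of $S_j$ lying in one chosen bag $\l(r_j)$, a blocking coalition $S$ can intersect each relevant $S_j$ far away from $\l(r_j)$ (the subtrees $T_{S}$ and $T_{S_j}$ overlap, but not necessarily at $r_j$), so $\sum_{i\in S}x_i$ can be $0$ while $v(S)>0$. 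The separator bag $\l(t^*)$ at the top of $T_S$ does not rescue this: its vertices need not lie in any packing set, and the packing sets meeting the pieces of $S$ below the separator may have their designated bags outside $T_S$ altogether. Nothing in your sketch forces the money to sit where $S$ can see it.

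The paper (following the structure of Meir et al., and proving the stronger bound with $\tau(G)$ in place of $\omega(G)+1$ in Theorem~\ref{thm:upper}) reverses the direction of the construction, and that reversal is the missing idea. One does not fix an optimal packing at all. Instead, root the decomposition, note that every viable coalition $Q$ induces a connected subtree $T(Q)$ with a well-defined root $t_Q$, and build the cover bottom-up: at each node $t$, pay every vertex of the bag the maximum residual $r(Q^*_t,t)=\max(v(Q)-x(Q,t),0)$ over coalitions rooted at $t$. Feasibility is then automatic by construction (every coalition is settled at its root), and the cost is at most $(\omega(G)+1)\sum_t r(Q^*_t,t)$. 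Only afterwards does one exhibit a packing: a top-down greedy selection of the coalitions $Q^*_t$ with positive residual, deleting $T(Q^*_t)$, yields pairwise disjoint coalitions of total value $\sum_t r(Q^*_t,t)$, whence $\sum_t r(Q^*_t,t)\le\rho(\mathcal{G})$ and the theorem follows. So the bound is obtained by comparing the constructed cover to a packing built from the same rooted structure, not by charging a cover against a packing chosen in advance; if you want to salvage your write-up, this is the scheme to adopt (and it specializes to tree decompositions verbatim, bags having size at most $\omega(G)+1$).
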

To delve further into this topic, it is important to note that there 
are combinatorial structures called brambles that provide a dual notion for tree decompositions.
A \emph{bramble} is a collection $\mathcal{F}=\{F_1,F_2,\dots,F_p\}$ of sets such that: \\
(i) Each $F_i\subseteq V$  induces a connected subgraph of $G$, and \\
(ii) Every pair $F_i$ and $F_j$ in $\mathcal{F}$ {\em either} intersect (share a vertex) {\em or}  
are adjacent (there is an edge with one endpoint in $F_i$ and one endpoint in $F_j$). 

The {\em hitting size} $\beta(\cal{F})$ of a bramble $\cal{F}$ is
the minimum size of a subset of vertices that intersects each set $F_i$ in $\cal{F}$.
The {\em bramble number} $\beta(G)=\max_{\mathcal{F}} \beta(\cal{F})$ 
is the maximum hitting size of any bramble in $G$.
Seymour and Thomas~\cite{ST93} proved the following minmax theorem:
\begin{theorem}\label{thm:seymour}\cite{ST93}
 The bramble number $\beta(G)$ is equal to the treewidth $\omega(G) - 1$.
\end{theorem}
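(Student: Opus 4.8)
The plan is to prove the stated identity as two inequalities, $\beta(G)\le\omega(G)+1$ and $\beta(G)\ge\omega(G)+1$; together they are the Seymour--Thomas duality of \cite{ST93}, which I would reproduce for completeness (and the first inequality is needed later anyway). The first inequality is short; the second is the substantive half.

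For $\beta(G)\le\omega(G)+1$, I would fix a minimum-width tree decomposition $(T,\ell)$ of $G$, so every bag $V_t=\ell(t)$ has at most $\omega(G)+1$ vertices, and take an arbitrary bramble $\mathcal{F}=\{F_1,\dots,F_p\}$. For each $i$ set $T_i=\bigcup_{v\in F_i}T_v$, the set of nodes of $T$ whose bag meets $F_i$. Since $F_i$ induces a connected subgraph of $G$, a routine induction on $|F_i|$ using axioms (i) and (ii) of a tree decomposition shows $T_i$ is a nonempty connected subtree. Any two of these subtrees intersect: if $F_i$ and $F_j$ share a vertex this is immediate, and if some edge of $G$ joins $F_i$ to $F_j$ then, by the edge condition, both of its endpoints lie in a common bag, whose node lies in $T_i\cap T_j$. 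As subtrees of a tree have the Helly property, $\bigcap_i T_i\neq\emptyset$; any node $t^\ast$ in this intersection has a bag $V_{t^\ast}$ meeting every $F_i$, so $\beta(\mathcal{F})\le|V_{t^\ast}|\le\omega(G)+1$. Maximising over brambles gives the bound.

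For the reverse inequality I would show that every graph with $\omega(G)\ge k$ contains a bramble of order at least $k+1$ --- the heart of \cite{ST93} --- through the helicopter cops-and-robber game, where $k$ cops occupying vertices (moving by helicopter) chase a visible robber who runs arbitrarily fast along cop-free paths. Two ingredients are needed. First, $\omega(G)\ge k$ forces that $k$ cops cannot win; this is the contrapositive of the harder half of the game characterisation of treewidth, which converts a \emph{monotone} winning cop strategy into a tree decomposition whose bags are the successive cop positions, and where the reduction to monotone strategies is itself a nontrivial lemma. Second, from the failure of $k$ cops I would extract a bramble (equivalently, a haven) of order $\ge k+1$: over all pairs $(Z,C)$ with $|Z|\le k$ and $C$ a component of $G-Z$ from which the robber can evade capture forever, let $F_{Z,C}$ consist of $C$ together with the vertices of $Z$ having a neighbour in $C$, and check that these sets are connected, that any two of them touch, and that no set of size at most $k$ hits them all. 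This last verification is where the genuine difficulty lies and is the step I expect to be the main obstacle. A wholly separate, game-free route to the same inequality is the direct recursive construction: assuming every bramble has order at most $k$, build a tree decomposition of width $k-1$ by repeatedly splitting the current piece of $G$ along a separator of size at most $k$ that is balanced for the part not yet decomposed, recursing on the two sides and gluing along the separator --- the hypothesis is exactly what guarantees such a balanced separator exists, since otherwise one could grow an over-large bramble. Either way, this yields $\beta(G)\ge\omega(G)+1$, completing the proof.
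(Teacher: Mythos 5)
The paper gives no proof of this statement at all: Theorem~\ref{thm:seymour} is quoted from Seymour and Thomas \cite{ST93} and used as a black box, so there is no internal proof to compare yours against; the only argument of this type the paper actually carries out is for its own thicket--vinewidth duality (Lemma~\ref{lem:stronger} and Theorem~\ref{thm:minmax}), which follows Reed's game-free presentation: a minimal counterexample chosen to have the fewest hitting sets of size at most $k-1$, combined with a Menger-based restriction lemma (Lemma~\ref{lem:separator}). Note also a mismatch of statements: as printed the theorem reads $\beta(G)=\omega(G)-1$, whereas what you set out to prove, $\beta(G)=\omega(G)+1$, is the actual Seymour--Thomas min-max theorem (already for a single edge the bramble $\{\{u\},\{v\}\}$ has hitting size $2$ while the treewidth is $1$); the printed ``$-1$'' is evidently a typo, and your version is the correct one.

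As a proof attempt, your easy direction is complete and is essentially the same Helly-property argument the paper uses for the corresponding half of Theorem~\ref{thm:minmax}. The hard direction, however, is only an outline: in the cops-and-robber route you leave open both the reduction to monotone strategies and the verification that the escape-component sets form a bramble that no $k$ vertices can hit (which you yourself flag as ``the main obstacle''), and the alternative balanced-separator recursion is likewise asserted rather than carried out --- the claim that the absence of a large bramble yields, at every stage, a small separator balanced with respect to the not-yet-decomposed part is precisely the substance of the theorem and needs a genuine argument. So the substantive half of the duality is not established. If you want a self-contained proof in the spirit of this paper, the cleanest route is to mimic Lemma~\ref{lem:stronger}: show that for every bramble either it extends to one of hitting size $k$ or there is a tree decomposition whose large bags are leaves failing to hit it, using the same minimal-counterexample-plus-Menger machinery as Lemma~\ref{lem:separator}; that adaptation (touching via adjacency as well as intersection) is exactly Reed's account of the Seymour--Thomas proof and avoids the game-theoretic detour entirely.
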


Brambles (rather than tree decompositions) directly relate to
coalition games. Moreover, the relationship is actually through combinatorial
structures we call thickets.

\subsection{Thickets}
Let $G=(V,E)$ be an undirected graph.
A {\em thicket} $\cal{H}=\{H_1,H_2,\dots,H_p\}$ is a collection 
of  sets such that: \\
(i) Each $H_i\subseteq V$ induces a connected subgraph of $G$, and \\
(ii) Every pair $H_i$ and $H_j$ in $\mathcal{H}$ intersects. 

Observe that thickets differ from brambles in that they must pairwise intersect -- adjacency 
is not sufficient.
The {\em hitting size} $\tau(\cal{H})$ of a thicket $\cal{H}$ is
the minimum size of a vertex set that intersects each set $H_i$ in $\cal{H}$.
The {\em thicket number} $\t(G)=\max_{\mathcal{H}} \tau(\cal{H})$ 
is the maximum hitting size of any thicket in $G$. 

Intuitively, thickets are indeed the objects that directly correspond to the packing-covering ratio.
The packing number of a thicket $\mathcal{H}$ is exactly one, but its covering number 
is $\tau(\mathcal{H})$. We will formalize this intuition in Section \ref{sec:ratio}.
First, let's see three simple classes of graphs that will illustrate the concept of thickets, and which 
will also be very useful in the technical results that follow.\\

\noindent{\tt Example 1: Trees.} Let $G=T_n$ be a tree on $n$ vertices. 
A thicket $\mathcal{H}=\{H_1,\dots, H_p\}$ on $T_n$ then consists of a collection of pairwise intersecting
subtrees. It is well-known, by the {\em Helly Property} of trees, that such a collection must contain
a common vertex. Thus the thicket number, $\t(T_n)$, of a tree $T_n$ is at most $1$.\\

\noindent{\tt Example 2: Cliques.} Let $G=K_n$ be a clique on $n$ vertices.  
A thicket $\mathcal{H}=\{H_1,\dots, H_p\}$ on $K_n$ then consists of a collection of pairwise 
intersecting subcliques. Suppose the smallest of these cliques, say $H_1$, has cardinality at most 
$\lceil \frac12 n\rceil$. Then, as $H_1$ itself intersects all of the sets in $\mathcal{H}$, we have a hitting set of
cardinality $\lceil \frac12 n\rceil$. Otherwise all the cliques have cardinality at least $\lceil \frac12 n\rceil+1$.
But then any set $X$ of cardinality $\lceil \frac12 n\rceil$ is a hitting set for $\mathcal{H}$. Thus
$\tau(G)\le \lceil \frac12 n\rceil$.\\

\noindent{\tt Example 3: Grids.}
Let $G=R_k$ be a $k\times k$ grid graph -- the planar graph formed by a grid of $k$ rows and $k$ columns.
Consider the thicket $\cal{H}$ defined as follows. We have a set $H_{R,C}=R\cup C$, for each row $R$ and each column $C$ in the grid.
Clearly each set is connected. Moreover, each pair of sets intersect. So $\cal{H}$ is a thicket. Now take any vertex set $X$ of cardinality less than $k-1$.
Since there are $k$ rows, $X$ must miss some row $\hat{R}$; similarly it must miss some
column $\hat{C}$. Hence, $X$ is not a hitting set for $\cal{H}$ as it does not intersect $H_{\hat{R},\hat{C}}$.
So the hitting size $\t(\mathcal{H})$ is at least $k$. Consequently,
the thicket number, $\t(R_k)$, of the grid $R_k$ is at least $k$.

\subsection{Vine Decompositions}
Brambles are dual to tree decompositions, and thickets also have
dual structures. Since the definition of a thicket is more stringent than
that of a bramble, it must be the case that the definition of its dual structures
is more relaxed than that of a tree decomposition. In particular,
its dual will be a {\bf thin} tree (let's call it a {\em vine}!). 

Formally, given an undirected graph $G=(V,E)$, we construct a 
(vine) tree $T=(N,L)$.
The labeling assigns to each node $t\in T$ a subset $\l(t)=V_t$ of vertices of $G$.
For each $v\in V$ we denote by $T_v$ the set of nodes in $T$ for which $v$ is
included in the label, $i.e.$ $T_v=\{t: v\in V_t\}$.
We say that a tree and labeling, $(T,\l)$, is a {\em vine decomposition}
of $G$ if:
\begin{enumerate}[(i)]
\item For each vertex $v$ of $G$, the set $T_v$ is a {\bf non-empty} and {\bf connected} subgraph of $T$.
\item For each edge $e=(u,v)$ in $G$, the subtrees $T_u$ and $T_v$ {\bf intersect or are adjacent} 
in $T$.
\end{enumerate}

The {\em width} of a vine decomposition $(T,\l)$ of $G$ is the size of the largest label of
a node in $T$. The {\em vinewidth}, $\nu(G)$, is the minimum width of a
vine decomposition of $G$. 
The main structural result of the paper is that the thicket number $\tau(G)$ is 
equal to the vinewidth $\nu(G)$. Before proving this result in Section \ref{sec:thicket-duality}, we
will develop some understanding of vine decompositions.
First, let's return to the simple examples of trees, cliques and grids.\\

\noindent{\tt Example 1: Trees.} Observe that a tree $G=T_n$ gives a trivial
vine decomposition of itself, that is $T=(T_n, \ell)$ where $\ell(v)=\{v\}$. 
This is a vine decomposition as for each edge $(u,v)$ in $G$ the two vertices
are clearly still adjacent in $T$. Thus the vinewidth of a tree is at most $1$. \\

\noindent{\tt Example 2: Cliques.} Let $G=K_n$ be a clique on $n$ vertices. 
The clique has vinewidth at most$\lceil \frac12 n\rceil$. The 
corresponding vine decomposition has two nodes, each containing (roughly) half the vertices. 
This is a vine decomposition as for each edge $(u,v)$ in $G$ the two vertices
are either in the same node of $T$ or in adjacent nodes. \\

\noindent{\tt Example 3: Grids.}
Let $G$ be a $k\times k$ grid graph. Let the (vine) tree $T$ be a path on $k$ nodes. 
Let the $i$th node in the path satisfy $\l(i)=C_i$, where $C_i$ is the set of vertices in the
$i$th column of $G$. This is a vine decomposition. For each $v\in V$ the set $T_v$ 
is a singleton node in $T$, and is thus non-empty and connected. 
For each edge $e=(u,v)$ in $G$, either $T_u=T_v$ if $u$ and $v$ are in the same column 
of $G$, or $T_u$ and $T_v$ are adjacent nodes in $T$ if $u$ and $v$ are in the same row of $G$.
Clearly the width of this vine decomposition is $k$ and, thus, the vinewidth $\nu(G)$ is 
at most $k$. \\

Now treewidth and vinewidth are at most a multiplicative factor two apart.
\begin{theorem}\label{thm:approx}
Vinewidth and treewidth are related by
$\nu(G)-1 \le \omega(G) \le 2\nu(G)-1$. Moreover there exist graphs for which the lower and 
upper bounds are tight.
\end{theorem}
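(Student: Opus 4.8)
The plan is to prove the two inequalities separately and then exhibit extremal graphs.

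The inequality $\nu(G)-1\le\omega(G)$ is immediate: conditions~(i) in the two definitions coincide, and condition~(ii) of a tree decomposition (``$T_u$ and $T_v$ intersect'') is stronger than condition~(ii) of a vine decomposition (``$T_u$ and $T_v$ intersect or are adjacent''). Hence every tree decomposition of $G$ is already a vine decomposition. An optimal tree decomposition has largest label of size $\omega(G)+1$, so --- since vinewidth counts labels without subtracting one --- it certifies $\nu(G)\le\omega(G)+1$.

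For $\omega(G)\le 2\nu(G)-1$ I would start from an optimal vine decomposition $(T,\ell)$, so $|V_t|\le\nu(G)$ for every node $t$, and \emph{subdivide every link}: for each link $e=(s,t)$ of $T$ insert a new node $m_e$ between $s$ and $t$ with label $\ell'(m_e)=V_s\cup V_t$, keeping $\ell'(t)=V_t$ on the original nodes. I would then verify the three tree-decomposition axioms for $(T',\ell')$. Axiom~(i): $T'_v$ equals $T_v$ together with every subdivision node $m_e$ lying on a link incident to $T_v$; since $T_v$ is connected in $T$, this set is connected (and nonempty) in $T'$. Axiom~(ii): an edge $(u,v)$ of $G$ whose subtrees already intersected in $T$ still has them intersecting; an edge whose subtrees were only adjacent, via a link $e=(s,t)$ with $u\in V_s$ and $v\in V_t$, now has $u,v\in\ell'(m_e)$, so $T'_u$ and $T'_v$ genuinely meet at $m_e$. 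Finally the largest label of $(T',\ell')$ has size at most $2\nu(G)$, so this tree decomposition has width at most $2\nu(G)-1$, giving $\omega(G)\le 2\nu(G)-1$.

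For tightness I would return to the running examples. Cliques are extremal for the upper bound: Example~2 shows $\nu(K_n)\le\lceil n/2\rceil$, and the matching lower bound follows from the thicket of all $(\lceil n/2\rceil+1)$-element subsets of $V(K_n)$, whose hitting number is $\lceil n/2\rceil$, together with the easy ``Helly'' direction $\tau(G)\le\nu(G)$ of the thicket--vine duality (alternatively, self-containedly, any link of a vine decomposition of $K_n$ splits $V$ into $A\subseteq V_s$, $B\subseteq V_t$, $M\subseteq V_s\cap V_t$, forcing a label of size at least $\lceil n/2\rceil$). Hence $\nu(K_{2m})=m$ and $\omega(K_{2m})=2m-1=2\nu(K_{2m})-1$, so the upper bound is attained with equality. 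Grids are extremal for the lower bound: Example~3 gives $\nu(R_k)\le k$, the rows-and-columns thicket gives $\tau(R_k)\ge k$ and hence $\nu(R_k)\ge k$, while $\omega(R_k)=k$; thus $\nu(R_k)=\omega(R_k)$, so the inequality $\nu(G)-1\le\omega(G)$ admits no multiplicative improvement.

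I expect the only genuinely delicate point to be the verification that $(T',\ell')$ is a bona fide tree decomposition --- the connectivity of each $T'_v$ and the claim that every $G$-edge is now covered by an honest intersection --- which is short but must be checked with care, since the subdivision nodes are precisely what is engineered to absorb the ``adjacent'' cases that a vine decomposition tolerates. The tightness lower bounds $\nu(K_n)\ge\lceil n/2\rceil$ and $\nu(R_k)\ge k$ are the other place requiring a brief argument, but each follows from the easy direction of the duality (or, for cliques, from the separator argument above).
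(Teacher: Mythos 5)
Your proof follows essentially the same route as the paper: the first inequality via the observation that every tree decomposition is already a vine decomposition, the second via exactly the paper's construction (subdivide each link and label the new middle node with the union of the two endpoint labels), and the same extremal families (cliques for the upper bound, grids for the lower), with your added justification that $\nu(K_n)\ge\lceil n/2\rceil$ filling in a detail the paper takes for granted. The one caveat is the lower-bound tightness: your grid argument only yields $\nu(R_k)=\omega(R_k)$, i.e.\ near-tightness of $\nu(G)-1\le\omega(G)$ rather than the exact tightness asserted in the statement; the paper likewise only remarks that exactly tight examples exist and omits them, and a trivial example (e.g.\ an edgeless graph, where $\omega(G)=0$ and $\nu(G)=1$) would close this gap.
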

\begin{proof}
Observe that a tree decomposition of $G$ is a vine decomposition of $G$.
Thus, $\nu(G) \le \omega(G)+1$. 
Note that this bound is almost tight for the grid $R_k$; we have $\nu(R_k)=k$ 
(this follows as the thicket number equals the vinewidth (see Theorem 2.5 below)) 
and it is well-known that $\omega(R_k)=k$. In fact, 
graphs can be constructed for which this lower bound is exactly tight; 
we omit the details.

On the other hand, given a vine decomposition $(T, \ell)$
we can create a tree-decomposition $(\hat{T}, \hat{\ell})$ by augmenting it as follows. 
We replace each link in $T$ by a path of length two in $\hat{T}$.
For each new node $t\in \hat{T}$ in the middle of the path that replaced the link $(t_1, t_2)\in T$ we
set $\hat{\ell}(t)= \ell(t_1)\cup \ell(t_2)$. It is easy to verify that this is a 
tree-decomposition.
Furthermore the width of this tree-decomposition is at most $2\nu(G)-1$ 
(since each label has size at most $2 \nu(G)$). This upper bound is tight for cliques
as $K_n$ has treewidth $n-1$ and vinewidth $\lceil \frac{n}{2} \rceil$.
 \end{proof}

As with tree decompositions, an important property of vine decompositions is that
nodes in the (vine) tree correspond to separators in the original graph.

%
%
%

%

 \begin{lemma}
Let $(T,\l)$ be a vine decomposition of $G$, and let $t$ be an internal node in $T$.
Then $V_t$ is a separator in $G$.
\end{lemma}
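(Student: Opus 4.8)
The plan is to mimic the standard proof that internal nodes of a tree decomposition induce separators, adapting it to the weaker adjacency condition of a vine decomposition. Let $t$ be an internal node of $T$, so that $T - t$ splits into at least two components $T_1, T_2, \ldots$. For each component $T_j$, let $A_j = \bigcup_{s \in T_j} V_s \setminus V_t$ be the set of vertices whose ``home'' nodes lie strictly in that component. I will argue that $\{V_t\} \cup \{A_j\}_j$ partitions $V$ (using property (i), that each $T_v$ is non-empty and connected: every vertex not in $V_t$ has its subtree $T_v$ entirely inside one component of $T - t$), and that there is no edge of $G$ between $A_j$ and $A_{j'}$ for $j \neq j'$. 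The latter is the crux.

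First I would establish the partition claim: since $T_v$ is connected and non-empty for each $v$, and if $v \notin V_t$ then $t \notin T_v$, so $T_v$ avoids $t$ and hence, being connected, lies entirely within a single component $T_j$ of $T - t$; this assigns each vertex of $V \setminus V_t$ to a unique part $A_j$. Then I would show $A_j \neq \emptyset$ for at least two indices so that removing $V_t$ genuinely disconnects the graph — here I should be a little careful: it is conceivable that some component $T_j$ contributes no new vertices (all its labels are subsets of $V_t$), but since $t$ is internal and the decomposition is presumably taken to be ``reduced'' (or one can reduce it without increasing width), at least two parts are non-empty; alternatively I can phrase the conclusion as: $V_t$ separates $A_j$ from $A_{j'}$ whenever both are non-empty, which is the meaningful content of ``$V_t$ is a separator.''

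The main obstacle is the edge argument, because the vine condition only guarantees that $T_u$ and $T_v$ are adjacent, not intersecting, when $uv \in E$. So suppose $u \in A_j$, $v \in A_{j'}$ with $j \neq j'$, and $uv \in E(G)$. Then $T_u \subseteq T_j$ and $T_v \subseteq T_{j'}$, and these subtrees lie in different components of $T - t$. Two subtrees lying in different components of $T - t$ cannot intersect, and they can be adjacent in $T$ only via a link incident to... no: any link joining a node of $T_j$ to a node of $T_{j'}$ must pass through $t$, meaning one endpoint of the link is $t$ itself. But $t \notin T_u$ and $t \notin T_v$, so no link of $T$ has one endpoint in $T_u$ and the other in $T_v$; hence $T_u$ and $T_v$ are neither adjacent nor intersecting, contradicting the vine decomposition property (ii). Therefore no such edge exists, and $V_t$ is a separator. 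I would wrap this up in a short paragraph, emphasizing that it is precisely the structure of $T - t$ (every $T_j$--$T_{j'}$ path or link routes through $t$) that makes the relaxed adjacency condition still strong enough.

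\begin{proof}
Let $t$ be an internal node of $T$, and let $T_1, \ldots, T_r$ (with $r \ge 2$) be the connected components of $T - t$. For each $j$, set $A_j = \bigl(\bigcup_{s \in T_j} V_s\bigr) \setminus V_t$. By property (i), for every vertex $v \in V \setminus V_t$ the set $T_v$ is non-empty and connected and does not contain $t$; hence $T_v$ lies entirely within a single component $T_j$. Thus each $v \in V \setminus V_t$ belongs to exactly one $A_j$, so $V = V_t \cup A_1 \cup \cdots \cup A_r$ is a partition of $V$, and since $t$ is internal we may assume (reducing the decomposition if necessary, which does not increase its width) that at least two of the $A_j$ are non-empty.

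It remains to show that $G$ has no edge between $A_j$ and $A_{j'}$ for $j \neq j'$. Suppose $e = (u,v) \in E(G)$ with $u \in A_j$ and $v \in A_{j'}$. As noted above, $T_u \subseteq T_j$ and $T_v \subseteq T_{j'}$, and neither $T_u$ nor $T_v$ contains $t$. Since $T_j$ and $T_{j'}$ lie in different components of $T - t$, the subtrees $T_u$ and $T_v$ are disjoint, and any link of $T$ with one endpoint in $T_j$ and the other in $T_{j'}$ is incident to $t$; as $t \notin T_u \cup T_v$, there is no link of $T$ joining $T_u$ to $T_v$. Hence $T_u$ and $T_v$ neither intersect nor are adjacent in $T$, contradicting property (ii) of the vine decomposition. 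Therefore no such edge exists, so removing $V_t$ from $G$ leaves no edge between any two non-empty parts $A_j$, i.e. $V_t$ is a separator in $G$.
\end{proof}
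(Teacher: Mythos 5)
Your proof is correct and follows essentially the same route as the paper's: partition $V\setminus V_t$ according to the components of $T-t$ via property (i), then observe that for $u,v$ in different parts the subtrees $T_u$ and $T_v$ can neither intersect nor be adjacent (any link between the components would have to use $t$), contradicting property (ii). Your extra remark about possibly empty parts is a reasonable precaution that the paper simply glosses over, and it does not change the substance of the argument.
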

\begin{proof}
Take any node $t\in T$ with degree $r\ge 2$. Let $T_1,T_2,\dots,T_r$ be the subtrees
formed by the removal of $t$. Let $C_i= \bigcup_{x \in T_i} V_x\setminus V_t$.
We claim that there is no edge between $C_i$ and $C_j$ for $i \neq j$ in $G\setminus V_t$.
To see this, take any pair of vertices $x\in C_i$ and $y\in C_j$ where $i\neq j$.
 Since neither $x$ nor $y$ are in $V_t$ their corresponding subtrees $T_x$ and
 $T_y$ can neither intersect nor be adjacent in $T$, because $T_x \subseteq T_i$ and
 $T_y \subseteq T_j$. Thus $(x,y)$ is not an edge of $G$.
 \end{proof}

%

\subsection{The Thicket-Vinewidth Duality Theorem}\label{sec:thicket-duality}
Here we present the thicket-vinewidth duality theorem.
\begin{theorem}\label{thm:minmax}
The  thicket number $\tau(G)$ is equal to the vinewidth $\nu(G)$.
\end{theorem}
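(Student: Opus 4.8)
The plan is to prove the two inequalities $\tau(G) \le \nu(G)$ and $\nu(G) \le \tau(G)$ separately, in the style of the classical Seymour--Thomas duality between brambles and tree decompositions (Theorem~\ref{thm:seymour}), which this statement mirrors in the ``thin'' setting.

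\emph{Easy direction: $\tau(G) \le \nu(G)$.} Given any thicket $\mathcal{H} = \{H_1,\dots,H_p\}$ and any vine decomposition $(T,\ell)$ of $G$, I would show that some node label $V_t$ hits every $H_i$, so that $\tau(\mathcal{H}) \le \nu(G)$; taking the maximum over thickets gives the bound. For each $H_i$, the set $T_{H_i} := \bigcup_{v \in H_i} T_v$ is a connected subtree of $T$ (since $H_i$ is connected in $G$ and each $T_v$ is connected and consecutive subtrees along edges of $G[H_i]$ intersect). For two sets $H_i, H_j$ in the thicket, they intersect in $G$, hence $T_{H_i} \cap T_{H_j} \ne \emptyset$. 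So $\{T_{H_i}\}$ is a family of pairwise-intersecting subtrees of the tree $T$, and by the Helly property of subtrees of a tree there is a common node $t \in \bigcap_i T_{H_i}$. Then $V_t$ intersects every $H_i$, and $|V_t| \le \nu(G)$, as required. (Note this direction only uses that thicket members pairwise \emph{intersect}, not merely that they are adjacent — which is exactly why vine decompositions, rather than tree decompositions, are the right dual.)

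\emph{Hard direction: $\nu(G) \le \tau(G)$.} This is the real content and the main obstacle. Set $k = \tau(G)$. I want to construct a vine decomposition of width at most $k$. Following Seymour--Thomas, I would argue contrapositively / by a potential-function or ``greedy separation'' argument: consider an inclusion-minimal counterexample, or more constructively, define a notion of a partial vine decomposition of a subgraph whose interface to the rest of $G$ has size $\le k$, and show such a partial decomposition can always be extended. Concretely, I expect the cleanest route is: suppose $\nu(G) > k$; build a thicket of hitting size $> k$ to contradict $\tau(G) = k$. To do this one shows that the failure to find a width-$k$ vine decomposition produces, via a compactness/recursion argument, a family of connected vertex sets that pairwise intersect and that cannot be hit by $k$ vertices — roughly, the ``obstructing'' separators one is forced to use, closed up under the adjacency-vs-intersection slack, form a thicket. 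The adaptation of the Seymour--Thomas machinery requires care precisely at the points where ``adjacent'' must be upgraded to ``intersecting'': in the vine setting, when we split $G$ along a separator $S = V_t$ of size $\le k$ and recurse on the pieces $G_i = G[C_i \cup S]$, the recursion must maintain that $S$ is placed in a single node and that cross-piece edges are handled by adjacency in $T$, so the inductive hypothesis has to be stated for graphs with a distinguished ``root clique-like'' boundary set that is allowed to sit in one node. I would formalize a lemma of the form: \emph{if every ``$S$-thicket'' (thicket all of whose members meet $S$) in $G$ has hitting size $\le k$, then $G$ has a vine decomposition of width $\le k$ in which $S$ lies in a node}, and induct on $|V(G)|$; the base case is $G = G[S]$ with $|S| \le k$, handled by a single node (cf. the clique example).

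The step I expect to fight with is the inductive extension: after choosing a good separator $S'$ inside a piece, re-assembling the vine decompositions of the sub-pieces into one tree while simultaneously (a) keeping $S$ in a single node of the glued tree, (b) keeping $S'$ in a single node, and (c) ensuring every original edge of $G$ still has its two $T_v$'s intersecting-or-adjacent — in particular edges with one endpoint in a sub-piece and the other in $S \setminus S'$. Getting the separator-choice right so that the recursion strictly decreases $|V(G)|$ (and does not get stuck when no balanced small separator exists) is where the thicket hypothesis must be invoked: the absence of a usable separator is precisely what yields a thicket of hitting size $> k$. I would mirror the Seymour--Thomas proof line by line, replacing ``touching (intersect or adjacent)'' bramble conditions with ``intersecting'' thicket conditions and correspondingly relaxing tree-decomposition edge-coverage to vine-decomposition edge-coverage, and check that every use of the bramble axiom in their argument only ever needed the intersection half — any place that genuinely used adjacency would break, and I expect (by the design of the definitions) that no such place arises.
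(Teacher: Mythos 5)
Your first inequality ($\tau(G)\le\nu(G)$) is correct and is exactly the paper's argument: the sets $T[H_i]$ are pairwise-intersecting subtrees of the vine tree, the Helly property gives a common node, and its label hits the thicket. No issues there.

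The hard direction, however, is a plan rather than a proof, and the points you explicitly defer are precisely where the content of the theorem lives. The paper does not prove ``$\nu(G)\le\tau(G)$'' by your proposed induction on $|V(G)|$ with an ``$S$-thicket'' hypothesis forcing width $\le k$ everywhere; it proves the strictly stronger dichotomy of Lemma~\ref{lem:stronger}: for any thicket $\cal{H}$, either $\cal{H}$ extends to a thicket of hitting size $k$, or there is a vine decomposition in which every label of size $\ge k$ sits at a \emph{leaf} and fails to hit $\cal{H}$. That leaf exception is not cosmetic --- the recursion genuinely produces decompositions with large leaf labels, and your formulation (width $\le k$ in every node, with $S$ in a single node) is not what the inductive step can deliver, so your induction as stated would get stuck exactly at the reassembly step you flag. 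Moreover, the induction measure in the paper is not $|V(G)|$ but the number of hitting sets of size at most $k-1$ of the thicket: one takes a counterexample $\cal{H}$ with fewest such hitting sets, picks a minimum hitting set $X$, and for each component $C_i$ of $G\setminus X$ either handles it trivially (if $C_i$ misses some $H\in\cal{H}$) or enlarges the thicket to $\hat{\cal{H}}=\cal{H}\cup\{C_i\}$, which strictly decreases the measure and hence admits a good decomposition.

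The second missing ingredient is the mechanism that lets those decompositions be glued: Lemma~\ref{lem:separator}, which uses Lemma~\ref{lem:twohits} plus Menger's theorem to reroute $|X|$ vertex-disjoint paths from the large leaf label $\hat{V}_t$ (a hitting set disjoint from $C_i$) to $X$, thereby converting a vine decomposition of $G$ into one of $G[X\cup C_i]$ in which $X$ occupies a single node, labels do not grow, and non-$t$ leaf labels only shrink. This is what guarantees every piece $G[X\cup C_i]$ gets a decomposition with a node labeled exactly $X$, so the pieces can be merged at that node; it is also where the thicket (intersection) axiom is actually used, via Lemma~\ref{lem:twohits}, to certify that any $X$--$\hat{V}_t$ separator has size $\ge|X|$. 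Your sketch anticipates that ``the absence of a usable separator yields a thicket,'' but gives no construction; in the actual proof no such obstruction-to-thicket step appears --- instead the thicket is enlarged deliberately by adding $C_i$, and the Menger rerouting does the rest. Without the strengthened statement, the hitting-set-counting induction, and the rerouting lemma, your hard direction does not go through as written.
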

To prove this we apply the approach used 
by \cite{ST93} to bound the bramble number.
In particular we prove the following stronger result, which characterizes 
when a thicket can be extended to create a thicket with hitting size $k$.

\begin{lemma}\label{lem:stronger}
For any thicket $\cal{H}$ in $G$, exactly one of the following holds:
\begin{itemize}
\item[(a)]
There is a thicket $\cal{H}'$ with hitting size $k$ such that $\cal{H}\subseteq \cal{H}'$.
\item[(b)]
There is a vine decomposition $(T,\l)$ of $G$ such that for
any node $s\in T$ with $|V_s|\ge k$: \\
(i) $s$ is a leaf in $T$, and (ii) $V_s$ is not a hitting set for $\cal{H}$.
\end{itemize}
\end{lemma}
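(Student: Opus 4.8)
The plan is to follow the Seymour--Thomas strategy for brambles, adapted to the more restrictive intersection axiom of thickets. The two alternatives are mutually exclusive for an easy reason: if $\mathcal{H}'\supseteq\mathcal{H}$ is a thicket with hitting size $k$ and $(T,\ell)$ is a vine decomposition as in (b), pick any $H\in\mathcal{H}'$. Because $H$ induces a connected subgraph, the nodes $\{t: V_t\cap H\neq\emptyset\}$ form a connected subtree of $T$ (standard for tree/vine decompositions), and since $\mathcal{H}'$ pairwise intersects, these subtrees pairwise intersect, so by the Helly property of subtrees of a tree they share a common node $t^\star$; then $V_{t^\star}$ is a hitting set for $\mathcal{H}'$, hence for $\mathcal{H}$, of size $<k$ unless $|V_{t^\star}|\ge k$, in which case $t^\star$ is a leaf and $V_{t^\star}$ is not a hitting set for $\mathcal{H}$ --- contradiction either way. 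So at most one alternative holds; the work is to show at least one does.

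For the main direction I would argue by induction, building the vine decomposition from the ``outside in.'' Assume (a) fails, i.e.\ $\mathcal{H}$ cannot be extended to a thicket of hitting size $k$. The key notion is that of a set $X\subseteq V$ with $|X|<k$ that is \emph{not} a hitting set for any extension: more precisely, following Seymour--Thomas, I would look at minimal separators $X$ with $|X|\le k$ such that every component $C$ of $G\setminus X$ is ``small'' in the sense that adding $N[C]:=C\cup (X\cap N(C))$, or some appropriate connected superset, to $\mathcal{H}$ still keeps it a thicket only in a controlled way. Concretely: for a connected set $W$, say $W$ is \emph{$\mathcal{H}$-good} if $W$ intersects every member of $\mathcal{H}$ and $W$ together with $\mathcal{H}$ remains a thicket; since (a) fails, every $\mathcal{H}$-good connected set has size $\ge k$ --- wait, rather the point is that no family of pairwise-intersecting connected supersets with hitting size $k$ exists. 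I would set up the recursion on the pair $(G,\mathcal{H})$ where at each step we have a connected region $R$ of $G$ and a ``boundary'' set $Z=V_s$ of the current leaf $s$; we then find a separator inside $R$ respecting $Z$, split $R$ into sub-regions, attach new leaves, and recurse, using the failure of (a) to guarantee the separators found have the properties in (b)(i)--(ii). The clean way to phrase the invariant is: we maintain a vine decomposition of the part processed so far such that each ``active'' leaf $s$ has $V_s$ a separator (or boundary) with the property that on the unprocessed side, $\mathcal{H}$ restricted there still cannot be completed to a thicket of hitting size $k$; at termination every large label is at a leaf and fails to be a hitting set because otherwise we could have completed $\mathcal{H}$.

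The main obstacle, and the place where the thicket case genuinely differs from the bramble case, is handling the intersection-only (no-adjacency) axiom when choosing separators: in a vine decomposition adjacent leaves may share no vertex, so when we split a region by a separator $X$ we must be careful that an edge crossing between two pieces is realized by \emph{adjacency} of the corresponding subtrees, not intersection, and this is exactly what forces large labels to the leaves --- an internal node of degree $\ge 2$ must, by the separator lemma (Lemma~2.4), have $V_t$ separating the pieces, and we want to keep $|V_t|<k$ there. So the crux is: whenever a ``large'' label ($\ge k$) appears, it must be because we hit a dead end, i.e.\ a candidate hitting set of size exactly $k$ that does \emph{not} hit $\mathcal{H}$, and such a set becomes a pendant leaf attached to the rest. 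Making the recursion terminate (a potential-function / measure argument on the number of vertices or on $\sum |H_i|$, as in Seymour--Thomas) and verifying the vine axioms for edges straddling the constructed separators are the routine-but-delicate parts; I expect the separator-choice-under-the-intersection-axiom step to be where the real content lies, and I would model it closely on the corresponding lemma in \cite{ST93}, substituting ``intersect'' for ``touch'' throughout and checking the Helly argument still goes through because thicket members pairwise intersect rather than merely touch.
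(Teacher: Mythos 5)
Your treatment of mutual exclusivity is fine and matches the paper: the Helly argument (as in the proof of Theorem~\ref{thm:minmax}) forces some label to hit every extension $\cal{H}'$, and (b)(i)--(ii) then cap its size below $k$. The problem is the main direction, where your text is a plan to ``model it closely on \cite{ST93}, substituting intersect for touch'' rather than a proof: the three mechanisms that make the adaptation work are missing. First, the induction measure. You suggest recursing with a potential on the number of vertices or on $\sum_i |H_i|$, but the recursion never shrinks the graph and it \emph{enlarges} the thicket, so neither quantity decreases; the paper instead takes a counterexample $\cal{H}$ minimizing the number of hitting sets of size at most $k-1$, and this is the quantity that strictly drops. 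Second, the pivotal move you never make: fix a hitting set $X$ of size $\le k-1$ (which exists because (a) fails), look at a component $C_i$ of $G\setminus X$, and when $C_i$ meets every member of $\cal{H}$, add $C_i$ \emph{itself} to the thicket. The enlarged thicket $\hat{\cal{H}}=\cal{H}\cup\{C_i\}$ still has hitting size $\le k-1$ (else (a) would hold for $\cal{H}$), and $X$ no longer hits it, so $\hat{\cal{H}}$ has strictly fewer small hitting sets and, by minimality, admits a decomposition satisfying (b). Your sketch talks vaguely about adding $N[C]$ ``or some appropriate connected superset'' and then abandons the idea, so the failure of (a) is never actually converted into a usable decomposition on the unprocessed side.

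Third, the surgery that turns that decomposition of $G$ into one of $G[X\cup C_i]$ with $X$ as a label is genuinely nontrivial and you defer it as ``routine-but-delicate.'' In the paper it is carried by Lemma~\ref{lem:twohits} (any two hitting sets of $\cal{H}$ cannot be separated by fewer than $|X|$ vertices, via Menger) together with Lemma~\ref{lem:separator}: the offending large leaf label $\hat{V}_t$ is a hitting set for $\cal{H}$ disjoint from $C_i$, so $|X|$ vertex-disjoint paths from $X$ to $\hat{V}_t$ let one re-route each $x_i\in X$ along the corresponding path of nodes in the vine tree, producing a decomposition of $G[X\cup C_i]$ in which $V_t=X$, no label grows, and labels of size $\ge k$ remain confined to leaves that fail to hit $\cal{H}$. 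Only then can the decompositions of the pieces $G[X\cup C_i]$ be glued at the common label $X$ to contradict the choice of $\cal{H}$. Without the measure, the thicket-extension step, and this Menger-based relabeling, your recursion (``find a separator inside $R$ respecting $Z$, split, recurse'') has no source for the separators it needs and no termination argument, so the proposal as written does not establish the lemma.
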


Before proving Lemma \ref{lem:stronger}, let's see why it does gives Theorem~\ref{thm:minmax}.

\begin{proof}[of Theorem~\ref{thm:minmax}]
First we show that $\nu(G)\ge \t(G)$. Take a thicket $\cal{H}=\{H_1,H_2,\dots,H_p\}$ and
a vine decomposition $T=(N,L)$ of $G$. We will show the subtree $H[T_i]$ of $T$ corresponding to each element of $\mathcal{H}$ pairwise intersect, apply the Helly Property to find a common intersection for all $H[T_i]$ and see that the label of this node at the intersection is a hitting set for $\mathcal{H}$.

More precisely,
as $H_i$ is a connected subgraph of $G$, there is a tree $R_i$ in $G$ spanning the 
vertices of $H_i$. Now consider the subgraph of $T$ induced by $H_i$; that is 
$T[H_i]=\bigcup_{w \in H_i} T_{w}$. We claim that $T[H_i]$ is
a tree (i.e., $T[H_i]$ is connected). Take any edge $(v_r, v_s)\in R_i$. As $T$
is vine decomposition, we have that $T_{v_r}\cup T_{v_s}$ induces a connected
subtree in $T$. Extending this argument over every edge of $R_i$ implies
$T[H_i]$ is connected.

Now take any pair $H_i$ and $H_j$ in $\mathcal{H}$, $1\le i< j\le p$.
Since $\cal{H}$ is a thicket they both contain some vertex $u$ in $G$. 
It follows that $T_u\subseteq T[H_i] \cap T[H_j]$.
In particular, the $T[H_i]$, $1\le i\le p$, are pairwise intersecting subtrees of $T$.
Therefore, by the Helly Property of trees,
there is a node $t\in \bigcap_{i=1}^p T[H_i]$.  We claim that the
vertex set $V_t$ in $G$ is a hitting set for $\cal{H}$.
To see this, take any $H_i\in \mathcal{H}$. We have $t\in T[H_i]$ and, thus, there is some
$T_w\subseteq T[H_i]$ where $w\in V_t$. Since $w\in H_i\cap V_t$, the claim follows.
Since $|V_t| \le \tau(G)$, the thicket number is at most the vinewidth: $\tau(G) \le \nu(G)$. 

Next we must prove that $\nu(G)\le \tau(G)$. This follows from Lemma \ref{lem:stronger}.
To see this, let $\cal{H}=\emptyset$ be the empty thicket and $k = \nu(G)$.
Then either (a) there is a thicket $\cal{H}'$ with hitting size $k$
or (b) there is a vine decomposition $(T,\l)$ such that if  $|V_s|\ge k$ then
$s$ is a leaf in $T$ {\em and} $V_s$ is not a hitting set for $\cal{H}$. If (a) holds, 
then $\cal{H}$ is a thicket of hitting size $\nu(G)$ and so by definition of $\tau(G)$, $\tau(G) \ge \nu(G)$. We now show (b) cannot hold.
For all $s$, $V_s$ is a hitting set since $\cal{H}$ is empty.
Thus we have this restatement of (b): there is a vine decomposition $(T,\l)$ such 
that $|V_s|< k$ for all $s\in T$, $i.e.$ the vinewidth is at most $k-1 = \nu(G)-1$, a contradiction to the definition of $\nu(G)$.
\end{proof}

So now we must prove Lemma \ref{lem:stronger}. Our proof is based upon
an interpretation by Reed in~\cite{Reed02} of the bramble-treewidth duality theorem. Before
proving Lemma~\ref{lem:stronger}, let us state two lemmas.

\begin{lemma}\label{lem:twohits}
Given a thicket $\cal{H}$ with hitting size $h$.
Let $X_1$ and $X_2$ be hitting sets for $\cal{H}$. Then any separator for $X_1$ and $X_2$ contains
at least $h$ vertices.
\end{lemma}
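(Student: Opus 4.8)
The plan is to show directly that any separator $S$ for $X_1$ and $X_2$ is itself a hitting set for the thicket $\mathcal{H}$. Since $h$ is, by definition, the minimum size of a hitting set for $\mathcal{H}$, this immediately yields $|S| \ge h$, which is exactly the claim.

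To carry this out I would fix an arbitrary set $H_i \in \mathcal{H}$. Because $X_1$ and $X_2$ are hitting sets, there exist vertices $x_1 \in H_i \cap X_1$ and $x_2 \in H_i \cap X_2$. Since $H_i$ induces a connected subgraph of $G$, there is a path $P$ from $x_1$ to $x_2$ all of whose vertices lie in $H_i$ (if $x_1 = x_2$ we take the trivial one-vertex path). Now $P$ is a path from $X_1$ to $X_2$ in $G$, so by the defining property of a separator it must contain some vertex $s \in S$. As every vertex of $P$ lies in $H_i$, we get $s \in H_i \cap S$; that is, $S$ meets $H_i$. Since $H_i$ was arbitrary, $S$ intersects every member of $\mathcal{H}$, so $S$ is a hitting set and hence $|S| \ge h$.

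The only point requiring care — more a bookkeeping matter than a genuine obstacle — is keeping the notion of ``separator'' consistent: we use that \emph{every} $X_1$--$X_2$ path in $G$ meets $S$ (including length-zero paths, which forces $X_1 \cap X_2 \subseteq S$ and handles the $x_1 = x_2$ case), together with the fact that the path $P$ witnessing connectivity of $G[H_i]$ can be chosen inside $H_i$. With that convention fixed the argument is complete and needs no further machinery; notably it does not even invoke the pairwise-intersection property of a thicket, only the connectivity of each $H_i$.
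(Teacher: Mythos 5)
Your proof is correct and follows essentially the same route as the paper's: for each $H_i$, the connectivity of $G[H_i]$ yields a path inside $H_i$ between its intersection points with $X_1$ and $X_2$, the separator must meet this path, hence meets $H_i$, so the separator is a hitting set and has size at least $h$. Your remarks on the $x_1=x_2$ case and on not needing pairwise intersection are fine extra care but do not change the argument.
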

\begin{proof}
Take a set $H_i \in \cal{H}$. The set $H_i$ is connected and intersects 
$X_1$ and $X_2$ at respectively $v_1$ and $v_2$. Thus there is a path $P_i$ in $H_i$ 
between the $v_1$ and $v_2$.
A set $S$ that separates $X_1$ and $X_2$ must disconnect
this path. This applies for every set  $H_i\in \cal{H}$, so $S$
must hit every set in $\cal{H}$ and is, thus, a hitting set. Therefore $|S|\ge h$.
We remark that, in fact, there are $h$ vertex disjoint paths 
from $X_1$ to $X_2$ in $G$. 
\end{proof}

\begin{lemma}\label{lem:separator}
Let $(\hat{T},\hat{\l})$ be a vine decomposition of $G$. 
Let  $X$ be a separator of $G$ with a component $C$ of $G\setminus X$.
Suppose there is a node $t\in \hat{T}$ such that $\hat{V}_t\cap C=\emptyset$
and every separator for $X$ and $\hat{V}_t$ contains at least $|X|$ vertices.
Then there is a vine decomposition $(T=\hat{T},\l)$ of $G[X\cup C]$ with\\
(1) $V_t=X.$ \ \ \ \ \ 
(2) $V_s\subseteq \hat{V}_s$, for each leaf $s\neq t$ in $T$.\ \ \ \ \ 
(3) $|V_s|\le |\hat{V}_s|$, for all $s\in T$.
\end{lemma}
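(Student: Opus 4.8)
The plan is to build the new vine decomposition $(T,\l)$ by keeping the same underlying tree $\hat T$ and surgically modifying the labels of $(\hat T,\hat\l)$ so that they only use vertices of $X\cup C$, while forcing $V_t=X$. The guiding principle is the one used in the bramble--treewidth duality proof of Reed~\cite{Reed02}: since $X$ and $\hat V_t$ cannot be separated by fewer than $|X|$ vertices, Menger's theorem gives $|X|$ vertex-disjoint paths from $X$ to $\hat V_t$ in $G$, and these paths let us ``route'' the vertices of $X$ into the decomposition at node $t$ without blowing up any label. Concretely, I would first observe that $C$ is a component of $G\setminus X$ and $\hat V_t\cap C=\emptyset$, so $\hat V_t$ lies in $G\setminus C$; then pick a minimum $X$--$\hat V_t$ separator, which by hypothesis has size exactly $|X|$ (it is at most $|X|$ trivially since $X$ itself separates), hence by Menger there are $|X|$ vertex-disjoint paths $P_1,\dots,P_{|X|}$ from $X$ to $\hat V_t$, each $P_i$ using exactly one vertex $x_i\in X$.

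Next I would define the relabeling. For each original label $\hat V_s$ and each vertex $w\in\hat V_s$, decide where $w$ ``goes'': if $w\in X\cup C$ it stays; if $w\notin X\cup C$, then $w$ lies on the $\hat V_t$-side of the separation, and I replace $w$ by the vertex $x_i\in X$ whose disjoint path $P_i$ ``passes through the region of $T$ containing $s$''. Making this precise is the crux: one shows that for each disjoint path $P_i$, the union $\bigcup_{w\in P_i}T_w$ is a connected subtree of $\hat T$ (by property (ii) of the vine decomposition applied edge-by-edge along $P_i$, exactly as in the proof of Theorem~\ref{thm:minmax}), and these subtrees, together with the structure of $\hat T$, let us consistently assign each node $s$ to at most one $x_i$ for the purpose of substitution. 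Define $V_s$ to be $(\hat V_s\cap(X\cup C))$ together with $\{x_i : \hat V_s \text{ meets } P_i \text{ on its non-}C\text{ side ``above'' }s\}$, and then enlarge $V_t$ up to all of $X$. I would then verify the three conclusions in turn: (1) $V_t=X$ by construction; (2) for a leaf $s\neq t$, the substitution only removes vertices or swaps them for $x_i$'s, and a careful choice ensures no genuinely new vertex outside $\hat V_s$ is added at a leaf, giving $V_s\subseteq\hat V_s$ — this is where the leaf hypothesis is used; (3) $|V_s|\le|\hat V_s|$ because the map $w\mapsto x_i$ is injective on each label (disjointness of the $P_i$), so we never increase cardinality.

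Finally I would check that $(T,\l)$ is genuinely a vine decomposition of $G[X\cup C]$: for the connectivity condition (i), $T_v$ for $v\in C$ is unchanged hence connected, and $T_{x_i}$ is the union of the old $T_{x_i}$ with all nodes whose old label met $P_i$ on the far side, which is connected because it is a union of the connected subtree $\bigcup_{w\in P_i}T_w$ with the old $T_{x_i}$ sharing the node(s) where $P_i$ starts; for condition (ii), an edge $(u,v)$ of $G[X\cup C]$ either had both endpoints surviving (so $T_u,T_v$ still intersect or are adjacent as before) or — the only delicate case — has an endpoint in $X$, which is handled because $X\subseteq V_t$ and $t$ is adjacent in $T$ to whatever node carried the other endpoint's region. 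The main obstacle is the bookkeeping in the middle step: formalizing ``the $P_i$ passing through the region of $\hat T$ containing $s$'' so that the substitution is simultaneously well-defined, injective on each label, connectivity-preserving for each $T_{x_i}$, and non-label-increasing at leaves. I expect this to follow the template of Reed's argument, and I would organize it by rooting $\hat T$ at $t$ and pushing the substitution outward from $t$, so that each node inherits a consistent assignment from its parent.
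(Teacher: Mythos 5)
Your overall route is the same as the paper's: keep the tree $\hat{T}$, invoke Menger's theorem (via the hypothesis that every $X$--$\hat{V}_t$ separator has at least $|X|$ vertices) to obtain $|X|$ vertex-disjoint paths $P_1,\dots,P_{|X|}$ from $X$ to $\hat{V}_t$, note they avoid $C$, and re-label on the same tree so that each $x_i$ gets ``routed'' toward $t$, with (3) coming from an injective swap justified by disjointness of the paths. However, the step you yourself flag as the crux is exactly what is missing, and the candidate rule you gesture at does not work as stated: if you add $x_i$ to $V_s$ whenever $\hat{V}_s$ meets $P_i$, then a leaf $s\neq t$ whose label contains an interior vertex of $P_i$ but not $x_i$ would receive $x_i\notin\hat{V}_s$, destroying conclusion (2); there is no ``leaf hypothesis'' in the lemma to appeal to, and no separate step ``enlarge $V_t$ up to $X$'' should be needed. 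The paper's resolution is a purely tree-theoretic rule: let $Q_i$ be the path in $\hat{T}$ from $\hat{T}_{x_i}$ to $t$ (excluding the nodes of $\hat{T}_{x_i}$) and set $V_s=(\hat{V}_s\cap(X\cup C))\cup\{x_i: s\in Q_i\}$. Then (1) is automatic since every $Q_i$ ends at $t$ and $\hat{V}_t\cap C=\emptyset$; (2) is automatic since every node of $Q_i$ other than $t$ is internal in $\hat{T}$, so leaves never gain vertices; and (3) uses only the one implication you actually need, namely that $s\in Q_i$ forces $P_i\cap\hat{V}_s\neq\emptyset$ (because $s$ separates $\hat{T}_{x_i}$ from $t$ in $\hat{T}$, while $\bigcup_{w\in P_i}\hat{T}_w$ is a connected subtree joining them), whence each $x_i$ gained by $V_s$ but absent from $\hat{V}_s$ is paid for by a distinct discarded vertex $y_i\in P_i\cap\hat{V}_s$ with $y_i\notin X\cup C$.

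A secondary flaw: your check of condition (ii) for edges with an endpoint in $X$, ``handled because $X\subseteq V_t$ and $t$ is adjacent in $T$ to whatever node carried the other endpoint's region,'' is not sound, since $t$ need not be adjacent to that node. The correct observation is simpler and uniform: for every vertex $u\in X\cup C$ the new subtree $T_u$ contains the old one $\hat{T}_u$ (for $u\in C$ it equals it; for $x_i$ it is $\hat{T}_{x_i}\cup Q_i$), so for any edge of $G[X\cup C]$ the two subtrees still intersect or are adjacent because they already did in $(\hat{T},\hat{V})$. With the $Q_i$ rule in place your connectivity claim for $T_{x_i}$ is also immediate, since $Q_i$ is a path attached to $\hat{T}_{x_i}$ and ending at $t$.
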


Lemma 2.8 allows us to restrict any tree decomposition on $G$ to one on a subset of the vertices of $G$ that contains 
a special label. This subset is the union of a cutset $X$ and a component $C$ of $G-X$. The special label is $X$.

\begin{proof}
Let $A$ and $B$ be two subsets of vertices. Menger's theorem ensures 
that the maximum number of internally vertex disjoint path from $A$ to $B$ 
is equal to the minimum size of a separator for $A$ and $B$.
Thus, there are $|X|=k$ vertex disjoint paths from $\hat{V}_r$ to $X$. 
Let these
paths be $\{P_1,P_2,\dots,P_k\}$ where the endpoint of $P_i$ in $X$ is $x_i$.
Because $\hat{V}_t$ is disjoint from $C$, it follows that 
all the $P_i$ are also disjoint from $C$. 

Now consider the vine tree $\hat{T}$. Let $Q_i$ be the path in $\hat{T}$ from 
from $\hat{T}_{x_i}$ to $t$ (not including vertices of $\hat{T}_{x_i}$). 
We claim the desired vine decomposition $(T=\hat{T},\l)$ on $G[X\cup C]$ is
given by taking 
\begin{eqnarray*}
V_s \hspace{5pt} = \hspace{5pt} (\hat{V}_s\cap (X\cup C)) \cup \{x_i: s\in Q_i\}
\hspace{5pt} = \hspace{5pt} (\hat{V}_s\cap (X\cup C)) \cup \{x_i: P_i\cap \hat{V}_s\neq \emptyset\}
\end{eqnarray*}
Note that $s\in Q_i$ if and only if $P_i$ intersects $\hat{V}_s$ because the node $s$
separates $\hat{V}_{x_i}$ from $t$ in $\hat{T}$.
Now, first let's verify that this is a vine decomposition on $G[X\cup C]$.
Take any vertex $v\in C$. Then $T_v=\hat{T}_v$, a subtree of $T=\hat{T}$.
On the other hand, take any vertex $x_i\in X$. Then $T_v= \hat{T}_v \cup Q_i$ which, again, 
is a subtree of $T=\hat{T}$. Furthermore, for any edge $(u,v)$ in $G[X\cup C]$
we know that $\hat{T}_u$ and $\hat{T}_v$ either intersect or are adjacent in $\hat{T}$.
Therefore, $T_u$ and $T_v$ either intersect or are adjacent in $T=\hat{T}$.
 Thus we have a vine decomposition.
 
 Now let's show that the three required properties hold. 
 (1) Each path $P_i$ ends in $\hat{V}_t$. Thus, we have $x_i\in V_t$.
  Since $\hat{V}_t$ is disjoint from $C$, we obtain $V_t=X$. 
  (2) Take a leaf $s\neq t$ of $T$. The  paths $Q_i$ do not contain leaves and 
so $V_s\subseteq \hat{V}_s$. 
(3) Take a non-leaf $s$ of $T$. 
If $P_i$ intersects $\hat{V}_s$ then we have $x_i\in V_s$.
Suppose $x_i\notin \hat{V}_s$. But then there is a $y_i\in P_i\cap \hat{V}_s$ where $y_i\neq x_i$.
By disjointness, $y_i$ is not in $P_j$ for any $j\neq i$.
Moreover $y_i$ is not in $C$ since $P_i$ is disjoint from $C$. Hence, $y_i\in \hat{V}_s\setminus V_s$.
It follows that $|V_s|\le |\hat{V}_s|$. 
\end{proof}

We now have all the tools we need to prove Lemma~\ref{lem:stronger}.

\begin{proof}[of Lemma~\ref{lem:stronger}]
Assume (b) holds.
So suppose there is a vine decomposition where any node $s\in T$ with $|V_s|\ge k$ is a leaf {\bf and} does not 
provide a hitting set for $\cal{H}$. For every $\cal{H} \subseteq \cal{H}'$, the set $V_s$ is a not a 
hitting set for $\cal{H}$ either. But, as we saw when proving that $\nu(G)\ge \t(G)$ in the 
proof of Theorem \ref{thm:minmax}, the vine decomposition must contain a node $t$ that provides a 
hitting set $V_{t}$ for $\cal{H}'$. But, by $(ii)$, such a node 
has $|V_{t}|<k$. Thus, $\cal{H}$ is a thicket with hitting size at most $k-1$,
and then (a) does not hold.

We must now show that at least one of (a) or (b) holds.
We prove this by contradiction. 
In particular, take a counter-example $\cal{H}$ with the 
fewest number of hitting sets of size at most $k-1$.

Since (a) does not hold, every thicket $\cal{H}'$ containing $\cal{H}$ has hitting size at most $k-1$. 
In particular, $\cal{H}$ itself has at least one hitting set $X$ of cardinality at most $k-1$.
By assumption, no vine decomposition exists with 
property (b). Hence, $X\neq V(G)$ otherwise the trivial vine decomposition satisfies (b).
Now let $\{C_1,C_2,\dots,C_r\}$ be the connected components of $G\setminus X$. We will find vine
decompositions $(T^i,\l^i)$ of $G[X\cup C_i]$, for $1\le i\le r$,
that satisfy
\begin{enumerate}
\item[(1)] There is a node $t_i$ of $T^i$ with $V_{t_i}^i=X$.
\item[(2)]  Any node $s\in T^i$ with $|V^i_s|\ge k$ is a leaf
and $V_s$ is not a hitting set for $\cal{H}$.
\end{enumerate}
Since the $V_{t_i}^i$ are identical we may combine
the $T^i$-s by merging together all $t_i$ into a single node whose label is the union of labels of nodes we merged.
Observe that this will create a vine decomposition $T$ for $G$ satisfying (i) and (ii).
The theorem then follows. So we need to show that (1) and (2) hold for $G[X\cup C_i]$. 

First suppose $C_i$ does not intersect 
some thicket element $H_i$ in $\cal{H}$. Then we can define $T^i$ to be a tree with two nodes
$s$ and $t_i$, where $V^i_{t_i}=X$ and $V^i_s=C_i$. Trivially this is a valid vine decomposition.
Clearly, $|V^i_{t_i}|=|X|<k$ thus (1) holds. Now
$V^i_s$ is not a hitting set for $\cal{H}$, thus (2) holds.

Therefore, we may assume that $C_i$ is a hitting set for $\cal{H}$.
By the connectedness of $C_i$, we have that $\hat{\cal{H}}=\cal{H}\cup \{C_i\}$ is
also a thicket. Since (a) does not hold for $\cal{H}$, we know that
$\hat{\cal{H}}$ has a hitting set of size at most $k-1$. 
Moreover, $X$ is not a hitting set for $\hat{\cal{H}}$ since $X$ is disjoint from $C_i$.
Thus $\hat{\cal{H}}$ has fewer hitting sets of size at most $k-1$ than $\cal{H}$.
Consequently, $\hat{\cal{H}}$ is not a counterexample.
But (a) cannot hold for $\hat{\cal{H}}$ otherwise it holds for $\cal{H}\subset \hat{\cal{H}}$,
a contradiction. Hence (b) holds for $\hat{\cal{H}}$.

So take a vine decomposition $(\hat{T}, \hat{\ell})$ of $G$ satisfying (b)
for $\hat{\cal{H}}$. There must be a leaf
$t\in \hat{T}$ with $|\hat{V}_t|\ge k$ such  that $\hat{V}_t$ is a hitting set for $\cal{H}$,  
and $\hat{V}_t\cap C_i=\emptyset$. If not, $(\hat{T}, \hat{\ell})$ is a 
vine decomposition of $G$ satisfying (b) with respect to $\cal{H}$,
contradicting the definition of $G$. We will transform $\hat{T}$ into a
vine decomposition for $G[X\cup C_i]$ satisfying both (1) and (2).
Recall $X$ is a minimum hitting set for $\cal{H}$. Furthermore, $\hat{V}_t$ is
a hitting set for $\cal{H}$. 
Thus, by Lemma~\ref{lem:twohits}, every separator
for $X$ and $\hat{V}_t$ contains at least $|X|$ vertices.
Therefore, we may apply Lemma \ref{lem:separator} with $C=C_i, t=t$ to give a vine decomposition of
$G[X\cup C_i]$. 
This vine decomposition satisfies (1) and (2). To see (1), note that $t_i$ is a leaf with $V^i_{t_i}=X$.
Now $|V^i_s| \le |\hat{V}^i_s|$ and $|\hat{V}^i_s|\ge k$ only if $s$ is a leaf.
Thus, $s$ must be a leaf if $|V^i_s|\ge k$ and, hence, $V^i_s \subseteq \hat{V}^i_s$.
But $\hat{V}^i_s$ is not a hitting set for $\hat{\cal{H}}$. Consequently, either
$\hat{V}^i_s$ is not a hitting set for $\cal{H}$ or $\hat{V}^i_s\cap C_i=\emptyset$.
In the latter case, we then have that $\hat{V}^i_s \subseteq X$ and so $|\hat{V}^i_s|\le |X| \le  k-1$.
Thus (2) holds.
\end{proof}

\section{The Packing-Covering Ratio}\label{sec:ratio}
In this section, we prove that the packing-covering ratio is given exactly by 
the thicket number of the interaction graph (Theorem \ref{thm:1}).
%
To prove this, let's consider the lower and upper bounds separately.
\begin{theorem}
For any interaction graph $G$, the packing-covering ratio satisfies
\[ \tau(G) \ \ \leq_\exists\ \  \frac{\kappa(\mathcal{G})}{\rho(\mathcal{G})} \]
\end{theorem}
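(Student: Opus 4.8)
We need to exhibit, for any graph $G$, a coalition game $\mathcal{G}$ over $G$ whose packing-covering ratio $\frac{\kappa(\mathcal{G})}{\rho(\mathcal{G})}$ is at least $\tau(G)$. The natural candidate is the game built directly from a maximum thicket. By Theorem~\ref{thm:minmax} (thicket-vinewidth duality) fix a thicket $\mathcal{H}=\{H_1,\dots,H_p\}$ in $G$ with hitting size $\tau(\mathcal{H})=\tau(G)$. Define the valuation $v$ by setting $v(S)=1$ whenever $G[S]$ is connected and $S\supseteq H_i$ for some $i$, and $v(S)=0$ otherwise (taking the superadditive closure if needed so $v$ remains superadditive --- one can equivalently take $v(S)$ to be the maximum number of disjoint sets among the $H_i$ that $S$ contains, but since the $H_i$ pairwise intersect no two are disjoint, so that closure still gives $v(S)\le 1$ on connected sets, and $v(I)=1$). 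The point of the pairwise-intersection property of a thicket (as opposed to a bramble) is exactly that it forces the packing value down to $1$.

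\textbf{Key steps.} First I would verify $\rho(\mathcal{G})=1$: any integral dual packing picks disjoint viable coalitions each of value $1$; a coalition of value $1$ contains some $H_i$, and since $H_i\cap H_j\neq\emptyset$ for all $i,j$, no two such coalitions are disjoint, so at most one can be selected, giving $\rho(\mathcal{G})=1$ (it is exactly $1$ since $y_I=1$ is feasible and $v(I)=1$). Second, I would show $\kappa(\mathcal{G})\ge\tau(G)$: the covering LP in integers requires a nonnegative integer vector $x$ with $\sum_{i\in S}x_i\ge v(S)$ for all $S$; in particular, for each $H_i$ there is an agent $i\in H_i$ with $x_i\ge 1$ (if every $H_i$ satisfies $\sum_{i\in H_i}x_i\ge1$ with integer entries, the support of $x$ intersects every $H_i$). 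Hence the support of $x$ is a hitting set for $\mathcal{H}$, so $\sum_i x_i\ge|\operatorname{supp}(x)|\ge\tau(\mathcal{H})=\tau(G)$. Combining, $\frac{\kappa(\mathcal{G})}{\rho(\mathcal{G})}\ge\tau(G)$.

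\textbf{Main obstacle.} The only delicate point is making the valuation function genuinely superadditive (the standing assumption of the paper) while keeping $v(I)=1=\rho(\mathcal{G})$ and $\kappa(\mathcal{G})=\tau(G)$. Using the "number of disjoint $H_i$'s contained in $S$" definition and then taking its superadditive closure is the clean fix: superadditivity is automatic for a closure, $v(S)\le1$ on all viable $S$ because $\mathcal{H}$ pairwise intersects (so you can never assemble two disjoint value-carrying pieces), and $v(S)=1$ iff $G[S]$ is connected and $S\supseteq H_i$ for some $i$. One should double-check that the closure does not inadvertently raise $v$ on some connected set containing no $H_i$ --- it does not, since any partition witnessing value $\ge1$ must have a block containing some $H_i$, and that block is then a subset of $S$ inducing a connected subgraph containing $H_i$, forcing $S$ itself to contain $H_i$; this is the one routine verification I would write out carefully. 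Everything else (feasibility of $y_I=1$, the support argument for the covering side, disjointness failure on the packing side) is immediate from the definitions and the thicket axioms.
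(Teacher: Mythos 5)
Your construction is essentially the paper's own proof: build a simple game from a maximum thicket (value $1$ on coalitions carrying a thicket set), observe that pairwise intersection forces $\rho(\mathcal{G})=1$, and that any integral cover must hit every thicket set, giving $\kappa(\mathcal{G})\ge\tau(G)$. The only difference is your explicit superadditive-closure fix, a careful refinement of a point the paper glosses over, and it does not change the argument.
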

\begin{proof}
Given $G$, let $\mathcal{H}=\{H_1, H_2, \dots, H_p\}$ be a thicket with maximum 
hitting size $\t(G)$. We define a game $\mathcal{G}$ using the following valuation function:
$$
v(S) = 
\begin{cases}
1 & \quad \mbox{if $S\in \cal{H}$}\\
0 & \quad \mbox{otherwise}
\end{cases}
$$
Recall each set $H_i \in \mathcal{H}$ is connected. Thus $v$ is a valid valuation function
for a coalition game over the interaction graph $G$.
Furthermore, the sets $\{H_1, H_2, \dots, H_p\}$ are pairwise-intersecting.
Thus any partition $\mathcal{S}$ of the agents can include at most
one set from $\cal{H}$. So $\rho(\mathcal{G})= 1$.
On the other hand, with integral payoffs, we must provide
a dollar to at least one agent in each coalition in $\cal{H}$.
The cheapest way to do this is to give a dollar to each agent in
a minimum hitting set for $\cal{H}$. Thus
$\kappa(\mathcal{G})=\tau(\cal{H})= \t(G)$ and the packing-covering ratio
is exactly $\t(G)$.
\end{proof}

\begin{theorem}\label{thm:upper}
For any interaction graph $G$, the packing-covering ratio satisfies
\[ \frac{\kappa(\mathcal{G})}{\rho(\mathcal{G})} \ \ \leq_\forall\ \  \tau(G) \] 
\end{theorem}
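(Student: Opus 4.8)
The plan is to bound the integral covering number $\kappa(\mathcal{G})$ from above by $\tau(G)$ times the integral packing number $\rho(\mathcal{G})$, using an optimal vine decomposition of $G$ as the scaffolding; by Theorem~\ref{thm:minmax} such a decomposition has width $\nu(G) = \tau(G)$. Fix a coalition game $\mathcal{G}$ over $G$ with valuation $v$, and let $\mathcal{S}$ be an optimal integral solution to the \texttt{Packing-LP}, i.e.\ a collection of pairwise-disjoint viable coalitions maximizing $\sum_{S \in \mathcal{S}} v(S) = \rho(\mathcal{G})$; by superadditivity we may in fact take $\mathcal{S}$ to be a partition of $I$ refining the connected components, and $\rho(\mathcal{G}) = v(I)$ up to this standard reduction. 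The goal is to exhibit an integral covering vector $x$ with $\sum_i x_i \le \tau(G)\cdot \rho(\mathcal{G})$.

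First I would take a vine decomposition $(T,\ell)$ of $G$ of width $\tau(G)$. The key idea is that although a viable coalition $H$ need not have $T[H]$ "living near" a single node in the sense needed for covering, the argument from the proof of Theorem~\ref{thm:minmax} shows that for any \emph{pairwise-intersecting} family of connected sets there is a node of $T$ whose label hits them all. So I would partition the viable coalitions that can carry positive weight in the dual into groups according to which node of $T$ "captures" them, and pay for each group separately. Concretely: process the nodes of $T$; for a viable coalition $S$ that needs to be hit, charge it to a node $t$ with $S \cap V_t \neq \emptyset$, and observe that the $V_t$ over all internal nodes, together with leaf labels, form a laminar-like family of separators (by Lemma 2.7). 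The upshot I am aiming for is that one can select, for each unit of dual weight, a single node label $V_t$ of size at most $\tau(G)$ that hits every coalition assigned to that unit — because coalitions that are pairwise disjoint (hence appear in different parts of $\mathcal{S}$) get routed to different regions of the tree, while coalitions competing for the same dual slot are forced by the packing constraint to overlap and hence be Helly-captured by one node.

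More carefully, the cleanest route is probably an LP/duality-free counting argument: show that the minimum covering number equals $\max_{\mathcal{H} \subseteq \text{viable}} \tau(\mathcal{H})$ restricted to the \emph{support} of $v$, weighted appropriately — but since we want a clean statement against $\rho(\mathcal{G})$, I would instead argue: any minimal vertex set $X$ that must be "charged" in an optimal integral cover decomposes, along the vine tree, into pieces each of which is the hitting set of a sub-thicket, and each such sub-thicket has hitting size $\le \tau(G) = \nu(G)$; the number of pieces is controlled by $\rho(\mathcal{G})$ because disjoint coalitions force the tree regions to be disjoint. This is exactly dual to the lower-bound construction. The main obstacle I anticipate is the bookkeeping that converts "each node label hits many coalitions and has size $\le \tau(G)$" into the global bound $\kappa(\mathcal{G}) \le \tau(G)\cdot\rho(\mathcal{G})$ without double-counting: one must ensure the groups of coalitions assigned to distinct tree-nodes really do correspond to distinct units of dual value, which is where the pairwise-disjointness of $\mathcal{S}$ and the separator structure of Lemma 2.7 must be combined carefully. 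A secondary subtlety is handling coalitions $S$ with $|S \cap V_t|$ small for every $t$ — here the Helly-type argument from Theorem~\ref{thm:minmax} (spanning tree $R_i$ inside $H_i$, connectedness of $T[H_i]$, Helly property) is the right tool and should be invoked rather than reproven.
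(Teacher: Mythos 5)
Your setup is the right one --- fix a vine decomposition of width $\nu(G)=\tau(G)$ via Theorem~\ref{thm:minmax} and aim for $\kappa(\mathcal{G})\le \tau(G)\cdot\rho(\mathcal{G})$ --- but the core of the argument is missing, and the two claims you offer in its place do not hold. First, the claim that ``coalitions competing for the same dual slot are forced by the packing constraint to overlap and hence be Helly-captured by one node'' would amount to partitioning all positive-value viable coalitions into at most $\rho(\mathcal{G})$ pairwise-intersecting families; but the maximum number of pairwise disjoint coalitions does not control such a partition (covering the intersection relation by ``cliques'' of pairwise-intersecting sets can require strictly more classes than the packing number --- already an intersection pattern forming a $5$-cycle needs three classes against a packing of two). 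Second, ``disjoint coalitions force the tree regions to be disjoint'' is the false converse of the true implication: if $T(Q_1)$ and $T(Q_2)$ are disjoint node sets then $Q_1$ and $Q_2$ are vertex-disjoint (any common vertex $v$ would put $T_v$ in both), but two vertex-disjoint coalitions can have heavily overlapping subtrees, since a single label $V_t$ may contain vertices of both. Finally, your charging is phrased entirely in hitting-set terms, which at best addresses simple games; the theorem is for arbitrary valuations, where the cover must satisfy $\sum_{i\in S}x_i\ge v(S)$ for every viable $S$, not merely intersect $S$.

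What actually closes the gap in the paper is an explicit two-sided construction on the rooted vine tree. Since $T(Q)=\bigcup_{v\in Q}T_v$ is connected for every viable coalition $Q$, each $Q$ has a well-defined root node $t_Q$. Pad all labels to size exactly $\tau(G)$ and allocate bottom-up: at node $t$, pay every agent of $\ell(t)$ the largest residual $r(Q^*_t,t)=\max\bigl(v(Q)-x(Q,t),0\bigr)$ over coalitions rooted at $t$, where $x(Q,t)$ is what $Q$ has already received below $t$; this yields a feasible integral cover of cost $\tau(G)\cdot\sum_t r(Q^*_t,t)$. Then a packing of value at least $\sum_t r(Q^*_t,t)$ is extracted greedily, processing nodes from the root downwards: at an undeleted node $t$ with positive residual, take $Q^*_t$ and delete all nodes of $T(Q^*_t)$; disjointness of the chosen coalitions follows from the root structure (incomparable roots give disjoint rooted subtrees, and a root lying inside an already-deleted region is never selected), and a potential-function count gives the claimed value. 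Note the direction of the disjointness argument here is exactly the one you have available (disjoint tree regions imply disjoint coalitions), not its converse. Without this construction, or something playing the same role, the proposal does not establish the bound.
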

\begin{proof}
Take any game $\mathcal{G}$ with valuation function $v$ over an interaction graph $G$.
Let $T=(N,L)$ be a vine decomposition of $G$. We may assume each label in the vine decomposition 
has size $\tau(G)$; if not, simply add vertices from the labels of adjacent nodes. Root the (vine) tree $T$ at an 
arbitrary node $r$. In turn, we may now consider each subtree of $T$ to be rooted at its (unique) 
node closest to the root $r$.
We claim that, for each coalition $Q$, the nodes $T(Q)=\bigcup_{v\in Q} T_v$ induce a connected graph
in $T$. Indeed, by viability of the coalition, we know $G[Q]$ is connected. Thus, for every edge $(u,v)$ in the
subtree $G[Q]$, we have that $T_u \cup T_v$ is connected, by definition of a vine decomposition. 
Then, since connectivity is transitive, $T(Q)$ induces a subtree in $T$.

Hence, we may define the \emph{root of a coalition}, $t_Q$, to be the root of $T(Q)$.
We are ready now to describe a payment allocation ${\bf x}$ that proves the theorem. To simplify the 
analysis, instead of allocating values to agents directly, we have an allocation $x_{i,t}$ for each agent $i$ 
and each node $t$ in $T_v$. The total allocation for agent $i$ is then simply $x_i = \sum_{t\in T_i} x_{i,t}$.
We work bottom-up from the leaves to the root, and allocate to all vertices in a label $\ell(t)$ in turn.
At a node $t$, for each coalition $Q$ whose root is $t$, we compute the total amount $x(Q, t)$ 
allocated to $Q$ in descendants of $t$ and the residual value $r(Q, t) = \max(v(Q) - x(Q, t), 0)$ we 
still need to add to $Q$ to create a valid allocation. If $Q^*_t$ is the coalition of maximum residual 
value, we set $x_{v,t} = r(Q^*_t, t)$ for every agent $i$ in $\ell(t)$.

By the choice of $Q^*_t$, 
we can conclude (simply by looking only at allocations for $t$ and its descendants) 
that $x(Q)\ge v(Q)$, for every coalition $Q$ with root $t$. 
Furthermore, if $r(Q^*_t)$ is positive then $x(Q^*) = v(Q^*)$. 
Thus, since every coalition has a root, $x(Q) \ge v(Q)$ for every feasible coalition $Q$. 
Now let's bound the total cost of the allocation.
By construction, the cost of our allocation is $\sum_t r(Q^*_t, t)\cdot |\ell(t)|$.
Therefore, since all labels have size $ |\ell(t)|=\tau(G)$, we have that
$$\kappa(\mathcal{G)} \le \tau(G)\cdot \sum_t r(Q^*_t, t)$$ 
Therefore, it suffices to prove that
$$\sum_t r(Q^*_t, t)\le \rho(\mathcal{G)}$$
To do this we construct an integral packing $\mathcal{Q}$ of coalitions as follows. Consider the nodes of $T$ from root to leaves 
(in a postorder traversal). Initially no node is marked as deleted. At a node $t$ of $T$, if $t$ is not 
marked as deleted and the residual $r(Q^*_t, t)$ is positive, add $Q^*_t$ to $\mathcal{Q}$ and mark 
all nodes in $T(Q^*_t)$ as deleted. Otherwise, mark $t$ as deleted.
We bound the packing value of $\mathcal{Q}$ inductively using a potential function. This potential 
is defined as the total allocation in remaining nodes ($i.e.$, nodes that are not marked as deleted). 
Initially this potential is $\tau(G)\cdot \sum_t r(Q^*_t, t)$. 
When we add an element to $\mathcal{Q}$ the potential drops by $\tau(G)\cdot x(Q^*_t) = \tau(G)\cdot v(Q^*_t)$, whilst the 
value of the packing $\mathcal{Q}$ increases by $v(Q^*_t)$. By the end, every node is marked as deleted.
Consequently, our potential functions is zero and, so, the value of our packing has increased by exactly 
$\frac{1}{\tau(G)}\cdot \tau(G)\cdot \sum_t r(Q^*_t, t)  = \sum_t r(Q^*_t, t)$. Thus $\mathcal{Q}$ has the
desired value. It remains to verify the coalitions in $\mathcal{Q}$ are disjoint. So take any two coalitions,
say $Q^*_{t_1}$ and $Q^*_{t_2}$. 
If $t_1$ and $t_2$ do not form an ancestor-descendant pair in $T$ then $Q^*_{t_1}$ and $Q^*_{t_2}$
are disjoint since the nodes containing vertices of $Q^*_{t_1}$ are in the tree rooted in $t_1$
while those of $Q^*_{t_2}$ are in the tree rooted in $t_2$.
So, without loss of generality, assume that $t_2$ is an ancestor of $t_1$. It follows that
$t_1\notin T(Q^*_{t_2})$, otherwise $Q^*_{t_1}$ would not have been selected. Thus 
$Q^*_{t_1}$ and $Q^*_{t_2}$ are again disjoint.
\end{proof}

We conclude this section with a brief discussion on computational implications.
Computing the treewidth of a graph $G$ is an NP-hard problem~\cite{ACP87}. 
Whilst no formal proof is given here, the NP-completeness arguments 
can be extended to vinewidth. The existence of a constant-factor approximation algorithm 
for the treewidth running in polynomial time remains an important open problem. 
There does, however, exist a constant-factor approximation algorithm for the treewidth in FPT time
parameterized by the treewidth \cite{Bodl88}. 
Since, by Theorem \ref{thm:approx}, the vinewidth is within a factor two of the treewidth, this provides a 
constant-factor approximation algorithm for the vinewidth of the graph.

\section{VC-Dimension}\label{sec:VC}
A ubiquitous measure of the complexity of a set-family is its VC-dimension \cite{VC71}. 
As we have a set-family derived from the interaction graph $G$, it is natural to ask whether 
we can relate the packing-covering ratio to the VC-dimension. We explore this question in this section.

First, recall the definition of VC-dimension.
Given a ground set $I$ and a collection $\mathcal{R}=\{R_1,\dots, R_m\}$
of subsets of $I$,
we say that $X\subseteq I$ is {\em shattered} by $\mathcal{R}$
if, for all $Y\subseteq X$, there exists some $R_j\in \mathcal{S}$ such that $Y=X\cap R_j$.
The {\em VC-dimension} of $(I, \mathcal{R})$ is then the maximum cardinality of a
shattered set.

Interestingly, for any set-family, in a simple game\footnote{A \emph{simple game} is a coalition game where the value of
every coalition is either $0$ or $1$.} the primal integrality gap can also be upper bounded by the
VC-dimension. Specifically, 
\begin{theorem}Haussler, Welzl~\cite{HW86}\label{thm:VC}
Let  $\mathcal{R}$ be a set-family with VC-dimension~$d$. Then the primal integrality gap of any simple game $\mathcal{G}$ whose 
viable coalitions are $\mathcal{R}$ satisfies 
$$\frac{\kappa(\mathcal{G})}{\kappa^f(\mathcal{G})} \ \ \le\ \ d\cdot \log d$$
\end{theorem}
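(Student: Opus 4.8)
The plan is to invoke the classical $\epsilon$-net theorem of Haussler and Welzl and translate it into the language of our covering LP. First I would observe that in a simple game $\mathcal{G}$ whose viable coalitions form the set-family $\mathcal{R}$, the covering LP is exactly a fractional set cover instance: we must choose a weighting $x$ on the ground set $I$ so that every set $R_j\in\mathcal{R}$ receives total weight at least $1$ (coalitions of value $0$ impose no binding constraint), and $\kappa^f(\mathcal{G})$ is the minimum total weight. Its integral optimum $\kappa(\mathcal{G})$ is the minimum hitting set for $\mathcal{R}$. So the claim is precisely the statement that fractional hitting sets can be rounded to integral hitting sets losing only an $O(d\log d)$ factor, which is the dual/transposed form of the $\epsilon$-net bound for set systems of VC-dimension $d$.

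Next I would carry out the rounding argument explicitly. Let $x^*$ be an optimal fractional solution of the covering LP, with $\sum_{i\in I} x^*_i = \kappa^f(\mathcal{G})$. Normalizing, $p_i = x^*_i/\kappa^f(\mathcal{G})$ is a probability distribution on $I$ under which every set $R_j$ has measure at least $1/\kappa^f(\mathcal{G})$. By the Haussler--Welzl $\epsilon$-net theorem applied with $\epsilon = 1/\kappa^f(\mathcal{G})$, a random sample of $O\!\bigl(\tfrac{d}{\epsilon}\log\tfrac{1}{\epsilon}\bigr) = O\!\bigl(d\,\kappa^f(\mathcal{G})\log\kappa^f(\mathcal{G})\bigr)$ points drawn from $p$ is, with positive probability, an $\epsilon$-net — i.e. it meets every set of measure $\ge\epsilon$, hence every set in $\mathcal{R}$. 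Such a sample is therefore a hitting set, so $\kappa(\mathcal{G}) = O(d\,\kappa^f(\mathcal{G})\log\kappa^f(\mathcal{G}))$, giving a ratio of $O(d\log\kappa^f(\mathcal{G}))$; the sharper $O(d\log d)$ bound follows from the refinement of the $\epsilon$-net theorem that replaces $\log(1/\epsilon)$ by $\log d$ (or, equivalently, by iterating the crude bound a constant number of times until the fractional cost is polynomial in $d$). I would note that the dual-packing interpretation — a fractional packing of the sets — can be used to run the same argument on the transpose system, which also has VC-dimension bounded in terms of $d$, but the direct net argument above is cleanest.

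The main obstacle is bookkeeping the right version of the $\epsilon$-net theorem: the naive bound only gives $O(d\log\kappa^f)$, and getting the stated $O(d\log d)$ requires either the stronger net theorem or an iterated-rounding trick, and in either case one must be careful that the dependence is on the VC-dimension $d$ of $\mathcal{R}$ itself and not on some larger derived quantity. A secondary point to check is that restricting attention to value-$1$ coalitions is legitimate: since $v(S)\in\{0,1\}$, the constraints $\sum_{i\in S}x_i\ge v(S)$ with $v(S)=0$ are vacuous, so the covering LP genuinely is set cover over $\mathcal{R}$, and superadditivity of $v$ plays no role here. Once these points are nailed down, the theorem is immediate from the cited result, so the write-up is short — essentially a dictionary between "primal integrality gap of a simple game" and "integrality gap of set cover for a VC-dimension-$d$ family."
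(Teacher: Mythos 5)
The paper offers no proof of this statement at all---it is quoted verbatim from Haussler and Welzl---so the only question is whether your derivation stands on its own. Your dictionary between the covering LP of a simple game and fractional hitting set is right: the value-$0$ constraints are vacuous, the value-$1$ coalitions form a subfamily of $\mathcal{R}$ (hence still of VC-dimension at most $d$), $\kappa^f(\mathcal{G})$ is the fractional transversal number of that subfamily, $\kappa(\mathcal{G})$ its transversal number, and sampling from the normalized optimal fractional solution with $\epsilon = 1/\kappa^f(\mathcal{G})$ turns an $\epsilon$-net into an integral hitting set. Up to that point your argument is exactly the intended content of the citation.

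The genuine gap is the last step, where you try to convert $O(d\log\kappa^f)$ into $d\log d$. There is no refinement of the $\epsilon$-net theorem that replaces $\log\frac{1}{\epsilon}$ by $\log d$ for abstract set systems: by the Koml\'os--Pach--Woeginger lower bound there are families of VC-dimension $2$ in which every $\epsilon$-net (with respect to the uniform measure) must have size $\Omega(\frac{1}{\epsilon}\log\frac{1}{\epsilon})$, and on such an instance the uniform weighting shows $\kappa^f \le \frac{1}{\epsilon}$ while every hitting set has size $\Omega(\frac{1}{\epsilon}\log\frac{1}{\epsilon})$, i.e.\ the integrality gap is of order $\log \kappa^f$ with $d$ fixed. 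The ``iterate the crude bound until the fractional cost is polynomial in $d$'' fallback is also unavailable: the fractional optimum is an invariant of the instance, and no rounding step decreases it, so iteration never removes the $\log\kappa^f$ term. What your argument actually establishes is $\kappa(\mathcal{G}) = O\left(d\,\kappa^f(\mathcal{G})\log(d\,\kappa^f(\mathcal{G}))\right)$, i.e.\ a primal integrality gap of $O\left(d\log (d\,\kappa^f(\mathcal{G}))\right)$---which is precisely what the Haussler--Welzl net theorem gives, and is the form in which the cited bound should be read; the literal $d\log d$, interpreted as independent of $\kappa^f(\mathcal{G})$, cannot be reached by this route (and fails for the abstract families above). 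Since the theorem is used in the paper only as a qualitative foil for the sharper graphical bound of Theorem~\ref{thm:VC1}, the honest fix is to state and prove the $O(d\log(d\,\kappa^f))$ version rather than to chase the stated constant-free form.
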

We can strengthen this result when the family of coalitions is induced by an interaction graph $G=(V,E)$. 
 In particular, let the {\em graphical set family}
 $\mathcal{S}= \{S_1,\dots, S_r\}$ of $G$ be the set of {\em all} connected induced subgraphs of $G$.
 We then define the \emph{VC-dimension} of the graph $G$ to be the 
 VC-dimension of the graphical set family of $G$.
\begin{theorem}\label{thm:VC1}
Let $G$ be a graph with VC-dimension $d$. Then, restricting to simple games, the packing-covering ratio satisfies
$$\frac{\kappa(\cal{G})}{\rho(\cal{G})} \ \ \le_\forall\ \  d+1$$
\end{theorem}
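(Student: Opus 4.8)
The plan is to connect the VC-dimension of the graph to its thicket number, and then invoke Theorem~\ref{thm:1} (or rather Theorem~\ref{thm:upper}). Concretely, I would prove that $\tau(G) \le d+1$, where $d$ is the VC-dimension of the graphical set family of $G$; combined with the $\le_\forall$ direction of Theorem~\ref{thm:1}, this immediately gives $\frac{\kappa(\cal{G})}{\rho(\cal{G})} \le_\forall \tau(G) \le d+1$ for every (in particular every simple) game over $G$. So the real content is the purely combinatorial inequality $\tau(G) \le d+1$.

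To prove $\tau(G)\le d+1$, I would argue contrapositively: suppose $G$ has a thicket $\cal{H}=\{H_1,\dots,H_p\}$ with hitting size $\tau(\cal{H}) \ge d+2$; I will exhibit a shattered set of size $d+1$, contradicting that the VC-dimension is $d$. The key observation is the ``one-element packing'' structure of thickets: every pair $H_i,H_j$ intersects, so no vertex-set of size $\tau(\cal{H})-1$ can be a hitting set, meaning that for every set $X$ of size $\tau(\cal{H})-1 = d+1$ there is some $H_{i(X)} \in \cal{H}$ with $H_{i(X)} \cap X = \emptyset$. I would like to turn this ``every small set is missed by some member'' property into shattering. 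The cleanest route: pick a minimal hitting set $X^*$ for $\cal{H}$; then $|X^*| = \tau(\cal{H}) \ge d+2 \ge d+1$, so choose a subset $W \subseteq X^*$ with $|W| = d+1$. For each $Y\subseteq W$, I want an $H\in\cal{H}$ with $H\cap W = W\setminus Y$ (equivalently $H$ contains $W\setminus Y$ and avoids $Y$); since $\cal{H}$ consists of connected sets, the graphical set family $\cal{S}$ (all connected induced subgraphs) certainly contains every such $H$, so it suffices to find, for each $Y$, a connected set containing $W\setminus Y$ and disjoint from $Y$. The hard part — and I expect it to be the main obstacle — is guaranteeing both ``contains $W\setminus Y$'' and ``is connected'' simultaneously from the thicket hypothesis alone; the thicket only directly gives ``some member avoids a given small set,'' not control over what that member contains.

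To get around this, I would exploit the connectivity of thicket members and the fact that they pairwise intersect more carefully, perhaps mimicking the proof of the $\tau(G)\le\nu(G)$ direction of Theorem~\ref{thm:minmax}: build a vine decomposition of $G$ of width $\tau(G)$, and use its internal node labels (which are separators by Lemma~2.7) together with the ``$d$ vertex-disjoint paths'' consequence of Lemma~\ref{lem:twohits} to assemble, for each $Y\subseteq W$, a connected subgraph realizing the trace $W\setminus Y$. Alternatively — and this is likely the intended ``derive from the thicket bound'' remark in the paper — I would observe that the construction in Theorem~3.4 already packages a thicket of hitting size $\tau(G)$, and a shattered set in the associated set family of size $\tau(G)-1$ can be read off from the disjoint-path structure; then $d \ge \tau(G)-1$, i.e. $\tau(G)\le d+1$. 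In either case the skeleton is: (1) reduce to proving $\tau(G)\le d+1$; (2) take a worst-case thicket and a minimal hitting set $X^*$; (3) using connectivity of the $H_i$ and disjoint paths between hitting sets, construct for each subset of a $(d+1)$-element subset $W\subseteq X^*$ a connected induced subgraph whose trace on $W$ is the prescribed subset; (4) conclude $W$ is shattered, contradicting the VC-dimension being $d$. The delicate point throughout is step (3): producing genuinely \emph{connected} witnesses with prescribed traces, for which I would lean on Lemma~\ref{lem:twohits}'s disjoint paths and Lemma~\ref{lem:separator}-style surgery on a minimum-width vine decomposition.

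Finally, I would double-check the boundary/degenerate cases — e.g. when $\tau(G)\le 1$ (trees), where $d\ge 0$ trivially suffices, and when $G$ has no edges — to make sure the ``$+1$'' is correctly placed, and I would note that the bound is essentially tight on cliques, where $\tau(K_n)=\lceil n/2\rceil$ and the VC-dimension is known to be of the same order, so one cannot hope to replace $d+1$ by anything much smaller in general. This tightness commentary also flags why the theorem is phrased with $d+1$ rather than, say, $d$.
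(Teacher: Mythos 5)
Your overall strategy---reduce to the purely combinatorial inequality $\tau(G)\le d+1$ and then invoke Theorem~\ref{thm:upper}---is legitimate, and in fact the paper itself takes that route in Theorem~\ref{thm:VC2} and Corollary~\ref{cor:VC2}, where it proves the stronger bound $\tau(G)\le d$ (the paper's proof of Theorem~\ref{thm:VC1} proper is a different, direct argument: justifying coalitions for a minimum hitting set of the game's own coalition family, a shattering argument showing each justifying coalition meets at most $d$ of the others, and a greedy selection extracting a packing of size at least $|X|/(d+1)$, which is where the ``$+1$'' comes from). The problem is that your proposal never actually proves the inequality you reduced to. You explicitly leave step (3)---for each $Y\subseteq W$ produce a \emph{connected} induced subgraph whose trace on $W$ is exactly the prescribed set---as an acknowledged obstacle, and the tools you gesture at do not supply it: the disjoint paths of Lemma~\ref{lem:twohits} run between two hitting sets with no control over which vertices of $W$ they pass through, so they cannot pin down a trace, and Lemma~\ref{lem:separator}-style surgery on a vine decomposition rearranges labels but does not manufacture connected sets with prescribed intersections. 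As written, the argument has a genuine gap at its core.

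The missing idea is short and uses exactly the feature that distinguishes thickets from brambles. Let $\mathcal{H}$ be a thicket of hitting size $\tau(G)$ and $X$ a \emph{minimum} hitting set. By minimality, every $x\in X$ has a justifying set $H_x\in\mathcal{H}$ with $H_x\cap X=\{x\}$. Since thicket members pairwise intersect, for any nonempty $Y\subseteq X$ the union $\bigcup_{x\in Y}H_x$ is connected, hence lies in the graphical set family, and meets $X$ exactly in $Y$; this is the paper's proof that $X$ is shattered and $\tau(G)\le d$. If you insist on your $d+1$ formulation (avoiding any convention about realizing the empty trace), assume $\tau(G)\ge d+2$, pick $W\subset X$ with $|W|=d+1$, realize every nonempty $Y\subseteq W$ by $\bigcup_{x\in Y}H_x$ and the empty trace by $H_{x'}$ for some $x'\in X\setminus W$; then $W$ is shattered, a contradiction. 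Note also that your closing tightness remark is off: the ``$+1$'' in Theorem~\ref{thm:VC1} is an artifact of the greedy-packing step in the paper's direct proof, not forced by examples---Corollary~\ref{cor:VC2} replaces $d+1$ by $d$ and extends the bound to non-simple games.
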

\begin{proof}
%
Let $\cal{G}$ be a coalition game over interaction graph $G$. Let $\cal{R}$ be the set of coalitions of $\cal{G}$.
Note that $\cal{R}$ is a subset for the graphical set family $\cal{S}$ of $G$. 
Let $X$ be a minimum hitting set of $\cal{R}$.
%
We want to show that $\rho(\cal{G}) \ge \frac{1}{d+1}\cdot |X|$.
Now by the minimality of $X$ there is a justifying coalition $R_x\in \cal{R}$ for each vertex $x\in X$;
specifically, there exists a coalition $R_x$ such that $R_x \cap X =\{x\}$. Let $\cal{J}\subseteq \mathcal{R}$ be the set of
justifying coalitions. So $|\cal{J}|=|X|$.

First, assume there is a set $R^*\in \cal{J}$ that intersects $k$ of the other justifying coalitions. 
 Let $\cal{J}'=\{R_{x_1},\ldots,R_{x_k}\}$ be the collection of $k$
 justifying coalitions that intersect $R^*$. We now show $Y=\{x_1,\ldots,x_k\}$ is shattered for the graphical set family $\cal{S}$. We claim that for any $Y' \subseteq Y$, the set $R'=R^* \cup \bigcup_{x \in Y'} R_{x}$ intersects $Y$ in exactly $Y'$. The set $R'$ is connected 
since every $R_x$ is connected and, by construction, each of them intersects the connected set $R^*$. 
Thus $R'$ is in the graphical set family $\mathcal{S}$
and is connected. Since $R^* \cap Y$ is empty, we have $R' \cap Y = Y'$, as desired. So $Y$ can be shattered
(for the graphical set family $\cal{S}$), and thus $k\le d$
since the VC-dimension of $G$ is at most $d$.

Consequently, we may assume that every set $\cal{J}$ intersects at most $d$
of the other justifying coalitions. Then we can easily obtain a disjoint packing of $\frac{1}{d+1}\cdot |X|$ coalitions in $\cal{J}$.
Simply select any coalition $R$ in $\cal{J}$; then remove $R$ and the (at most) $d$ coalitions it
intersects from $\cal{J}$, and recurse. The theorem follows.
\end{proof}

But, we have seen that the thicket number gives the packing-covering ratio
exactly. Thus we should be able to upper bound the thicket number of any graph by a function of the 
VC-dimension. Indeed this is the case.

\begin{theorem}\label{thm:VC2}
Let $G$ be a graph of with VC-dimension $d$. 
Then the thicket number $\tau(G)$ is at most $d$.
\end{theorem}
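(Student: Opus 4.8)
The plan is to show that any thicket $\mathcal{H}$ in $G$ whose hitting size is $k$ forces the existence of a shattered set of size $k$, hence $k \le d$. The natural target for the shattered set is a minimum hitting set $X = \{x_1, \dots, x_k\}$ of $\mathcal{H}$. First I would observe, exactly as in the proof of Theorem \ref{thm:VC1}, that by minimality of $X$ each $x_i$ has a \emph{justifying} thicket element $H_i \in \mathcal{H}$ with $H_i \cap X = \{x_i\}$: otherwise $X \setminus \{x_i\}$ would already be a hitting set. So I have $k$ connected sets $H_1, \dots, H_k$, pairwise intersecting (since $\mathcal{H}$ is a thicket, which is the crucial place where we use that thickets intersect rather than merely being adjacent), with $H_i \cap X = \{x_i\}$.

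Next I would build, for each subset $Y' \subseteq X$, a connected set $R_{Y'}$ in the graphical set family with $R_{Y'} \cap X = Y'$. The obvious candidate is $R_{Y'} = \bigcup_{x_i \in Y'} H_i$. Since the $H_i$ pairwise intersect and each is connected, this union is connected, so $R_{Y'} \in \mathcal{S}$; and $R_{Y'} \cap X = \bigcup_{x_i \in Y'}(H_i \cap X) = Y'$ because $H_i \cap X = \{x_i\}$. This handles every nonempty $Y'$. The one case needing care is $Y' = \emptyset$: the empty union is not a valid member of $\mathcal{S}$ (it isn't a connected \emph{subgraph}), so I cannot directly realize the empty trace this way. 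The standard fix is to note that to shatter a set of size $k$ it suffices to realize all $2^k$ traces, and we may instead shatter $X$ by first shrinking: pick any $x_j \in X$, let $X' = X \setminus \{x_j\}$ of size $k-1$, and realize $Y' = \emptyset$ on $X'$ by $H_j$ itself (which meets $X$ only in $x_j$, hence meets $X'$ not at all). Combined with the unions $\bigcup_{x_i \in Y'} H_i$ for the other $Y' \subseteq X'$, this shatters $X'$, giving $k - 1 \le d$. To recover the sharper bound $k \le d$ I would instead argue directly that $X$ itself is shattered: realize every nonempty trace by the union construction, and realize $\emptyset$ by observing that if $k \ge 1$ then some single $H_i$ has a trace on $X$ equal to $\{x_i\}$, and—here is where one must be slightly careful—actually we want $\emptyset$; so use the component-of-$G$ trick from Theorem~\ref{thm:VC1}, taking a connected set disjoint from $X$ (any single vertex outside every $H_i \cap X$, or just any vertex $v \notin X$, whose singleton $\{v\}$ is connected and traces to $\emptyset$); if no such vertex exists then $V = X$ and $k = |V| \le d$ trivially since $V$ is shattered iff... — this edge case I would handle by noting $\tau(G) \le |V|$ always and that a clique on $\le d+1$... .

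The main obstacle is precisely this $\emptyset$-trace bookkeeping and getting the constant exactly $d$ rather than $d+1$: the union-of-justifiers construction is clean for nonempty subsets but the theorem as stated claims $\tau(G) \le d$, so the argument must squeeze out the empty trace without losing a unit. I expect the cleanest route is to realize $\emptyset$ using a vertex outside $X$ (which exists unless $X = V$, and if $X = V$ then $\mathcal{H}$ with hitting size $|V|$ means no proper subset of $V$ is a hitting set, which combined with the shattering of all nonempty subsets still yields $|V| - 1$ shattered, and a separate easy argument bumps it to $|V|$). Everything else—connectedness of unions, validity of the valuation-free set-system reasoning, the appeal to VC-dimension $d$ of the graphical set family $\mathcal{S}$—is routine given the definitions and the proof technique already used in Theorem~\ref{thm:VC1}.
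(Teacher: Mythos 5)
Your main construction is exactly the paper's proof: take a minimum hitting set $X$ of a maximum thicket, extract for each $x\in X$ a justifying element $H_x$ with $H_x\cap X=\{x\}$, and realize every nonempty trace $Y\subseteq X$ by $\bigcup_{x\in Y}H_x$, which is connected precisely because thicket elements pairwise intersect (the point where thickets, rather than brambles, are essential). The paper in fact stops there --- it never discusses the empty trace at all --- so the only place you diverge is the bookkeeping you agonize over at the end, and that part of your write-up is genuinely unfinished: two sentences trail off, and the $X=V$ case is never closed, so as written you only have $\tau(G)\le d+1$ secured. The loose end does close, and more simply than any of the routes you sketch. If $\tau(G)\ge 2$, pick distinct $x_1,x_2\in X$; their justifying sets intersect in some vertex $w$, and since $H_{x_i}\cap X=\{x_i\}$ this $w$ cannot lie in $X$, so the singleton $\{w\}$ is a connected set with trace $\emptyset$ on $X$ --- in particular $X=V$ simply cannot occur, since it would force the justifying sets to be disjoint singletons, contradicting pairwise intersection. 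If $\tau(G)\le 1$ the bound is immediate for any graph with at least two vertices (a singleton realizes the trace $\{x\}$ and any other vertex realizes $\emptyset$); the one-vertex graph is a degenerate exception under the convention that the empty set is not a connected subgraph, but the paper plainly ignores this case too. So: correct on the main content and the same approach as the paper, with an edge case you rightly flagged (and which the paper silently skips) left dangling rather than resolved; the observation about the intersection point of two justifiers is all that is needed to finish it.
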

\begin{proof}
Let $\mathcal{H}= \{H_1,\dots, H_p\}$ be a thicket with hitting number
$\tau(G)$, and let $X$ be a minimum hitting set for $\mathcal{H}$.
We claim that $X$ is shattered. To see this take any $x\in X$.
There exists some justifying set $H_x\in \mathcal{H}$ such that $H_x\cap X= \{x\}$, otherwise
$X$ is not minimal. Now take any $Y\subseteq X$.
Without loss of generality, let $Y=\{x_1, x_2, \dots , x_r\}$. 
Now each $H_{x_i}$ induces a connected graph. Moreover, the sets $H_{x_i}$  pairwise-intersect.
Thus $W=\cup_{i=1}^r H_{x_i}$
also induces a connected graph. Thus $W$ is in the graphical set family $\mathcal{S}$.
Since $H_{x_1}\cap X= \{x_i\}$ we have that 
$$X\cap W = X\cap \left(\cup_{i=1}^r H_{x_i}\right) = \cup_{i=1}^p \left(X\cap B_{x_i}\right) = \cup_{i=1}^r x_i =Y$$
Thus $X$ is shattered.
\end{proof}
Combining Theorem \ref{thm:VC2} with Theorem \ref{thm:1}, we obtain the following strengthening of
Theorem \ref{thm:VC1}, which also applies to non-simple games.
\begin{corollary}\label{cor:VC2}
Let $G$ be a graph with VC-dimension $d$. Then the packing-covering ratio satisfies
$$\frac{\kappa(\cal{G})}{\rho(\cal{G})} \ \ \le_\forall\ \  d$$
\end{corollary}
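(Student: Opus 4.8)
The plan is to simply chain together the two results already established: the exact characterization of the packing-covering ratio by the thicket number (Theorem~\ref{thm:1}, whose $\le_\forall$ direction is Theorem~\ref{thm:upper}) and the bound on the thicket number by the VC-dimension (Theorem~\ref{thm:VC2}). Concretely, I would first invoke Theorem~\ref{thm:upper}: for \emph{every} coalition game $\cal{G}$ over the interaction graph $G$ (simple or not), the packing-covering ratio satisfies $\frac{\kappa(\cal{G})}{\rho(\cal{G})} \le \tau(G)$. I would then invoke Theorem~\ref{thm:VC2}: since $G$ has VC-dimension $d$, its thicket number obeys $\tau(G) \le d$. Composing the two inequalities yields $\frac{\kappa(\cal{G})}{\rho(\cal{G})} \le \tau(G) \le d$ for every $\cal{G}$, which is exactly the claimed $\le_\forall$ statement.

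The only point worth flagging is that this is genuinely a one-line consequence of the machinery developed earlier, so there is no real technical obstacle. What the corollary buys us conceptually is twofold: first, it \emph{improves} the constant from $d+1$ in Theorem~\ref{thm:VC1} to $d$; and second, it \emph{removes the simple-game hypothesis}, because the upper bound in Theorem~\ref{thm:upper} (via the vine-decomposition allocation argument) was proved for arbitrary superadditive valuation functions, not just $0/1$ ones. So in the write-up I would emphasize that the strengthening over Theorem~\ref{thm:VC1} comes precisely from routing the argument through the thicket number rather than through the direct combinatorial packing argument used in the proof of Theorem~\ref{thm:VC1}.

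If desired, one could also spell out why the bound cannot be improved purely in terms of $d$ via the thicket number alone, but that is not needed for the corollary itself. The proof I would submit is therefore essentially: ``By Theorem~\ref{thm:upper}, $\frac{\kappa(\cal{G})}{\rho(\cal{G})} \le_\forall \tau(G)$, and by Theorem~\ref{thm:VC2}, $\tau(G) \le d$; the result follows.''
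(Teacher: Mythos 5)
Your proposal is correct and matches the paper's own derivation exactly: the corollary is obtained by chaining the $\le_\forall$ direction of Theorem~\ref{thm:1} (i.e.\ Theorem~\ref{thm:upper}) with the bound $\tau(G)\le d$ of Theorem~\ref{thm:VC2}, and your observation that this removes the simple-game hypothesis and improves $d+1$ to $d$ is precisely the point the paper makes.
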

Of course, Corollary \ref{cor:VC2} must give a weaker bound than Theorem \ref{thm:1}.
Indeed, the VC-dimension of a graph $G$ can be arbitrarily
larger than its thicket number. To see this, consider a star graph with $n$ edges. Since
a star is a tree, the vinewidth of $G$ is equal to $1$ by Theorem~\ref{thm:approx}.
But the VC-dimension of the star is $n$ since any subset of leaves of the star can
be shattered using the graphical set family of $G$.

\section{The Dual Integrality Gap}\label{sec:dual-gap}

We now consider the integrality gaps of the primal and dual linear programs.
Clearly, these gaps are at most the packing-covering ratio, and thus at most the thicket number
$\tau(G)$. It is conceivable, however, that the integrality gaps could be much smaller
than the thicket number. In Section \ref{sec:primal-gap}
we will consider the primal integrality gap. 
In this section, we examine the dual integrality gap. Recall that this gap determines the
multiplicative least-core and measures the relative cost of stability.

Before quantifying the dual integrality gap in graphical coalition games, 
it is informative to give practical interpretations of the integral and fractional packing numbers.
Suppose there is a one unit interval of time, and at any point in time an agent can choose to work for
any coalition it belongs to. Thus, each agent partitions the interval into sub-intervals associated with 
assorted coalitions. Now consider two different determinants for whether a coalition is productive. 
First, suppose that a coalition can only function if its members meet together -- thus the coalition can generate wealth 
only if its members are working for it
simultaneously. In this setting, at any point in time $t$, the functioning coalitions are disjoint. It then follows,
by superadditivity, that it is best if the grand coalition meets at time $t$. Since this argument holds
for any $t$, the optimal solution is that the grand coalition to meet for the entire unit of time ($i.e.$, each agent
contributes all its time to the grand coalition).
Thus, we obtain the integral packing solution $\rho(\mathcal{G})$. Second, instead suppose
that it is not necessary for a coalition to convene simultaneously. The productivity of a coalition is then
determined by the minimum time contribution of one of its members.
So coalition members may work at different times and, in this setting, it is possible that more wealth can be 
generated if the agents fractionally allocate their time amongst multiple coalitions. Indeed the optimal solution
now is the fractional packing solution $\rho^f(\mathcal{G})$.

The main result of this section is the following:
\begin{theorem}\label{thm:dualintegralitygap}
There exist $c,\delta >0$ such that for any interaction graph $G$, the dual integrality gap satisfies:
 $$c \cdot \tau(G)^\delta \ \ \leq_\exists\ \  \frac{\rho^f(\cal{G})}{\rho(\cal{G})} \ \ \leq_\forall\ \  \tau(G)$$
\end{theorem}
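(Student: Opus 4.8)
The plan is to establish the two inequalities separately, as they have quite different flavours. The upper bound $\frac{\rho^f(\mathcal{G})}{\rho(\mathcal{G})} \leq_\forall \tau(G)$ requires no new work: by strong LP duality $\kappa^f(\mathcal{G}) = \rho^f(\mathcal{G})$, so the packing--covering ratio factors as $\frac{\kappa(\mathcal{G})}{\rho(\mathcal{G})} = \frac{\kappa(\mathcal{G})}{\kappa^f(\mathcal{G})} \cdot \frac{\rho^f(\mathcal{G})}{\rho(\mathcal{G})}$, and the primal integrality gap $\frac{\kappa(\mathcal{G})}{\kappa^f(\mathcal{G})}$ is at least $1$ by weak duality; hence the dual gap is at most the packing--covering ratio, which Theorem~\ref{thm:1} bounds by $\tau(G)$.

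For the lower bound I want, for every interaction graph $G$, a single game over $G$ whose dual gap is polynomially large in $\tau(G)$. The starting point is that a large thicket number forces a large treewidth: combining Theorems~\ref{thm:minmax} and~\ref{thm:approx} gives $\omega(G) \geq \nu(G) - 1 = \tau(G) - 1$. This lets me invoke the polynomial grid-minor theorem of Chekuri and Chuzhoy~\cite{CC13}: there is an absolute constant $\delta > 0$ such that every graph of treewidth $w$ contains the $g \times g$ grid as a minor with $g = \Omega(w^{\delta})$. Applying it to $G$ produces a $g \times g$ grid minor with $g = \Omega(\tau(G)^{\delta})$; write $B_{a,b}$, $1 \le a,b \le g$, for the corresponding pairwise-disjoint connected branch sets, so that $B_{a,b}$ and $B_{a',b'}$ are joined by an edge of $G$ whenever $(a,b)$ and $(a',b')$ are adjacent in the grid.

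The remaining step is to lift the row-and-column thicket of Example~3 along this minor and build the game. Set $\widetilde R_i = \bigcup_{b=1}^{g} B_{i,b}$ and $\widetilde C_j = \bigcup_{a=1}^{g} B_{a,j}$; each is connected in $G$, since consecutive branch sets along a grid row (respectively column) are joined by an edge and each branch set is itself connected. Let $\widetilde H_{i,j} = \widetilde R_i \cup \widetilde C_j$, which is connected (its two parts share $B_{i,j}$), and note that any two of these sets intersect (both $\widetilde H_{i,j}$ and $\widetilde H_{i',j'}$ contain $B_{i,j'}$). Now define the game $\mathcal{G}$ over $G$ in which a coalition is worth $1$ exactly when it is connected and contains some $\widetilde H_{i,j}$, and worth $0$ otherwise. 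Since the $\widetilde H_{i,j}$ pairwise intersect, no two worth-$1$ coalitions can be disjoint, so $\rho(\mathcal{G}) = 1$. On the other hand, each vertex of $G$ lies in at most $2g-1$ of the sets $\widetilde H_{i,j}$ (it belongs to at most one branch set $B_{a,b}$, and then only to those $\widetilde H_{i,j}$ with $i=a$ or $j=b$); hence setting $y_{\widetilde H_{i,j}} = \frac{1}{2g-1}$ for all $i,j$ is a feasible fractional packing, of value $\frac{g^2}{2g-1} > \frac{g}{2}$. Therefore $\frac{\rho^f(\mathcal{G})}{\rho(\mathcal{G})} \geq \frac{g}{2} = \Omega(\tau(G)^{\delta})$, and choosing the constant $c$ small enough (and using $\rho^f/\rho \geq 1$ for the bounded range of $\tau(G)$ where the grid minor is degenerate) yields $c \cdot \tau(G)^{\delta} \leq_\exists \frac{\rho^f(\mathcal{G})}{\rho(\mathcal{G})}$.

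The only genuinely hard ingredient is the polynomial grid-minor theorem of~\cite{CC13}; everything else is elementary bookkeeping. It is also the source of the unavoidable polynomial (rather than linear) slack between the two sides of the statement: a merely super-constant lower bound on the grid-minor side length would not produce a bound polynomial in $\tau(G)$, and in any event, as the clique and grid families examined after the theorem show, the exponent cannot be improved past $1$ on the upper side nor beyond $\frac{1}{2}$ on the lower side.
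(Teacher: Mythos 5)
Your proposal is correct and takes essentially the same route as the paper: the upper bound via Theorem~\ref{thm:1}, and the lower bound by invoking the Chekuri--Chuzhoy polynomial grid-minor theorem and lifting the grid's row--column thicket along the branch sets to get a game with $\rho(\mathcal{G})=1$ and $\rho^f(\mathcal{G})\ge g/2$. The only cosmetic difference is that you use all $g^2$ sets $\widetilde H_{i,j}=\widetilde R_i\cup\widetilde C_j$ with weight $\frac{1}{2g-1}$, whereas the paper uses the $g$ ``diagonal'' sets $H_i$ (row $i$ together with column $i$) with weight $\frac12$; both give the same bound.
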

The upper bound follows from Theorem \ref{thm:1}. Thus it remains for us to prove the lower bound.
To do this, we will apply some important results from graph minor theory.
A graph $H$ is a {\em minor} of a connected graph $G$ if the vertices of $G$ 
can be partitioned into $|V(H)|$ non-empty connected subgraphs such that 
if $(u,v)$ is an edge of $H$ then there exists an edge of $G$ with one endpoint in the subgraph corresponding 
to $u$ and one endpoint in the subgraph corresponding to $v$. Robertson and Seymour~\cite{RS86} proved
that every graph of treewidth $k$ admits a grid minor of size $f(k)$,
that is, the $f(k) \times f(k)$ grid is a minor of every graph of treewidth at least $k$.
Their bound $f$ was improved several times over the years,
but only recently was a polynomial bound 
obtained by Chekuri and Chuzhoy~\cite{CC13}. They prove that there exist 
$c'$ and $\delta>0$ such that every graph $G$ has a grid minor of size at least $c' \cdot \omega(G)^\delta$. 
Since $\tau(G) \leq 2\omega(G)$ by Theorem~\ref{thm:approx}, this implies that there exist $c$ 
and $\delta$ such that every graph has a grid minor of size $c \cdot \tau(G)^\delta$.
Now, before completing the proof of Theorem~\ref{thm:dualintegralitygap}, let us learn more
about the dual integrality gap of grids.
\begin{lemma}\label{lem:dualgrid}
The dual integrality gap of an $n \times n$ grid $R_n$ satisfies
 $$\frac12 \tau(R_n)\ \ \leq_\exists\ \  \frac{\rho^f(\cal{G})}{\rho(\cal{G})}$$
\end{lemma}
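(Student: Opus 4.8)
The plan is to exhibit one explicit superadditive game on $R_n$ whose fractional packing number is roughly $\frac n2$ times its integral packing number. Recall that the grid examples in Section~\ref{sec:thicket} show $\tau(R_n)\ge n$ via the row--column thicket and $\nu(R_n)\le n$ via the column-by-column vine decomposition, so Theorem~\ref{thm:minmax} gives $\tau(R_n)=n$. Hence it suffices to produce a game $\mathcal{G}$ over $R_n$ with $\rho^f(\mathcal{G})/\rho(\mathcal{G})\ge n/2$.

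The game is the natural one attached to the cross thicket. Let $\mathcal{H}=\{H_{R,C}=R\cup C:\ R\text{ a row},\ C\text{ a column}\}$, put $v(S)=1$ exactly when $S\in\mathcal{H}$ and $v(S)=0$ otherwise, and then replace $v$ by its superadditive closure $\bar v$ so that the game satisfies the paper's standing superadditivity assumption. Because every two crosses intersect, at most one part of any partition of a set $S$ into connected pieces can itself be a cross; consequently $\bar v(S)=1$ if $S$ contains some cross and $\bar v(S)=0$ otherwise, and in particular $\bar v$ still agrees with $v$ on the crosses.

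Next I would read off the two packing numbers. Integral value: an integral packing is, up to irrelevant value-$0$ coalitions, a family of pairwise-disjoint connected coalitions, each value-$1$ member of which contains a cross; disjoint coalitions would force disjoint crosses, which do not exist, so at most one coalition has value $1$, and this is attained by taking $y_{V(R_n)}=1$. Thus $\rho(\mathcal{G})=1$. Fractional value: a vertex in position $(i,j)$ lies in exactly $2n-1$ crosses, namely the $n$ crosses using row $R_i$ together with the $n$ using column $C_j$, with $H_{R_i,C_j}$ counted once; hence assigning $y_{H_{R,C}}=\tfrac1{2n-1}$ to each of the $n^2$ crosses is feasible for the {\tt Packing-LP}, with objective value $\tfrac{n^2}{2n-1}$. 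Therefore $\rho^f(\mathcal{G})\ge\tfrac{n^2}{2n-1}$, and
$$\frac{\rho^f(\mathcal{G})}{\rho(\mathcal{G})}\ \ge\ \frac{n^2}{2n-1}\ \ge\ \frac n2\ =\ \frac12\,\tau(R_n),$$
which is the claimed inequality.

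There is no deep obstacle here; the computation is short. The two places that need care are (i) passing to the superadditive closure without disturbing the packing numbers — it does not, since $\bar v$ coincides with $v$ on the crosses and still admits no two disjoint value-$1$ coalitions, so $\rho(\mathcal{G})=1$ and the displayed fractional solution remains feasible with the same value; and (ii) the incidence count of exactly $2n-1$ crosses through each vertex, which is what makes the uniform fractional packing tight against every vertex constraint of the {\tt Packing-LP}.
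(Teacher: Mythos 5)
Your construction is essentially the paper's: a simple game supported on a family of row--column ``crosses'' in the grid, $\rho(\mathcal{G})=1$ because any two crosses meet, and a uniform fractional packing of value at least $n/2$. The only cosmetic difference is that the paper uses just the $n$ diagonal crosses $H_i=R_i\cup C_i$ (each vertex lies in exactly two of them, so weight $\tfrac12$ gives value $\tfrac n2$), whereas you use all $n^2$ crosses with weight $\tfrac1{2n-1}$, giving $\tfrac{n^2}{2n-1}\ge\tfrac n2$; both are fine, and your justification that $\tau(R_n)=n$ via the vine decomposition and Theorem~\ref{thm:minmax} matches the paper. One caution about your item (i): passing to the superadditive closure is not only unnecessary here (the paper uses the simple game as is, and the lemma only needs the ratio $\rho^f/\rho$ for \emph{some} game, not the identity $\rho(\mathcal{G})=v(I)$ for which superadditivity was invoked), it is also slightly at odds with the graphical model, since for $n\ge 5$ the closure $\bar v$ assigns value $1$ to sets such as a cross together with a vertex at distance at least $2$ from both its row and column, which induce a disconnected subgraph and hence must have value $0$ in a game over the interaction graph. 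Dropping the closure removes this wrinkle and leaves all of your computations unchanged.
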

\begin{proof}
Take the grid $R_n$ and define a thicket $\mathcal{H}=\{H_1, H_2, \dots, H_n\}$ as
follows. The set $H_i$ is the set of vertices in the $i$th row or in the $i$th column.
Thus each $H_i$ is viable, and $H_i$ and $H_j$ intersect at two vertices
in the grid -- namely, the vertices with grid coordinates $(i,j)$ and $(j,i)$.
We create a simple game $\mathcal{G}$ over the grid $R_n$ by assigning
value one to those coalitions in $\mathcal{H}$. Clearly $\rho(\mathcal{G})=1$
as the sets in $\mathcal{H}$ are pairwise intersecting. On the other hand, $\rho(\mathcal{G})\ge \frac12 n = \frac12 \tau(R_n)$.
This is easy to see because we may assign a fractional value of $\frac12$ to each set in $\mathcal{H}$. Every grid vertex 
is in exactly two sets of $\mathcal{H}$, so this is a valid fractional packing.
\end{proof}

We are now ready to complete the proof of Theorem~\ref{thm:dualintegralitygap}.
\begin{proof}[of Theorem~\ref{thm:dualintegralitygap}]
Let $G$ be a graph of vinewidth $\nu(G)=\tau(G)$. 
We may assume that $G$ is connected. 
By~\cite{CC13}, there exist $c$ and $\delta$ such that $G$ admits a grid minor $R_k$ of size at least $k=c \cdot \tau(G)^\delta$. 
We wish to apply Lemma~\ref{lem:dualgrid} to show the existence of a game with dual integrality gap at
least $\frac12k$.
We define the thicket $\mathcal{H}=\{H_1, H_2, \dots, H_k\}$ as before, except now each node in the grid 
minor $R_k$ corresponds to a connected subgraph of the original graph $G$.
Formally, denote by $u_{i,j}$ the vertices of the grid of size $k \times k$ (where $i$ refers to the row and $j$ to the column
of the vertex $u_{i,j}$) and by $X_{i,j}$ the connected subset of $G$ which is assigned to $u_{i,j}$.
For every $i \leq k$, let $H_i= \bigcup_{j=1}^k X_{i,j} \cup \bigcup_{j=1}^k X_{j,i}$. The coalition $H_i$ 
is viable by the definition of a minor. 
Again, we take the simple game $\cal{G}$ where only the coalitions in  $\mathcal{H}$ are given value $1$.
For every $i \neq j$, the sets $H_i$ and $H_j$ intersect
since both $Y_i \cap Y_j= X_{i,j} \cup X_{j,i}$. Thus $\rho(\cal{G}) =1$ 
and $\rho^f(\cal{G}) \geq \frac{k}{2}$.
\end{proof}
Of course, any subsequent improvement in the polynomial function of~\cite{CC13} will give an
improved lower bound for~\ref{thm:dualintegralitygap}. Nonetheless, this
grid-minor method {\em cannot} provide a linear lower bound.
We prove this by considering the class of clique graphs.
\begin{lemma}\label{lem:dualclique}
The dual integrality gap of the clique $K_n$ satisfies
$$\sqrt{\frac{\tau(K_n)}{2}}-1 \ \ \leq_\exists\ \  \frac{\rho^f(\cal{G})}{\rho(\cal{G})} \ \ \leq_\forall\ \  \sqrt{8}\cdot \sqrt{\tau(K_n)}$$
\end{lemma}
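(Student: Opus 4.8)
The plan is to establish the lemma in two halves, upper bound first and then lower bound, using the fact that $\tau(K_n)=\lceil n/2\rceil$ (established in Example 2 and the subsequent vinewidth discussion), so that $\tau(K_n)$ and $n$ differ only by a factor of $2$. For the upper bound $\frac{\rho^f(\cal{G})}{\rho(\cal{G})}\le_\forall \sqrt{8}\cdot\sqrt{\tau(K_n)}$, I would take an arbitrary game $\cal{G}$ over $K_n$ and argue that the fractional packing cannot be large relative to the integral one. The key observation is that in $K_n$ \emph{every} subset of at least two vertices is a viable coalition. So a fractional packing is just a fractional matching in a hypergraph whose hyperedges are all subsets of $[n]$ of size $\ge 2$, weighted by $v$. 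I would try to show that if $\rho^f(\cal{G})$ is large then there must be many disjoint coalitions of comparable value: intuitively, coalitions of small size can be packed disjointly and still capture a constant fraction of the fractional optimum, while there cannot be too many high-value coalitions that are forced to overlap. The natural threshold is around $\sqrt{n}$: a coalition of size $\le s$ uses up little of the capacity of few agents, so one can greedily pack $\Omega(n/s)$ of them; a coalition of size $> s$ forces overlaps, but there can be at most $n/s \cdot (\text{something})$ of them in a fractional solution before violating the unit capacity at some vertex. Balancing $n/s$ against the overlap bound gives $s\approx\sqrt n$ and a $\sqrt n$-type bound, matching $\sqrt{8}\sqrt{\tau(K_n)}$ after absorbing the constant.

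For the lower bound $\sqrt{\tau(K_n)/2}-1 \le_\exists \frac{\rho^f(\cal{G})}{\rho(\cal{G})}$, I would exhibit an explicit simple game on $K_n$. The construction should be a combinatorial design: take $m\approx\sqrt{n}$ and partition (or nearly partition) the $n$ vertices into $m$ blocks $B_1,\dots,B_m$ each of size $\approx\sqrt n$; then let the viable value-$1$ coalitions be, say, the pairwise unions $B_i\cup B_j$, or all the blocks together with some carefully chosen ``transversal'' sets, arranged so that (i) every pair of value-$1$ coalitions intersects, forcing $\rho(\cal{G})=1$, while (ii) each vertex lies in only $O(1)$ of the chosen coalitions, so that a uniform fractional weight of order $1$ spread over $\Theta(\sqrt n)$ coalitions is feasible, giving $\rho^f(\cal{G})=\Omega(\sqrt n)=\Omega(\sqrt{\tau(K_n)})$. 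This mirrors exactly the grid construction of Lemma~\ref{lem:dualgrid}, where rows and columns play the role of the blocks: there, $H_i$ (row $i$ $\cup$ column $i$) pairwise intersect and each vertex is in exactly two of them. I would aim for the same structure with $\sqrt n$ ``coordinates'' instead of $n$, which is why the clique only yields a square-root rather than linear gap; the $-1$ is the usual rounding slack from $m=\lfloor\sqrt n\rfloor$ and from relating $n$ to $\tau(K_n)=\lceil n/2\rceil$.

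The main obstacle will be the upper bound, specifically proving that no game over $K_n$ can beat $O(\sqrt n)$. Unlike the lower bound, which just needs one clever construction, the upper bound must handle \emph{all} superadditive valuations $v$, including ones where many coalitions of wildly different sizes and values coexist. The crux is a clean case analysis on an optimal fractional packing $y$: I expect to let $s$ be a size threshold, call a coalition $S$ with $y_S>0$ ``small'' if $|S|\le s$ and ``large'' otherwise, and then separately bound the total fractional weight (times value) contributed by each class. For small coalitions I would greedily extract a disjoint sub-collection losing only a factor $\approx s$ (each chosen small coalition ``blocks'' at most $s\cdot n$ incidences, and there are at most $n$ units of capacity, giving $\ge n/s$ disjoint small coalitions after suitable normalization by value). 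For large coalitions, since any large coalition covers more than $s$ of the $n$ agents and total capacity at each agent is $1$, the total fractional weight on large coalitions is at most $n/s$, and since $\rho(\cal{G})\ge v(I) \ge$ the value of any single large coalition (by superadditivity/monotonicity), this class contributes at most $(n/s)\cdot\rho(\cal{G})$. Choosing $s=\Theta(\sqrt n)$ balances $n/s$ with $s$ and yields the $\sqrt{8}\sqrt{\tau(K_n)}$ bound; getting the constant exactly right (the $\sqrt 8$) will require being careful about how value is distributed when splitting into small and large, and about the $\lceil n/2\rceil$ in $\tau(K_n)$, but these are routine once the structure is in place.
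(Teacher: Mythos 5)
Your proposal is correct and follows essentially the paper's route: the lower bound is exactly the paper's construction (arrange the clique's vertices in a $\sqrt{n}\times\sqrt{n}$ array and take the coalitions ``row $i$ $\cup$ column $i$'', so $\rho(\cal{G})=1$ while weight $\tfrac12$ on each gives $\rho^f(\cal{G})\ge\tfrac12\sqrt{n}$), and your $\sqrt{n}$-threshold small/large split is the same balancing argument as the paper's Lemma~\ref{lem:sqrtupperdig}, which carries it out on the covering side (greedy value-ordered packing of small coalitions plus a uniform $v^*/\sqrt{n}$ allocation, then strong duality) rather than bounding the fractional packing directly. One caution: of the two lower-bound variants you float, the ``pairwise unions $B_i\cup B_j$'' family is not pairwise intersecting (disjoint index pairs give disjoint coalitions), so only the row/column structure you ultimately aim for actually forces $\rho(\cal{G})=1$.
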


To prove this lemma, we will use the same idea as Lemma~\ref{thm:dualintegralitygap} for the lower bound 
and the following lemma for the upper bound.
\begin{lemma}\label{lem:sqrtupperdig}
For any interaction graph $G$ on $n$ vertices, the dual-integrality gap satisfies
$$\frac{\rho^f(\cal{G})}{\rho(\cal{G})} \ \ \leq_\forall\ \  2\sqrt{n}$$
\end{lemma}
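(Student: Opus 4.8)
The plan is to show that \emph{every} fractional packing over $G$ has value at most $2\sqrt{n}$ times $\rho(\mathcal{G})$, by splitting coalitions according to size. Recall first the two facts we will lean on repeatedly: by superadditivity the best integral packing is the grand coalition, so $\rho(\mathcal{G})=v(I)$, and also $v(S)\le v(I)$ for every coalition $S$ (since $v(I)\ge v(S)+v(I\setminus S)\ge v(S)$); we may assume $G$ is connected, as otherwise the game is trivial. Fix an optimal fractional packing $y$, write $V=\sum_S v(S)\,y_S=\rho^f(\mathcal{G})$, and fix a threshold $k=\sqrt{n}$. I will bound separately the contribution $V_{\mathrm{large}}$ of coalitions with $|S|>k$ and the contribution $V_{\mathrm{small}}$ of coalitions with $|S|\le k$.

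For the large coalitions I would use a one-line averaging argument: summing the packing constraint $\sum_{S\ni i}y_S\le 1$ over all $n$ vertices gives $\sum_S |S|\,y_S\le n$, hence $\sum_{|S|>k}y_S\le \frac1k\sum_{|S|>k}|S|\,y_S\le n/k$, and together with $v(S)\le v(I)$ this yields $V_{\mathrm{large}}\le \frac{n}{k}\,v(I)$.

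For the small coalitions the plan is to round the restriction of $y$ to coalitions of size at most $k$ using the standard greedy argument that underlies the bound of $k$ on the integrality gap of weighted $k$-set packing. Run greedy: repeatedly pick a coalition of size $\le k$ of maximum value $v(\cdot)$ that is disjoint from all previously picked ones, obtaining a family $\mathcal{P}$ of pairwise-disjoint coalitions. Each small coalition $S$ in the support of $y$ is charged to a chosen coalition $P\in\mathcal{P}$ that meets $S$; because greedy processes coalitions in order of value, such a $P$ can be selected with $v(P)\ge v(S)$ (if $S\in\mathcal{P}$ it is charged to itself). The total charged to a fixed $P$ is then at most $v(P)\sum_{S:\,S\cap P\neq\emptyset}y_S\le v(P)\sum_{i\in P}\sum_{S\ni i}y_S\le |P|\,v(P)\le k\,v(P)$, using $|P|\le k$. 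Summing over $P\in\mathcal{P}$ gives $V_{\mathrm{small}}\le k\sum_{P\in\mathcal{P}}v(P)$, and since $\mathcal{P}$ is a family of disjoint coalitions, superadditivity gives $\sum_{P\in\mathcal{P}}v(P)\le v(I)$, so $V_{\mathrm{small}}\le k\,v(I)$.

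Combining, $\rho^f(\mathcal{G})=V_{\mathrm{large}}+V_{\mathrm{small}}\le\left(\frac{n}{k}+k\right)v(I)$, which is minimized at $k=\sqrt{n}$ and yields $\rho^f(\mathcal{G})\le 2\sqrt{n}\cdot v(I)=2\sqrt{n}\cdot\rho(\mathcal{G})$, as desired. The step that needs real care, and which I expect to be the main obstacle, is the greedy charging for the small coalitions: one must be careful to order the greedy selections by value so that every blocked coalition is charged to a chosen coalition of no smaller value, and then exploit that a coalition of size at most $k$ touches at most $k$ vertices to keep the per-coalition charge down to $k\,v(P)$; everything else is routine.
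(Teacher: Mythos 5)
Your proof is correct, and it is essentially the paper's argument reflected across LP duality: the paper uses the same threshold $\sqrt{n}$ and the same greedy maximum-value packing of small coalitions, but then invokes strong duality ($\rho^f(\mathcal{G})=\kappa^f(\mathcal{G})$) and exhibits a cheap feasible fractional cover (allocate $v(S_j)$ to each agent of $S_j$ plus $v^*/\sqrt{n}$ to every agent), whereas you bound an optimal fractional packing directly by charging it against the greedy family; your version buys a self-contained bound on the packing LP with no appeal to duality, at no extra cost. One small improvement: your detour through $\rho(\mathcal{G})=v(I)$ (which needs $G$ connected and literal superadditivity, neither of which holds for the zero/one games constructed elsewhere in the paper) is unnecessary --- replace $v(I)$ by $\rho(\mathcal{G})$ throughout, using only that any single coalition and your disjoint greedy family $\mathcal{P}$ are themselves integral packings, so $v(S)\le\rho(\mathcal{G})$ and $\sum_{P\in\mathcal{P}}v(P)\le\rho(\mathcal{G})$; the rest of your charging argument goes through verbatim and the conclusion $\rho^f(\mathcal{G})\le\left(\frac{n}{k}+k\right)\rho(\mathcal{G})=2\sqrt{n}\,\rho(\mathcal{G})$ is unchanged.
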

\begin{proof}
Take any game $\cal{G}$ on $G$. 
Since $\rho^f(\cal{G}) = \kappa^f(\cal{G})$ by strong duality, it suffices to
show that $\kappa^f(\cal{G}) \le 2 \sqrt{n}\cdot \rho(\cal{G})$.
Call a coalition {\em large} if it contains at least $\sqrt{n}$ agents
and {\em small} otherwise. 
Now greedily select a packing of small
coalitions $\{S_1,S_2, \dots S_k\}$ as follows. Let $S_1$ be the small coalition of maximum value.
Then recursively, let $S_{i+1}$ be the small coalition of maximum value that is disjoint from $\{S_1,\dots, S_i\}$.
Allocate $v(S_j)$ to every agent in $S_j$. In addition, we allocate $\frac{1}{\sqrt{n}}\cdot v^*$ to every agent, 
where $v^*=\max_{S} v(S)$.  We claim that this allocation ${\bf x}$ is a feasible solution to the primal.
To see this take any coalition $S$. If $S$ is large then $\sum_{i\in S} x_i = \frac{|S|}{\sqrt{n}}\cdot v^*\ge v(S)$,
because $|S|\ge \sqrt{n}$. Suppose $S$ is small. If $S$ was selected in the greedy packing
then $\sum_{i\in S} x_i > \sum_{i\in S} v(S_i) = |S_i|\cdot v(S) \ge v(S)$. Otherwise, let $S_j$ be the lowest index set in the 
packing that intersects $S$. By the greedy selection mechanism we then have $v(S_j)\ge v(S)$. Therefore,
$\sum_{i\in S} x_i > v(S_j) \ge v(S)$ as each agent in $S\cap S_j$ is allocated at least $v(S_j)$ (and
there is at least one such agent).
So we have $\kappa^f(\cal{G})\le \sum_{i\in I} x_i$ and, furthermore,
\begin{eqnarray*}
\sum_{i\in I} x_i
&=& n\cdot \frac{1}{\sqrt{n}}v^*+ \sum_{j=1}^k |S_j|  \cdot v(S_j) \\
&\le& \sqrt{n} \cdot v^* +\sqrt{n}\cdot \sum_{j=1}^k v(S_j) \\
&\le& \sqrt{n} \cdot \rho(\cal{G}) +\sqrt{n} \cdot \rho(\cal{G}) \\
&\leq& 2\sqrt{n} \cdot \rho(\cal{G}) 
\end{eqnarray*}
The lemma follows.
%
%
%
\end{proof}

\begin{proof}[of Lemma~\ref{lem:dualclique}]
First recall that the vinewidth of the clique $K_n$ is $\lceil \frac{n}{2} \rceil$. Moreover, every coalition over interaction 
graph $K_n$ induces a connected subgraph and then is viable.
For the lower bound, we label the vertices of $K_{n}$ by coordinates $i,j$ that vary between 0 and $\sqrt{n}$ 
($i.e.$, we place the vertices in a grid). Let $H_i$ consist of those vertices in the $i$th row or $i$th column. Now, 
consider the game where each $H_i$ is a coalition of value 1. Since any two element of $H_i$ intersect, $\rho = 1$. 
Since we may fractionally choose $\frac 12$ of each $H_i$, $\rho^f \ge \frac12 \sqrt{n} = \frac12 \sqrt{n} \cdot \rho$. This 
proves the lower bound.

Since $n \le 2 \tau(K_n)$, Lemma~\ref{lem:sqrtupperdig} tells us the dual integrality gap is at 
most $2\sqrt{n}\le \sqrt{8\tau(G)}$. This proves the upper bound.
\end{proof}


\section{The Primal Integrality Gap}\label{sec:primal-gap}

To conclude, we consider the primal integrality gap. This measures the maximum ratio 
in the cost between paying the agents in integral amounts and paying in fractional
amounts.
Our first result is that the thicket number does quantify the primal integrality gap to within a 
constant factor, namely Theorem~\ref{thm:2}. The upper bound follows from Theorem \ref{thm:1},
so it suffices to show the lower bound. (Due to space considerations, all the proofs for this section                     
are deferred to the appendix.)

\begin{theorem}\label{thm:twfrac}
For any interaction graph $G$, the primal integrality gap satisfies: 
\[ \frac14 \tau(G) \ \ \leq_\exists\ \  \frac{\kappa(\cal{G})}{\kappa^f(\cal{G})}\]
\end{theorem}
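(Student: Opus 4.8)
\textbf{Proof proposal for Theorem~\ref{thm:twfrac}.}
The plan is to build, for an arbitrary graph $G$ with thicket number $\tau(G)=t$, a (simple) coalition game $\mathcal{G}$ whose integral covering cost $\kappa(\mathcal{G})$ is essentially $t$ while its fractional covering cost $\kappa^f(\mathcal{G})$ is bounded by roughly $4$ (or more precisely a constant times $\kappa(\mathcal{G})/t$). The natural starting point is the same extremal thicket $\mathcal{H}=\{H_1,\dots,H_p\}$ witnessing $\tau(\mathcal{H})=t$ that was used in the proof of the lower bound of Theorem~\ref{thm:1}, together with its game $v(S)=1$ for $S\in\mathcal{H}$ and $0$ otherwise. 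For that game $\kappa(\mathcal{G})=\tau(\mathcal{H})=t$ exactly, so the whole burden is to show $\kappa^f(\mathcal{G})\le 4$, i.e.\ that one can \emph{fractionally} cover all the sets in $\mathcal{H}$ at total cost at most $4$ (a constant independent of $G$). Since the sets in a thicket pairwise intersect, one expects a fractional hitting set concentrated on a small ``fractional core'' to suffice; the difficulty is that pairwise intersection does \emph{not} by itself force a small fractional transversal (unlike the Helly property for trees), so the bound $4$ cannot come for free from the thicket axioms alone.

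The key idea I would pursue is an iterated ``fractional kernel'' / majority argument. Order the sets of $\mathcal{H}$ by size and let $H_{\min}$ be a smallest one, of size $m$. Every other $H_j$ meets $H_{\min}$, so putting weight $\tfrac1{\lceil m/2\rceil}$ (say) on a well-chosen half of the vertices of $H_{\min}$ already covers every set that meets $H_{\min}$ in at least half of $H_{\min}$. The sets that meet $H_{\min}$ in a \emph{small} fraction form a subfamily which (because they all still pairwise intersect and are now ``far from'' $H_{\min}$) can be handled recursively with a geometrically shrinking budget; summing a geometric series gives a constant. Concretely I expect the recursion to replace $\mathcal{H}$ by the subfamily of sets meeting the current smallest set in fewer than, say, half its vertices, spending $O(1)$ fractional weight at each level and having the relevant parameter (the minimum intersection, or the size of the smallest set) shrink by a constant factor, so that $O(\log)$ many levels suffice and the total cost telescopes to at most $4$. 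This mirrors the clique computation in Example~2 (where a smallest subclique of size $\le\lceil n/2\rceil$ already gives an integral hitting set of that size) but exploits that in the fractional world we may put small weight on many vertices rather than weight $1$ on few.

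The main obstacle, and where I would spend the most care, is making the recursion's bookkeeping actually yield a constant rather than a $\log\tau(G)$ factor: one must argue that after removing the sets well-intersected by the current smallest set, the \emph{remaining} family is genuinely ``smaller'' in a quantity that controls the needed budget, and that the budgets across levels form a convergent series. A clean way to do this is to track the minimum cardinality $m_\ell$ of a set in the family at level $\ell$: a set surviving to level $\ell+1$ must intersect the level-$\ell$ smallest set in fewer than $\tfrac12 m_\ell$ vertices yet is itself connected and of size $\ge m_\ell$ after possibly discarding, so one can show $m_{\ell+1}\le \tfrac12 m_\ell$ is \emph{not} what we need but rather that the \emph{cost} $1/m_\ell$ we spend halves — here I would instead just put weight $2/m_\ell$ on each vertex of the level-$\ell$ smallest set, covering every surviving set once the process terminates, and bound $\sum_\ell 2/m_\ell \le 4$ using $m_{\ell+1}\ge$ (something like) $2 m_\ell$ is false, so the honest route is: spend weight so that each uncovered set loses a constant fraction of its ``uncovered part'', a standard fractional set-cover LP-rounding potential argument, which for a pairwise-intersecting family converges in $O(1)$ expected rounds. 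I would present it via this potential/averaging lemma, verify the edge cases (a thicket containing $V$ itself, or containing a single small set), and finally note $\kappa(\mathcal{G})=t$ and $\kappa^f(\mathcal{G})\le 4$ give the claimed $\tfrac14\tau(G)\le_\exists \kappa(\mathcal{G})/\kappa^f(\mathcal{G})$.
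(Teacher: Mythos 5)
Your proposal rests on a false key claim. You take the plain thicket game ($v(S)=1$ iff $S\in\mathcal{H}$) and try to show $\kappa^f(\mathcal{G})\le 4$, i.e.\ that any pairwise-intersecting family of connected sets admits a constant-cost fractional transversal. This is not true, and the paper itself contains the counterexample: for the $n\times n$ grid with $H_i$ equal to the $i$th row union the $i$th column (the thicket of Example~3 and Lemma~\ref{lem:dualgrid}), assigning $y_{H_i}=\tfrac12$ to every set is a feasible fractional packing of value $\tfrac{n}{2}$, so by LP duality $\kappa^f(\mathcal{G})=\rho^f(\mathcal{G})\ge \tfrac{n}{2}$, while $\kappa(\mathcal{G})=\tau(\mathcal{H})\approx n$. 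Hence for this extremal thicket the plain game has primal integrality gap about $2$, not $\Omega(\tau(G))$ — indeed the largeness of $\kappa^f$ for exactly this game is what the paper exploits to prove the \emph{dual} integrality gap lower bound. Your own caveat (``pairwise intersection does not by itself force a small fractional transversal'') is correct, and the subsequent recursion/potential sketch that is supposed to overcome it cannot work: as you yourself note mid-argument, the quantities you want to telescope do not shrink, and no averaging argument can beat the grid example. So the proposal has a genuine, unfixable gap as stated: with that choice of game the statement you need is simply false.

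The paper avoids this by changing the valuation, not the fractional argument. Fix a minimum hitting set $X$ of the extremal thicket $\mathcal{H}$, and give value $1$ only to coalitions $S$ that are unions of thicket sets \emph{and} satisfy $|S\cap X|\ge\lceil\tfrac12\tau(G)\rceil$. The size condition makes the uniform allocation $\tfrac{2}{\tau(G)}$ on $X$ feasible, so $\kappa^f(\mathcal{G})\le 2$; and if some set $Y$ with $|Y|\le\lfloor\tfrac12\tau(G)\rfloor$ hit every value-$1$ coalition, then the thicket sets disjoint from $Y$ would still need $\ge\lceil\tfrac12\tau(G)\rceil$ vertices of $X$ to be hit (else $Y$ plus a small hitter would beat $\tau(G)$), so their union would itself be a value-$1$ coalition missed by $Y$ — a contradiction, giving $\kappa(\mathcal{G})\ge\tfrac12\tau(G)$ and the factor $\tfrac14$. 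If you want to salvage your write-up, this modification of the game is the missing idea; the recursion on a smallest thicket element should be discarded.
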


\begin{proof}
Take a graph $G=(V,E)$ with thicket number $\tau(G)$ and let $\mathcal{H}=\{H_1,\ldots, H_p \}$ 
be a thicket with a minimum hitting set of size $\tau(G)$. Let $X$ be a minimum cardinality hitting set for $\cal{H}$. 
Now consider the following coalition game $\cal{G}$ over the interaction graph $G$.
For a coalition $S\subseteq V$, we set $v(S)=1$ if there exists a family $\cal{H}' \subseteq \cal{H}$ 
such that $S = \cup_{H \in \cal{H}'} H$ and $|S \cap X| \geq  \lceil \frac{1}{2}\tau(G) \rceil$.
Since, the sets in the thicket are connected and pairwise intersecting we have
that $S$ is viable; thus $v$ is a valid valuation function.

The linear program has a solution with value $\kappa^f(\cal{G})\le 2$.
To see this, consider the solution where each agent of $X$ is allocated $\frac{2}{\tau(G)}$ and 
all other agents are allocated~$0$. Because each coalition with value $1$ has
at least $\lceil \frac{1}{2}\tau(G) \rceil$ members in $X$, this is a feasible fractional solution.

To prove the primal integrality gap is at least $\frac14 \tau(G)$, we will now show that
the optimal integral solution to the primal has value at least  $\lfloor \frac{1}{2}\tau(G) \rfloor+1\ge \frac{1}{2}\tau(G)$.
Suppose not. Then the set $\mathcal{S}$ of coalitions with value $1$ must have a hitting set $Y$ of
cardinality at most $\lfloor \frac{1}{2}\tau(G) \rfloor$. Now let $\hat{H}$ be the sets in the thicket
$\cal{H}$ that are disjoint from $Y$. 
Then the minimum size of a hitting set $Z$ for the family $\hat{\cal{H}}$ is at least $\lceil \frac{1}{2}\tau(G) \rceil$. 
Otherwise $Y \cup Z$ is a hitting set of $\cal{H}$ of size less than $\tau(G)$.
But then $\hat{S} = \cup_{H \in \hat{\cal{H}}} H$ satisfies $|\hat{S} \cap X| \geq \lceil \frac{1}{2}\tau(G) \rceil$.
Thus, by definition, $\hat{S}$ is a coalition in $\cal{S}$. However  $Y$ is a hitting set of $\cal{S}$ and
$Y\cap \hat{S}=\emptyset$, a contradiction.
\end{proof}

So the primal integrality gap (for the worst game) is within a factor $4$ for any pair of interaction graphs with
the same thicket number. Recall that the packing-covering ratio is the same 
(for the worst game) for every pair of interaction graphs with the same thicket number by Theorem~\ref{thm:1}.
Is this also the case for the primal integrality gap? The answer is no.
There are graphs whose primal integrality gaps differ. In particular we will show
that the integrality gap for a clique is equal to $\frac12 \tau(G)$, up to an additive constant,
whereas the class of graphs that are ``powers of a path" have integrality gaps 
that tend to $\tau(G)$. The latter inequality ensures that the upper bound of 
Theorem~\ref{thm:2} cannot be improved in general using thicket number.

\begin{lemma}\label{lem:clique}
 Let $G=K_n$ be a clique on $n$ vertices. Then the primal integrality gap of $K_n$ satisfies
 \[ \frac12 \tau(K_n) \ \ \leq_\exists\ \  \frac{\kappa(\cal{G})}{\kappa^f(\cal{G})} \ \ \leq_\forall\ \  \frac12 \tau(K_n)+1 \]
\end{lemma}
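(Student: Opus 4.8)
The plan is to prove the two bounds separately, recycling the proof ideas already established for cliques. For the lower bound $\tfrac12\tau(K_n)\le_\exists \tfrac{\kappa(\cal G)}{\kappa^f(\cal G)}$, I would simply invoke the construction in Theorem~\ref{thm:twfrac}: it produces, for \emph{any} graph $G$ with thicket number $\tau(G)$, a game whose fractional optimum is at most $2$ and whose integral optimum is at least $\lceil\tfrac12\tau(G)\rceil$, giving a primal integrality gap of at least $\tfrac14\tau(G)$. For a clique, however, we can do better. Recall $\tau(K_n)=\lceil n/2\rceil$ (this is the vinewidth of $K_n$ by Theorem~\ref{thm:minmax} and the clique example), and every subset of $V(K_n)$ is viable. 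So take the game where $v(S)=1$ precisely when $|S|\ge\lceil\tfrac12 n\rceil$. Then $\kappa^f(\cal G)\le 2$ by allocating $\tfrac{2}{n}$ to every vertex, while any integral hitting set of the value-$1$ coalitions must have size more than $\lfloor\tfrac12 n\rfloor$ (otherwise its complement is a value-$1$ coalition it misses), so $\kappa(\cal G)\ge\lfloor\tfrac12 n\rfloor+1$. This yields a ratio of roughly $\tfrac14 n\approx\tfrac12\tau(K_n)$; I would tune the threshold $\lceil\tfrac12 n\rceil$ so that the leading constant comes out to exactly $\tfrac12\tau(K_n)$ (up to the additive slack claimed).

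For the upper bound $\tfrac{\kappa(\cal G)}{\kappa^f(\cal G)}\le_\forall\tfrac12\tau(K_n)+1$, I would argue directly about an arbitrary game $\cal G$ on $K_n$. Let $v^*=\max_S v(S)$ and fix a maximizing coalition $S^*$. Since $\tau(K_n)=\lceil n/2\rceil$, we have $n\le 2\tau(K_n)$. One clean approach: by superadditivity and strong duality, $\kappa^f(\cal G)=\rho^f(\cal G)\ge v^*$ always (put $y_{S^*}=1$), and in fact $\kappa^f(\cal G)\ge v(I)$. The key observation is that any integral payment ${\bf x}$ certifying $\kappa(\cal G)$ costs at most something like $v^*\cdot(\text{number of vertices one needs to ``cover'' all coalitions})$. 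More precisely, I would exhibit an integral allocation: allocate $v^*$ to each vertex in a set $X$ of size $\lceil\tfrac12 n\rceil$ and $0$ elsewhere. Every coalition $S$ with $v(S)>0$ that is \emph{not} hit by $X$ must lie inside $V\setminus X$, which has size $\lfloor\tfrac12 n\rfloor<\lceil\tfrac12 n\rceil$; for those small coalitions I would need to add a correction, allocating $\max_{S\subseteq V\setminus X}v(S)$ to the vertices of $V\setminus X$ as well. This gives $\kappa(\cal G)\le \lceil\tfrac12 n\rceil\cdot v^* + \lfloor\tfrac12 n\rfloor\cdot v^* \le n\cdot v^*$, which is too weak; so instead I would split more carefully — allocate $v^*$ only on $X$ of size $\lceil\tfrac12 n\rceil$, observe that the \emph{uncovered} coalitions live on $\lfloor\tfrac12 n\rfloor$ vertices and recurse, or better, bound $\kappa(\cal G)\le (\lceil\tfrac12 n\rceil+1)\cdot v^*$ directly by choosing $X$ to also contain one vertex of each residual coalition, and then compare with $\kappa^f(\cal G)\ge v^*$. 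Since $\lceil\tfrac12 n\rceil = \tau(K_n)$, this would overshoot; the right bookkeeping is to note $\kappa^f(\cal G)\ge v^*$ and $\kappa(\cal G)\le(\tfrac12\tau(K_n)+1)v^*$ by a parity-aware hitting-set argument, using that a hitting set for all positive coalitions together with its threshold structure has size at most $\tfrac12\tau(K_n)+1$ when scaled against the fractional solution value $2v^*/\tau$ rather than $v^*$.

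The cleanest route for the upper bound, which I would actually pursue, mirrors Lemma~\ref{lem:sqrtupperdig} but exploits that on $K_n$ \emph{every} subset is a coalition, so the relevant quantity is purely combinatorial. Fix $\cal G$; we may assume $v(I)=\kappa^f(\cal G)$ after scaling is irrelevant. The fractional optimum satisfies $\kappa^f(\cal G)\ge\max\bigl(v^*,\ \tfrac{1}{\lceil n/2\rceil}\sum\text{(something)}\bigr)$. I claim $\kappa(\cal G)\le v^*\cdot\bigl(\lfloor\tfrac{n}{2}\rfloor+1\bigr)$: take any minimum hitting set $Y$ of the positive-valued coalitions; if $|Y|>\lfloor n/2\rfloor$ then $V\setminus Y$ is too small to contain a coalition that $Y$ misses only if... — this is exactly the parity argument from the lower bound, run in reverse, showing the minimum hitting set has size at most $\lfloor n/2\rfloor+1=\tau(K_n)+1-\lceil n/2\rceil+\lfloor n/2\rfloor$, and then allocating $v^*$ to each vertex of that hitting set certifies $\kappa(\cal G)\le v^*(\lfloor n/2\rfloor+1)$ while a fractional solution gives $\kappa^f(\cal G)\ge 2v^*$ (spreading $v^*$ over the two halves of a clique bisection, since every coalition meets at least one half in... no — one needs the sharper fact that every coalition of size $>\lceil n/2\rceil$ meets \emph{both} halves, while smaller ones are handled by $v^*$ directly on one half), hence the ratio is at most $\tfrac{\lfloor n/2\rfloor+1}{2}=\tfrac12\tau(K_n)+O(1)$.

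\textbf{Main obstacle.} The delicate point is the upper bound: naively bounding $\kappa(\cal G)$ by $v^*$ times a hitting-set size gives only $n\cdot v^*$, i.e. a gap of $n\approx 2\tau(K_n)$, which is a factor $4$ off. To get down to $\tfrac12\tau(K_n)$ one must show $\kappa^f(\cal G)\ge 2v^*$ (not merely $v^*$) for \emph{every} game on $K_n$, which requires a genuinely clique-specific packing argument — essentially that one can always fractionally pack at least two "copies" worth of the best coalition. I expect the crux to be constructing this factor-$2$ fractional packing uniformly, handling the interplay between one large coalition and many small disjoint ones exactly as in the greedy argument of Lemma~\ref{lem:sqrtupperdig}, but tightened using $n\le 2\tau(K_n)$ so the $\sqrt n$ there is replaced by the exact threshold $n/2$. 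The additive "$+1$" in the statement is precisely the slack needed to absorb floor/ceiling parity issues in this balancing.
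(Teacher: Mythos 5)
Your lower bound is fine and is essentially the paper's: the game on $K_n$ giving value $1$ to every coalition of size $\lceil n/2\rceil$ has $\kappa^f(\cal{G})\le 2$ (allocate $2/n$ to everyone) and $\kappa(\cal{G})\ge\lfloor n/2\rfloor+1$ by the complement argument, which already yields ratio at least $\tfrac12\tau(K_n)$ with no further "tuning" needed.

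The upper bound, however, has a genuine gap, and your own "main obstacle" paragraph pins it on a claim that is false: it is not true that $\kappa^f(\cal{G})\ge 2v^*$ for every game on $K_n$. Take the superadditive game where only the grand coalition has positive value; then $\kappa^f(\cal{G})=v^*$ (and the integrality gap is $1$). More generally, no argument that lower-bounds $\kappa^f$ by a fixed multiple of $v^*$ and upper-bounds $\kappa$ by $v^*$ times a hitting-set size can reach $\tfrac12\tau(K_n)+1$, because the two quantities must be traded off against each other: the fractional optimum is large precisely when the minimum hitting set is large. The paper captures this trade-off with an averaging argument you never formulate. Write $\kappa(\cal{G})=(1-\alpha)n+1$; then \emph{no} set $X$ of $(1-\alpha)n$ agents hits all value-$1$ coalitions, so some such coalition lies in $\bar X$ and any feasible fractional $\mathbf{x}$ satisfies $\sum_{i\in\bar X}x_i\ge 1$. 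Averaging this inequality over all $\binom{n}{\alpha n}$ choices of $X$ (each agent lies in an $\alpha$-fraction of the complements) gives $\kappa^f(\cal{G})\ge 1/\alpha$, hence a gap of at most $\alpha(1-\alpha)n+\alpha\le\tfrac{n}{4}+\tfrac12\le\tfrac12\tau(K_n)+1$, maximized at $\alpha=\tfrac12$. Your sketch instead cycles through allocations of $v^*$ on a half-set, an unjustified assertion that a minimum hitting set has size at most $\lfloor n/2\rfloor+1$ (also not true in general), and an appeal to the greedy packing of Lemma~\ref{lem:sqrtupperdig}, which is tailored to the dual gap and only gives $O(\sqrt n)$-type bounds; none of these substitutes for the averaging step, so the claimed upper bound is not established.
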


\begin{proof}
Recall that the vinewidth and thicket number of the clique $K_n$ equal $\lceil \frac{1}{2}n \rceil$.
We will prove that, for any coalition game $\cal{G}$ on the clique, the integrality gap is
at most $\frac{1}{4} n+1 \leq \frac12\tau(G)+1$.
Observe that every coalition $S$ over interaction graph $K_n$ induces a connected subgraph and is, thus, viable. 
So, in what follows, we need not verify the viability of any coalition.
Let $\cal{S}$ be the set of coalitions in the game $\cal{G}$ with value $1$.
 
 Let ${\bf x}$ be an optimal fractional solution to the primal.
 Thus, for any coalition $S\in \cal{S}$, we have $\sum_{i \in S} x_i \ge 1$.
 So $\kappa^f(\cal{G})=\sum_{i \in V} x_i$.
Now write $\kappa(\cal{G})$ as $(1-\alpha)n+1$ by choosing the appropriate $\alpha>0$
(if there does not exist such an $\alpha$, the result is proved). Thus, there is no hitting set for  $\cal{S}$
with cardinality at most $(1-\alpha)n$.
We claim that $\kappa^f(\cal{G}) \geq 1/\alpha$.
To see, this take any set $X$ of $(1-\alpha)n$ agents. Since $X$ is not a hitting set, 
there exists a coalition $S_X\in \cal{S}$ contained in its complement $\bar{X}=V\setminus X$. 
Thus $\sum_{i \in \bar{X}} x_i \geq 1$. 
This holds for {\em every} set of agents $X$ of cardinality $(1-\alpha) n$ and there are 
${n \choose (1-\alpha) n}={n \choose \alpha n}$ such sets. Each agent appears in a $\alpha$-fraction of the complements 
of these sets. Consequently, $\alpha\cdot {n \choose \alpha n}\cdot \sum_{i \in V} x_i \ge {n \choose \alpha n}$. Therefore, 
$\kappa^f(\cal{G}) = \sum_{i \in V} x_i \ge 1/\alpha$ as claimed.
As $\kappa(\cal{G})= (1-\alpha)n+1$, the primal integrality gap is at most
\[ \alpha(1-\alpha)n+\alpha \ \ \leq\ \  2\alpha(1-\alpha)\tau(K_n)+\alpha \ \ \leq\ \  2\alpha(1-\alpha)\tau(K_n)+1.\]
This is  maximized when $\alpha=\frac{1}{2}$. Thus
$$\frac{\kappa(G)}{\kappa^f(G)} \ \ \le_{\forall}\ \  \frac12 \tau(G) +1$$

This upper bound is tight, we have a matching lower bound
$$\frac12 \tau(G)\ \ \le_{\exists}\ \  \frac{\kappa(G)}{\kappa^f(G)} $$
To see this, consider the game where any coalition of size $\lceil \frac{n}{2} \rceil$ has value $1$ and
any other coalition has value $0$.
Then we have $\kappa^f \leq 2$ because allocating $\frac{2}{n}$ to each agent, gives a feasible fractional
solution. On the other hand, $\kappa = \lfloor \frac{n}{2} \rfloor + 1$, otherwise some coalition will block the
allocation.
\end{proof}

There are, however, graphs for which the upper bound of $\tau(G)$ on the primal integrality gap
is obtained. Specifically, this bound it obtained for power graphs of a path.
Let $P$ be a path on $n$ vertices. The \emph{$r$-th power of $P$}, denoted by $G=P^r$, is formed by 
connecting any pair of vertices whose distance is at most $r$ in $P$.

\begin{theorem}\label{thm:gapcliquepath}
Let $G=P^r$ be the $r$-th power of a path on $n$ vertices where $n \geq 3r$. Then $\tau(G)=r$ and,
provided $n\ge k^2(r+1)$, the primal integrality gap satisfies
$$\left(1-\frac{2}{k}\right)\cdot \tau(G) \ \ \leq_\exists\ \  \frac{\kappa(\cal{G})}{\kappa^f(\cal{G})} $$
\end{theorem}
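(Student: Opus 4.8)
The plan is to establish the two assertions separately: that $\tau(P^r)=r$, and that some game on $G=P^r$ has primal integrality gap at least $(1-\tfrac2k)\tau(G)$. For $\tau(P^r)\le r$ I would exhibit, via Theorem~\ref{thm:minmax}, a vine decomposition of width $r$: take the vine tree to be a path whose $j$-th node carries the block $B_j=\{(j-1)r+1,\dots,jr\}$ of $r$ consecutive vertices (the last block possibly shorter). Any two vertices in non-consecutive blocks are more than $r$ apart on $P$, so every edge of $P^r$ joins vertices in the same or in consecutive blocks; this makes the decomposition valid, and its width is $r$. For $\tau(P^r)\ge r$ I would display a thicket of hitting size $r$, for instance the sets $H_{q,s}=B_q\cup L_s$, where $L_s=\{s,s+r,s+2r,\dots\}$ is the $s$-th residue class mod $r$: each is connected, any two intersect, and a hitting set must meet every residue class (cost $\ge r$) or every block --- so this gives $\tau(P^r)\ge r$ whenever $n$ is large enough that every block meets every residue class, in particular in the range $n\ge r^2$ that matters for the second assertion. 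For the residual range $3r\le n<r^2$ one argues directly about width-$(r-1)$ vine decompositions, reducing to the case where some leaf has a vertex private to it and then bounding the combined size of that leaf's label and its neighbour's label by the size of that vertex's $P^r$-neighbourhood.

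The core of the theorem is the game. On $G=P^r$, set $v(S)=1$ exactly when $|V\setminus S|\le r-1$, and $v(S)=0$ otherwise. When $|R|<r$, $R$ contains no block of $r$ consecutive integers, so $P^r-R$ is connected; hence every set of value $1$ is a legitimate coalition (and one checks $v$ is superadditive). Fractionally, giving each agent $x_i=\tfrac{1}{n-r+1}$ is feasible, since a value-$1$ coalition has at least $n-r+1$ members and thus receives at least $1$; so $\kappa^f(\mathcal{G})\le\tfrac{n}{n-r+1}$. Integrally, a vertex set $Y$ hits every value-$1$ coalition iff $Y\not\subseteq R$ for all $R$ with $|R|\le r-1$, i.e.\ iff $|Y|\ge r$; so $\kappa(\mathcal{G})=r$. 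Hence
$$\frac{\kappa(\mathcal{G})}{\kappa^f(\mathcal{G})}\ \ge\ \frac{r(n-r+1)}{n}\ =\ r\Bigl(1-\frac{r-1}{n}\Bigr),$$
and since $n\ge k^2(r+1)$ forces $\tfrac{r-1}{n}\le\tfrac{1}{k^2}\le\tfrac{2}{k}$, the gap is at least $(1-\tfrac2k)\,r=(1-\tfrac2k)\,\tau(G)$.

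The game and this computation are short; the step I expect to cause real friction is the lower bound $\tau(P^r)\ge r$ across the full stated range $n\ge 3r$. The generic estimate $\nu\ge(\omega+1)/2$ coming from Theorem~\ref{thm:approx} only yields $\nu(P^r)\ge(r+1)/2$, and the clean residue-class thicket only certifies $\min\{r,\lfloor n/r\rfloor\}$, so closing the gap for small $n$ genuinely requires a combinatorial argument specific to $P^r$ (along the lines of the private-vertex analysis of vine decompositions sketched above, where the delicate point is guaranteeing that the private vertex one lands on is central enough to have close to $2r$ neighbours). It is worth noting that the construction actually delivers the stronger bound $r\bigl(1-(r-1)/n\bigr)$, so the precise hypothesis $n\ge k^2(r+1)$ is not essential --- any $n$ of order $kr$ already suffices; and, en route, $\kappa(\mathcal{G})/\kappa^f(\mathcal{G})\le\tau(G)$ recovers $\tau(P^r)\ge r$ once $n$ is large, giving a second route to $\tau(P^r)=r$ in that regime.
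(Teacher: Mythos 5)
Your game construction for the integrality-gap half is correct and genuinely different from the paper's: the paper sets $v(S)=1$ for connected coalitions of size at least $\frac{n}{k}$ and shows any hitting set must contain $r$ agents in each window of length $\lceil \frac{n}{k}\rceil+r$, giving $\kappa(\cal{G})\ge (k-2)r$ against $\kappa^f(\cal{G})\le k$; your ``co-small'' game ($v(S)=1$ iff $|V\setminus S|\le r-1$) gives $\kappa(\cal{G})=r$ and $\kappa^f(\cal{G})\le \frac{n}{n-r+1}$ directly, and viability holds because removing fewer than $r$ vertices cannot disconnect $P^r$. Your resulting bound $r\bigl(1-\frac{r-1}{n}\bigr)$ is in fact stronger than what the paper extracts, and, as you observe, only the upper bound $\tau(P^r)\le r$ (your block vine decomposition, which is the same as the paper's) is needed to restate it in terms of $\tau(G)$. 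So the second assertion of the theorem is fine in your write-up.

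The genuine gap is the first assertion, $\tau(P^r)=r$ for every $n\ge 3r$. Your thicket $H_{q,s}=B_q\cup L_s$ only certifies hitting size $\min\{r,\lfloor n/r\rfloor\}$, since a transversal of the full blocks is already a hitting set; so it gives nothing beyond $\lfloor n/r\rfloor$ in the range $3r\le n<r^2$, and the ``private vertex'' analysis of width-$(r-1)$ vine decompositions you invoke for that range is only a sketch, not a proof (in particular you must rule out decompositions in which no leaf owns a private vertex with close to $2r$ neighbours, and you give no argument for that). The paper closes exactly this range with a thicket supported on just the first $3r$ vertices: with $A=\{1,\dots,r\}$, $B=\{r+1,\dots,2r\}$, $C=\{2r+1,\dots,3r\}$, it takes all connected sets meeting each of $A,B,C$ and containing more than $r/2$ vertices of at least two of them. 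If $X$ has fewer than $r$ vertices, then $X$ contains no $r$ consecutive path vertices, so the complement of $X$ inside these $3r$ vertices is connected, meets each of $A,B,C$, and contains more than half of two of them; hence it lies in the thicket and is disjoint from $X$, so $X$ is not a hitting set. Substituting this (or any argument valid down to $n=3r$) for your mod-$r$ construction is what is needed to complete the proof of $\tau(G)=r$ as stated.
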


Theorem~\ref{thm:gapcliquepath} ensures that the constant $1$ in the upper bound of Theorem~\ref{thm:2} cannot be improved.
However, as observed in Lemma~\ref{lem:clique}, this upper bound cannot be reached for every graph as there are
graphs where the primal integrality gap is upper bounded by around half the vinewidth.

\begin{proof}
To show that $P^r$ has thicket number $\tau(P^r)=r$,
let us first prove that $\tau(P^r)\leq r$. We create a vine decomposition $T=(N,L)$ as follows. 
Label the vertices in order along the path as $\{1,2,\dots, n\}$.
Then we let $T$ be a path on $\lceil \frac{n}{r} \rceil$ nodes, where each node corresponds to a set of vertices of the form 
$\{qr+1, qr+2,\dots, qr+r=(q+1)r\}$, where $0\le q \le \lceil \frac{n}{r} \rceil -1$. Thus $|T_v|=1$ and so $T_v$ is trivially connected, 
for each vertex $v\in P^r$. Moreover, for each edge $(u,v)$ in $G=P^r$, either $u$ and $v$ are in the same node of $T$ or
are in adjacent nodes. So this is a vine decomposition and $\tau(P^r)=\nu(P^r)\le r$. 

Let us now show that $\tau(P^r) \geq r$. Consider the restriction of $P^r$ to the first $3r$ vertices,
and construct a thicket $\cal{H}$ as follows. Let $A=\{1,\ldots,r\}$, $B=\{r+1,\ldots,2r\}$ and $C=\{2r+1,\ldots,3r\}$.
We put a set $H$ in $\cal{H}$ if it induces a connected subgraph, it contains 
at least one vertex from each of $A, B$ and $C$, and it contains more than $r/2$ elements from at least 
two of $A, B, C$.
Note that any pair of sets in $\cal{H}$ intersect.
Assume by contradiction that $X$ is a hitting of size less than $r$.
Since $|X| < r$, none of $A,B,C$ are completely contained in $X$. Furthermore,
$X$ contains less than half of the vertices from at least
two of these three sets. Thus the complement $\overline{X}$ of $X$ contains at least one vertex in all of $A,B,C$ 
and more than half of the vertices of two of these three sets.
Moreover $\overline{X}$ is connected since $X$ does not contain $r$ consecutive vertices of $P$. So $\overline{X} \in \mathcal{H}$, 
contradicting the fact that $X$ is a hitting set.

%

We now show the lower bound by constructing a coalition game $\cal{G}$ on $P^r$.
For a coalition $S\subseteq V$, we set $v(S)=1$ if $S$ is a connected subgraph of $P^r$ 
of cardinality at least $\frac{n}{k}$, and set $v(S)=0$ otherwise. We have  that $\kappa^f(\cal{G}) \le k$, since
allocating $\frac{k}{n}$ to each agent gives a feasible fractional solution.
We claim that $\kappa(\cal{G}) \geq r(k-1)$. 

Let $\mathcal{S}$ be the coalitions of value $1$.
  Now any hitting set $X$ for $\mathcal{S}$ must contain at least $r$ agents from amongst $\{1, \ldots, \lceil \frac{n}{k} \rceil + r\}$;
  otherwise, there is a connected subgraph of cardinality $\lceil \frac{n}{k} \rceil$ missed by $X$.
  Similarly, $X$ must contain $r$ agents amongst $\{\lceil \frac{n}{k} \rceil + r + 1, \ldots, 2 (\lceil \frac{n}{k} \rceil + r)\}$ and, in general, 
  $r$ agents
  amongst $\{a (\lceil \frac{n}{k} \rceil + r) + 1, \ldots, (a+1) (\lceil \frac{n}{k} \rceil + r)\}$ 
  for $0\le a\le \left\lfloor \frac{n}{\lceil \frac{n}{k}\rceil + r} \right\rfloor - 1$. 
  This implies that the number of agents in $X$ is at least
$$\left\lfloor \frac{n}{\lceil \frac{n}{k}\rceil + r} \right\rfloor \cdot r \ge (k-2)r$$
Here the inequality holds if we select $k$ such that $n\ge k^2(r+1)$. 
Hence, the primal integrality gap is 
at least $(1-\frac{2}{k})\cdot \tau(G)$.
\end{proof}

\ \\
 \noindent {\bf Acknowledgements.} The third author is grateful to Reshef Meir for introducing him to this problem.

\bibliographystyle{plain}

\end{document}